\newtheorem{theorem}{Theorem}
\newtheorem{proposition}{Proposition}
\newtheorem{lemma}{Lemma}
\newtheorem{corollary}{Corollary}
\newtheorem*{theorem*}{Theorem}
\theoremstyle{remark}
\newtheorem{remark}{Remark}
\theoremstyle{definition}
\newtheorem{definition}{Definition} 
\DeclareMathOperator{\holant}{Holant}
\DeclareMathOperator{\n}{\mathbb{N}}
\DeclareMathOperator{\pli}{\mathcal{PLI}}
\DeclareMathOperator{\plisimp}{\mathcal{PLI}^{\text{simp}}}
\DeclareMathOperator{\aut}{Aut}
\DeclareMathOperator{\fc}{\mathcal{F}}
\DeclareMathOperator{\gc}{\mathcal{G}}
\DeclareMathOperator{\vi}{\mathbf{i}}
\DeclareMathOperator{\vy}{\mathbf{y}}
\DeclareMathOperator{\vx}{\mathbf{x}}
\DeclareMathOperator{\vp}{\mathbf{p}}
\DeclareMathOperator{\vv}{\mathbf{v}}
\DeclareMathOperator{\jc}{\mathcal{J}}
\DeclareMathOperator{\gk}{\mathfrak{G}}
\DeclareMathOperator{\qk}{\mathfrak{Q}}
\DeclareMathOperator{\qb}{\mathbf{Q}}
\DeclareMathOperator{\e}{\mathbf{E}}
\DeclareMathOperator{\s}{\mathbf{S}}
\DeclareMathOperator{\f}{\mathbf{F}}
\DeclareMathOperator{\ff}{\mathbb{F}}
\DeclareMathOperator{\ii}{\mathbf{I}}
\renewcommand{\sb}{\text{S}}
\renewcommand{\k}{\mathbf{K}}
\DeclareMathOperator{\ext}{ext}
\DeclareMathOperator{\arity}{arity}
\newcommand{\tcwd}[1]{\langle #1 \rangle_{+, \circ, \otimes ,*}}
\newcommand{\tcwdn}[1]{\langle #1 \rangle_{\circ, \otimes ,*}}
\renewcommand{\c}{\mathbb{C}}
\renewcommand{\r}{\mathbb{R}}
\DeclareMathOperator{\eq}{\mathcal{EQ}}
\newcommand{\DEdge}[2]{
    \draw[thick] (#1) -- (#2) node[draw, fill=white, kite, kite vertex angles = 120, minimum size = 4pt, inner sep = 1pt, pos=0.15, sloped] {};
}
\title{Equality on all \#CSP Instances Yields Constraint Function Isomorphism via Interpolation and Intertwiners}
\author{Ben Young\footnote{Department of Computer Sciences, University of Wisconsin-Madison}
\\ \texttt{\href{mailto:benyoung@cs.wisc.edu}{benyoung@cs.wisc.edu}}}
\date{}
\begin{document}
\maketitle

\begin{abstract}
    A fundamental result in the study of graph homomorphisms is Lov\'asz's theorem
    \cite{lovasz_operations} that two graphs are isomorphic if and only if they admit the same number
    of homomorphisms from every graph.
    A line of work extending Lov\'asz's result to more general types of graphs was recently capped by
    Cai and Govorov \cite{homomorphism}, who showed that it holds for graphs with vertex and edge
    weights from an arbitrary field of characteristic 0. In this work, we generalize from graph
    homomorphism -- a special case of \#CSP with a single binary function -- to general \#CSP by showing
    that two sets $\fc$ and $\gc$ of arbitrary constraint functions are isomorphic if and only 
    if the partition function of any \#CSP instance is unchanged when we replace the functions in $\fc$ with
    those in $\gc$. We give two very different proofs of this result. First, we demonstrate the power
    of the simple Vandermonde interpolation technique used in \cite{homomorphism} by extending it
    to general \#CSP. Second, we give a proof using the intertwiners of the automorphism
    group of a constraint function set, a concept from the representation theory of compact groups.
    This proof is a generalization of a classical version of the recent proof of the Lov\'asz-type 
    result in \cite{planar} relating quantum isomorphism and homomorphisms from planar graphs.
\end{abstract}

\section{Introduction}
\label{sec:intro}

\paragraph{Graph homomorphisms.}
A homomorphism from graph $K$ to graph $X$ is an adjacency-preserving map from $V(K)$ to $V(X)$.
Since graph homomorphisms' introduction in \cite{lovasz_operations}, counting the number of homomorphisms
from $K$ to $X$ has emerged as a well-studied problem in theoretical computer science and combinatorics.
The number of homomorphisms from graph $K$ to $X$, denoted $\hom(K,X)$, 
can be computed as the evaluation of a 
\emph{partition function}, parameterized by $X$, on $K$ -- the sum over all maps 
$\phi: V(K) \to V(X)$ of the product $\prod_{(u,v) \in E(K)} (A_X)_{\phi(u),\phi(v)}$, where
$A_X$ is the adjacency matrix of $X$.
The partition function perspective leads to extensions of graph homomorphism to more general types of graphs, as well as
a natural view of graph homomorphism as a special case of counting constraint satisfaction problems, or \#CSP.

One such more general type of graph has a real weight assigned to each edge and a nonnegative real weight
assigned to each vertex. In the partition function formulation of graph homomorphism to such a graph $X$,
we simply use the weighted adjacency matrix $A_X$, and add factors for the vertex weights.
The problem of counting homomorphisms to such weighted graphs was studied in 
\cite{freedman_reflection_2006, lovasz, lovasz_contractors_2009}. These works prove their results by
studying \emph{graph algebras} of formal $\c$-combinations of $k$-labeled graphs (called \emph{quantum
graphs}, which we generalize in discussion above \autoref{thm:intertwinersigmatrix}), using the $k$-labeled graph product
extended by our \autoref{def:klabeled} below. 
In \cite{lovasz}, Lov\'asz 
extended to these weighted graphs his result, proved forty years prior in \cite{lovasz_operations},
that two graphs are isomorphic if and
only if they admit the same number of homomorphisms from every graph. Throughout this paper, we will
refer to such generalizations of Lov\'asz's original isomorphism theorem as ``Lov\'asz-type results''.

Still making use of quantum and $k$-labeled graphs, but applying
invariant theory and the Nullstellensatz from algebraic geometry,
Schrijver \cite{schrijver_graph_2009} studies homomorphisms to graphs with complex edge weights but
without vertex weights, and proves a Lov\'asz-type result for such complex-edge-weighted graphs.
Using similar proof methods, Regts \cite{regts} studies ``vertex-coloring models'' -- 
homomorphisms to graphs with arbitrary vertex
and edge weights, provided that no nonempty subset of vertex weights sums to zero.

Finally, Cai and Govorov \cite{homomorphism} prove a Lov\'asz-type result for graphs with any vertex and edge
weights from an arbitrary field $\ff$ of characteristic 0 (as discussed in \cite{homomorphism}, a similar,
slightly weaker result can also be obtained from the results of \cite{goodall_matroid_2016}, proved
using $k$-labeled graph algebras and matroid invariants), and show this is the most general
possible Lov\'asz-type result for graph homomorphism. Cai and Govorov obtain this generality, overcoming the
algebraic approaches' technical difficulties of vertex weights summing to 0, by applying a simple, direct
\emph{Vandermonde interpolation} technique, dependent only on no algebraic results aside from the fact
that a Vandermonde matrix with distinct roots is nonsingular. It is remarkable that such a simple tool
unifies all previous Lov\'asz-type results, and in \autoref{sec:interpolation} we give a further 
demonstration of its power by using it to extend Cai and Govorov's results to \#CSP.

Another line of work studies homomorphisms \emph{from} restricted classes of graphs rather than
homomorphisms \emph{to} expanded classes as above, and
uses invariance of homomorphism counts from restricted classes of graphs to characterize 
relaxations of graph isomorphism. In \cite{dvorak_recognizing_2010}, using the techniques of
\cite{lovasz_contractors_2009}, Dvo{\v r}\'ak showed that
homomorphism count from 2-degenerate graphs suffices to determine a graph up to isomorphism, and that
homomorphism count from graphs of treewidth at most $k$ determines graphs up to their $k$-degree
refinements, but not up to isomorphism. Then in \cite{dell} it was shown that two graphs admit the same
number of homomorphisms from all graphs of treewidth at most $k$ if and only if they are
indistinguishable by the $k$-dimensional Weisfeiler-Leman algorithm.

Most notably, Man{\v c}inska and Roberson showed in \cite{planar} that two graphs are \emph{quantum
isomorphic} if and only if they admit the same number of homomorphisms from all planar graphs.
Quantum isomorphism is defined using \emph{quantum permutation groups}, and its characterization by 
planar graph homomorphisms is achieved using the \emph{intertwiner space} of the quantum automorphism
group of a graph, a quantum permutation group analogous to the graph's classical automorphism group.
A key component of the proof is a `quantum' version of Woronowicz's Tannaka-Krein duality
\cite{woronowicz_tannaka}, which implies that a quantum permutation group is uniquely determined by its
intertwiner space. A `classical' version of Tannaka-Krein duality (\autoref{thm:tannaka}) similarly
applies to the intertwiner space of the classical automorphism group of a graph, or, more generally, of a
set of \#CSP constraint functions. Using this, in \autoref{sec:quantum} we give a classical version of Man{\v c}inska and Roberson's proof, generalized
to sets of real-valued \#CSP functions -- the same result as proved via Vandermonde interpolation
in \autoref{sec:interpolation}, but restricted to $\mathbb{R}$ rather than general fields.
In an upcoming work, we also generalize Man{\v c}inska and Roberson's original `quantum' result
to \#CSP.

\paragraph{Counting complexity and \#CSP.} A \#CSP$(\fc)$ problem is parameterized by a set $\fc$ of 
$\ff$-valued constraint 
functions on one or more inputs from a finite domain $V(\fc)$. The problem input is a \#CSP instance, consisting of a
set of constraints, each applying a constraint function to a subset of variables. The output is the
value of the partition function, the sum over all variable assignments of the product of the constraint
evaluations. Letting $\fc = \{A_X\}$, $V(\fc) = V(X)$, the variable set be $V(K)$, and the constraint
set be $E(K)$ with each edge-constraint applying $A_X$ to its two endpoints, one can see from the
partition function formulation of graph homomorphism above that counting homomorphisms from
$K$ to $X$ is the special case of \#CSP$(\fc)$ on instance $K$ where $\fc$ contains a single binary (arity-2) 
constraint function $A_X$.

Counting graph homomorphisms is a central problem in counting complexity, both in its own
right and as a special case of \#CSP. Both settings have seen many significant dichotomy theorems
classifying the partition function as either tractable of \#P-hard to compute, depending on $X$ or $\fc$,
respectively. Graph homomorphism dichotomies were established for unweighted graphs in
\cite{dyer_complexity}, nonnegative-real-weighted graphs in \cite{bulatov_complexity_2005, 
cai_nonnegative}, real-weighted graphs in \cite{goldberg_complexity_2010}, and finally complex-weighted
graphs in \cite{cai2013graph}. For \#CSP, dichomies were established for sets of 0-1 valued
constraint functions in \cite{bulatov_2013, dyer_richerby}, nonnegative-real-valued constraint functions in \cite{cai-chen-lu}, and complex-valued constraint functions in \cite{cai-chen-complexity}.

Extending the notion of graph isomorphism, we say two constraint functions $F_1$ and $F_2$ of the same arity $n$
on the same domain $V(F)$ are isomorphic if there is a permutation $\sigma$ of $V(F)$ such that
$F_1(x_1,\ldots,x_n) = F_2(\sigma(x_1),\ldots,\sigma(x_n))$ for all $x_1,\ldots,x_n \in V(F)$.
Two sets of constraint functions $\fc$ and $\gc$ are isomorphic if there is a common 
isomorphism between each $F \in \fc$ and a corresponding $G \in \gc$. Some similar concepts exist:
in \cite{bohler2002equivalence,bohler2004complexity}, B{\"o}hler et al. study ``constraint isomorphism''
between Boolean \#CSP \emph{instances} (rather than constraint functions) that involves permuting
variables (rather than domain elements). One can also view an $n$-ary constraint function $F$ as a
tensor in $\ff^{V(F)^n}$; from this perspective the notion of tensor isomorphism in
\cite{grochow2021complexity} is a relaxation of constraint function isomorphism from permutations to invertible
linear transformations on each dimension.

\paragraph{Our results.}
Our main result is the following theorem, which to our knowledge is the first Lov\'asz-type result of
any kind for \#CSP.
\begin{theorem*}[\autoref{thm:mainresult}, informal]
    For field $\ff$ of characteristic 0, sets $\fc$ and $\gc$ of $\ff$-valued constraint functions are isomorphic if and only if the partition function
    of every \#CSP$(\fc)$ instance is preserved when we replace every constraint function in $\fc$ with
    the corresponding function in $\gc$.
\end{theorem*}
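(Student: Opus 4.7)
The backward (``only if'') direction is immediate: a common permutation $\sigma$ of the shared domain $V = V(\fc) = V(\gc)$ witnessing the isomorphism relabels each variable assignment bijectively while preserving every constraint's value, so every partition function is unchanged. The main work lies in the forward direction---extracting such a $\sigma$ from equality of all partition functions.

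The plan is to adapt the Vandermonde interpolation technique of Cai and Govorov \cite{homomorphism}. The central idea is to build \#CSP instances whose partition functions are polynomials in a formal parameter $\lambda$, so that comparing the polynomials on the $\fc$- and $\gc$-sides---which must agree as polynomials, by characteristic $0$---yields strong equalities, with Vandermonde inversion recovering individual coefficients. By suitable choice of gadget, one can couple a fixed tuple of variables to these $\lambda$-weighted instances so that the coefficients of the overall polynomial are \emph{pinned} partition function values, in which the tuple is forced to take specified pairwise distinct values of $V$. Applied to the trivial one-constraint instance consisting of a single $F \in \fc$ on distinct pinned variables, this yields an equality $F(x_1, \dots, x_n) = G(\sigma_F(x_1), \dots, \sigma_F(x_n))$ for each $F \in \fc$ and some permutation $\sigma_F$ matching $F$ to its corresponding $G \in \gc$, simply by reading off the pinned partition function values on both sides.

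The main obstacle will be producing a \emph{single} $\sigma$ common to all $F \in \fc$, since the above argument only determines $\sigma_F$ individually, and each is pinned down only up to the automorphisms of $F$. I would address this by considering joint pinned gadgets in which a common pinned variable is shared across constraints from several $F_i \in \fc$; the corresponding partition function is a product $\prod_i F_i(\dots)$ of function values coupled through the shared pinning, and the forced equality of this product with the $\gc$-side analogue constrains the individual $\sigma_{F_i}$ to agree on the shared domain elements. An inductive matching argument, in the spirit of Lov\'asz's original injective-homomorphism counting proof, then assembles these local consistencies into the global common isomorphism $\sigma$.
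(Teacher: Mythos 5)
Your overall direction (Vandermonde interpolation \`a la Cai--Govorov) matches the paper's first proof, and your easy direction is fine, but the sketch breaks down at exactly the points where the real work lies. First, there is no formal parameter $\lambda$ to vary: in a \#CSP$(\fc)$ instance the only available ``knob'' is the multiplicity of constraints, so the interpolation runs over exponent vectors attached to a free variable, and it can never force a variable to take ``specified pairwise distinct values.'' What Vandermonde inversion actually separates are classes of assignments with identical value profiles, i.e.\ twin classes of domain elements ($i$ and $i'$ are twins when $F_j(x_1^{r-1},i,x_r^{n_j-1})=F_j(x_1^{r-1},i',x_r^{n_j-1})$ for all $(j,\vx,r)\in\jc(\fc)$). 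If $\fc$ has twins, your ``pinning'' step fails outright, and removing twins by contraction forces you out of the unweighted setting: the contracted functions must carry domain weights (the twin-class sizes), so the whole argument has to be proved for domain-weighted, $k$-labeled instances (this is the paper's \autoref{thm:nowellbalanced} and \autoref{cor:twins}), and only the preserved weights let you lift an isomorphism of the contracted sets back to the original ones. You also need the $k$-labeled product and the multiplicativity of pinned partition functions to pass from equality on ordinary (unlabeled) instances to equality of pinned values; ``reading off'' a single-constraint instance with pinned distinct variables is not an operation the hypothesis gives you.

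Second, your treatment of the common-$\sigma$ issue runs in the wrong direction. The paper never produces per-function permutations $\sigma_F$ and then patches them together: the interpolation is performed jointly against the entire set $\fc$ --- the exponent index set is $\jc(\fc)$, ranging over all functions and all argument positions at once --- so the equivalence classes are joint profiles, and a single $\sigma$, simultaneously valid for every $F_j$ and matching the domain weights, falls out of one application of the Vandermonde corollary; the equalities $F(\vi)=G(\sigma(\vi))$ and $\psi=\sigma\circ\varphi$ are then obtained from two further instance families, again using joint profiles for distinctness. Your alternative --- determine each $\sigma_F$ up to $\aut(F)$ and then run an ``inductive matching'' over shared pinned variables --- is not justified: an equality of products $\prod_i F_i(\cdots)=\prod_i G_i(\cdots)$ coupled through a shared variable does not force the individual $\sigma_{F_i}$ to agree (over a characteristic-$0$ field the factors can compensate multiplicatively), and no injective-homomorphism or inclusion--exclusion argument in Lov\'asz's original style is available for arbitrary field-valued functions. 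To become a proof, the proposal needs to be reorganized around joint-profile interpolation over $\jc(\fc)$, carried out in the twin-contracted, domain-weighted, $k$-labeled setting.
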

We prove \autoref{thm:mainresult} in the style of \cite{homomorphism} (Vandermonde interpolation) in
\autoref{sec:interpolation} and in the style of \cite{planar} (intertwiner spaces) in \autoref{sec:quantum}.
The former actually proves a more general result
(\autoref{thm:nowellbalanced}) applying to $k$-labeled \#CSP instances and constraint function sets
with domain/vertex weights.
By the above discussion, one can see that the Lov\'asz-type result, proved in
\cite{homomorphism}, that $\ff$-weighted graphs $X$ and $Y$ are isomorphic if and only if
$\hom(K,X) = \hom(K,Y)$ for all graphs $K$, is the special case of \autoref{thm:mainresult} where
$\fc = \{A_X\}$ and $\gc = \{A_Y\}$.
We carry out the latter proof in the Holant framework from counting complexity. While 
\cite{planar} does not explicitly make use of the Holant framework, we find it a very natural setting. 
Roughly, the main idea is to express the intertwiner space of $\aut(\fc)$ as the span of the \emph{signature matrices}
of Holant \emph{gadgets}, which one can view as \#CSP instances with free/input variables, via a
decomposition (\autoref{thm:generategk}) of any gadget into fundamental `building block' gadgets
and an analogous characterization (\autoref{thm:chassaniol}) of the intertwiner space of $\aut(\fc)$.
While the version of \autoref{thm:mainresult} in \autoref{sec:quantum} is restricted to constraint functions over $\r$, rather than 
over general fields as in \autoref{sec:interpolation}, we believe that the combinatorial reasoning
in \autoref{sec:quantum} is more intuitive than the interpolation technique
of \autoref{sec:interpolation} and \cite{homomorphism}, as well as the algebraic proofs of 
\cite{freedman_reflection_2006, lovasz, lovasz_contractors_2009, schrijver_graph_2009, goodall_matroid_2016} discussed above.
The intertwiner proof also demonstrates the surprisingly natural application of the powerful 
representation theoretic tools of intertwiner spaces and Tannaka-Krein duality to \#CSP and Holant
theory. We hope it inspires further applications of representation theory to theoretical computer
science.

\section{Preliminaries}
\label{sec:preliminaries}

For notational brevity, following 
\cite{dudek} and others working with $n$-ary structures,
write $x_i^j$ to mean
$(x_i, \ldots, x_j)$ if $j \geq i$, and the empty list if $i > j$.
When the index range is clear, we simply write $\vx = (x_1, \ldots, x_r)$.
For any $q \in \mathbb{N}$, write $[q] = \{1,2,\ldots,q\}$ and $[0,q) = \{0,1,\ldots,q-1\}$.
For sets $A$ and $B$, $A^B$ denotes the set of functions from $B$ to $A$. For a set $B$ with an implicit
linear order and $a_b \in A$ for every $b \in B$, $(a_b)_{b \in B}$ denotes a tuple of elements of $A$ indexed and ordered by $B$. We will
view $(a_b)_{b \in B}$ as an element of $A^B$, and will abbreviate
it as simply $(a_b)$ if the index set $B$ is clear from context.
Let $\sb_q$ be the symmetric group of permutations on $[q]$.
Throughout, let $\ff$ be a field of characteristic 0.

\paragraph{Counting Constraint Satisfaction Problems.}
Any function $F: [q]^{n_F} \to \ff$ on $n_F \geq 1$ variables taking values in $[q]$ is a
\emph{constraint function} with \emph{domain} $[q]$ and \emph{arity} $n_F$. 
When $n_F = 2$, one can view $F$ as a $q \times q$ matrix with entries in $\ff$, the adjacency matrix
of an \emph{$\ff$-weighted graph}~\cite{homomorphism}.
Denote sets of constraint functions by calligraphic letters such as $\fc$ and $\gc$.
It is assumed that all constraint functions in a set $\fc$ have the same domain, denoted by $V(\fc)$
($V$ stands for `vertices', terminology inherited from the weighted graph special case) 
and that all constraint function sets are finite.

\begin{definition}[\#CSP, $Z_{\fc}$]
A \emph{\#CSP problem} \#CSP$(\fc)$ is parameterized by a set
$\fc$ of constraint functions.
A \emph{\#CSP$(\fc)$ instance} $K = (V,C)$ is defined by a set $V$ of variables and a
multiset $C$ of \emph{constraints}.
Each constraint $(F, v_{i_1},\ldots,v_{i_{n_F}})$ consists of a constraint
function $F \in \fc$ and
an ordered tuple of variables to which $F$ is applied. 

The \emph{partition function} $Z_{\fc}$, on input \#CSP$(\fc)$ instance $K = (V,C)$, outputs
\[
    Z_{\fc}(K) = \sum_{\phi: V \to V(\fc)} 
    \prod_{(F, v_{i_1},\ldots,v_{i_{n_F}}) \in C} 
    F(\phi(v_{i_1}),\ldots,\phi(v_{i_{n_F}})).
\]
\end{definition}

\begin{definition}[Compatible constraint function sets, $K_{\fc\to\gc}$]
Constraint function sets 
$\fc = \{F_j\}_{j\in [t_f]}$, $\gc = \{G_j\}_{j\in [t_g]}$ are \emph{compatible} if
$t_f = t_g = t$ and
$F_j: [q_f]^{n_j} \to \ff$ and $G_j: [q_g]^{n_j} \to \ff$
for all $j \in [t]$ (in other words, equal-indexed constraint functions have the same arity). 
We call $F_j$ and $G_j$ \emph{corresponding} constraint functions.

For compatible constraint function sets $\fc$ and $\gc$ and
any \#CSP$(\fc)$ instance $K$, define a \#CSP$(\gc)$ instance
$K_{\fc\to\gc}$ by replacing every constraint in $K$ with
the corresponding constraint
function in $\gc$ applied to the same variable tuple.
\end{definition}

More generally, we may define a domain-weighted \#CSP problem. 
\begin{definition}[\#CSP$(\fc, \alpha)$, $Z_{\fc, \alpha}$]
\label{def:csp}
The problem \#CSP$(\fc, \alpha)$ is parameterized by
a set $\fc$ of constraint functions with domain $V(\fc) = [q]$, and a vector of \emph{domain weights} 
$\alpha \in (\ff \setminus \{0\})^q$. The partition function
$Z_{\fc, \alpha}$, defined on \#CSP$(\fc)$ instances $K = (V,C)$ as
above, is
\[
    Z_{\fc,\alpha}(K) = 
    \sum_{\phi: V \to [q]} 
    \prod_{v \in V} \alpha_{\phi(v)}
    \prod_{(F, v_{i_1},\ldots,v_{i_{n_F}}) \in C} 
    F(\phi(v_{i_1}),\ldots,\phi(v_{i_{n_F}}))
\]
\end{definition}
In particular, $Z_{\fc,\mathbf{1}} = Z_{\fc}$, where $\mathbf{1}$ is the all-ones vector.
We use ``\#CSP$(\fc)$ instance'' and ``\#CSP$(\fc,\alpha)$ instance''
interchangably, since a \#CSP$(\fc,\alpha)$ instance does not depend on the domain weights -- it is
identical to a \#CSP$(\fc)$ instance.

\begin{definition}[$k$-labeled \#CSP instance (product), {$\pli[\fc;k]$}, {$\plisimp[\fc;k]$}]
    \label{def:klabeled}
    A \#CSP instance $K = (V,C)$ is \emph{$k$-labeled} if $k$ variables are labeled by $1,2,\ldots,k$. A
    single variable cannot be labeled more than once.
    Define the \emph{product} $K_1K_2$ of two $k$-labeled \#CSP$(\fc)$ instances $K_1 = (V_1,C_1), K_2 = (V_2,C_2)$ as follows. 
    For $i \in [k]$, let $u_i \in V_1, v_i \in V_2$ be the variables labeled $i$ in $V_1$ and $V_2$,
    respectively. Define a new variable set $V$ by starting with $V_1 \sqcup V_2$, then for each $i \in [k]$
    merging $u_i$ and $v_i$ into a new variable $w_i$, and label $w_i$ by $i$.
    Then define a new constraint multiset $C$ by
    starting with $C_1 \cup C_2$ (multiset union), then for every $i \in [q]$ replacing every occurrence of $u_i$ or $v_i$ in each constraint with $w_i$. Then take $K_1K_2 = (V,C)$.

    Define $\pli[\fc;k]$ to be the set of $k$-labeled \#CSP$(\fc)$ instances.
    Let $U_k = (V, \varnothing) \in \pli[\fc;k]$, where $V$ contains exactly $k$ vertices labeled $1,\ldots,k$.
    The $k$-labeled instance product is commutative and associative and has identity $U_k$, so 
    $\pli[\fc;k]$ forms a commutative monoid under this product. Let $\plisimp[\fc;k]$ denote the submonoid
    of $\pli[\fc;k]$ consisting of \emph{simple} instances -- those where the variables in any constraint
    $c \in C$ are distinct, the multiplicity of every constraint in $C$ is 1 up to permutation of the
    order of its variables, and no constraint contains only labeled variables.
\end{definition}
Observe that, for $K_1,K_2 \in \pli[\fc;k]$, $(K_1)_{\fc\to\gc}(K_2)_{\fc\to\gc} = (K_1K_2)_{\fc\to\gc} \in \pli[\gc;k]$.

\begin{definition}[$Z_{\fc, \alpha}^{\psi}$]
For $K = (V,C) \in \pli[\fc;k]$ and a map $\psi: [k] \to [q]$ fixing, or \emph{pinning}, the values of the labeled
variables, define
\[
    Z_{\fc, \alpha}^{\psi}(K) = \sum_{\phi: V \to [q] \text{ extends } \psi} 
    \frac{\alpha_{\phi}}{\alpha_{\psi}} 
    \prod_{(F, v_{i_1},\ldots,v_{i_{n_F}}) \in C} 
    F(\phi(v_{i_1}),\ldots,\phi(v_{i_{n_F}})),
\]
where
\[
    \alpha_{\phi} = \prod_{v \in V} \alpha_{\phi(v)}
    \text{ and }
    \alpha_{\psi} = \prod_{i \in [k]} \alpha_{\psi(i)},
\]
and $\phi$ extends $\psi$ means $\phi$ assigns value $\psi(i)$ to the variable labeled $i$. Then
we have
\[
    Z_{H, \alpha}(K) = \sum_{\psi: [k] \to [q]} \alpha_{\psi} Z_{H,\alpha}^{\psi}(K).
\]
\end{definition}

The $k$-labeled instance product $K_1K_2$ merges the labeled variables, and the unlabeled variables of
$K_1$ and $K_2$ both still appear in constraints from $K_1$ and $K_2$ with the combined 
labeled variables. The unlabeled
variables of $K_1$ take values independently of the unlabeled variables of $K_2$ (i.e.
they appear in no constraints with each other). Hence
\begin{equation}
    \label{eq:mult}
    Z_{\fc, \alpha}^{\psi} (K_1K_2) = Z_{\fc, \alpha}^{\psi} (K_1)Z_{\fc, \alpha}^{\psi} (K_2).
\end{equation}

For fixed $\fc = \{F_1,\ldots,F_t\}$ with common domain $[q]$, let 
\begin{equation}
    \label{eq:jc}
    \jc(\fc) = \{(j, \vx, r) \mid j \in [t], \vx \in [q]^{n_j-1}, r \in [n_j]\}
\end{equation}
(recall that $n_j$ is the arity of $F_j \in \fc$). 
If $n_j = 1$ ($F_j$ is unary), then say $[q]^{n_j-1} = [q]^0 = \{()\}$ (the set containing the
empty tuple), so $\vx \in [q]^{n_j - 1}$ means $\vx = ()$.
$\mathcal{J}(\fc)$ represents all `configurations'
in which we may fill in the remaining arguments of an application a function in $\fc$ when given a
single distinguished argument. Note that the length of $\vx$ and the domain of $r$ both depend on $j$
(the choice of $F_j \in \fc$).
Domain elements $i,i' \in [q]$ are \emph{twins} if 
\[
    F_j(x_1^{r-1},i,x_r^{n_j-1}) = F_j(x_1^{r-1},i',x_r^{n_j-1}) \text{ for every } 
    (j, \vx, r) \in \jc(\fc).
\]
If $n_j = 1$ and $\vx = ()$, then $F_j(x_1^{r-1},i,x_r^{n_j-1}) = F_j(i)$.
If every $F \in \fc$ is \emph{symmetric}, meaning $F$ is invariant
under permutations of the order of its inputs, then say $\fc$ is \emph{symmetric}, and $i,i' \in [q]$ are
twins if $F_j(i, \vx) = F_j(i', \vx)$ for every $j \in [t]$ and $\vx \in [q]^{n_j-1}$,
where we abbreviate $F_j(i, \vx) = F_j(i, x_1^{n_j-1})$. If $n_j = 1$ and $\vx = ()$, then
$F_j(i,\vx) = F_j(i)$.
$\fc$ is \emph{twin-free} if no two domain elements are twins. Equivalently, $\fc$ is twin-free iff
the tuples
\[
    \left(F_j(x_1^{r-1},i,x_r^{n_j-1})\right)_{(j,\vx,r) \in \jc(\fc)}
\]
are pairwise distinct for $i \in [q]$.
If $\fc$ is symmetric,
then $\fc$ is twin free iff the tuples 
$\left(F_j(i,\vx)\right)_{j \in [t], \vx \in [q]^{n_j-1}}$ are pairwise distinct for $i \in [q]$.

For any $\fc$ and $\alpha$, let $I_1,\ldots,I_s$ be the partition of $[q]$ under
the twin relation. Define the twin-contracted constraint function set $\widetilde{\fc}$ with
domain $[s]$ and, for each $F \in \fc$ and $\vy \in [s]^{n_F}$, $\widetilde{F}(\vy) := F(\vx)$ for arbitrary $x_i \in I_{y_i}$, $i \in [n_F]$.
Define the twin-contracted domain weights $\widetilde{\alpha}$ by
$\widetilde{\alpha}_{\ell} = \sum_{j \in I_{\ell}} \alpha_j$, $\ell \in [s]$. $\widetilde{\fc}$ is now twin-free and
\[
    Z_{\widetilde{\fc},\widetilde{\alpha}}(K_{\fc \to \widetilde{\fc}}) = Z_{\fc, \alpha}(K)
\]
for every \#CSP$(\fc)$ instance $K$.

\begin{definition}[$\cong$, $\aut(\fc,\alpha)$]
    \label{def:iso}
For $(F,\alpha)$ and $(G,\beta)$ with domain $[q]$ and arity $n$, a permutation
$\sigma \in \sb_q$ is a \emph{domain-weighted isomorphism} from $(F,\alpha)$ to $(G,\beta)$ if
$F(\vx) = G(\sigma(\vx))$ for all $\vx \in [q]^n$ (where $\sigma(\vx) = (\sigma(x_1),\ldots,\sigma(x_n))$) and
$\alpha_i = \beta_{\sigma(i)}$ for all $i \in [q]$. 

For compatible constraint function sets $\fc$ and $\gc$ of cardinality $t$ and common domain $[q]$,
$(\fc, \alpha)$ and $(\gc, \beta)$ are
isomorphic ($(\fc, \alpha) \cong (\gc, \beta)$) if there is a $\sigma \in S_q$
which is an isomorphism between $(F_j, \alpha)$ and $(G_j, \beta)$ for all $j \in [t]$.

$\aut(\fc,\alpha)$ is the set of all domain-weighted
isomorphisms from $(\fc,\alpha)$ to itself. 
\end{definition}
We emphasize that every corresponding pair of functions
in $\fc$ and $\gc$ must be isomorphic via the same $\sigma$.
Since a domain-weighted isomorphism is just a relabeling of the domain elements,
if $\varphi: [k] \to [q]$ and $\psi: [k] \to [q]$ satisfy $\psi = \sigma \circ \varphi$ for some
domain-weighted isomorphism $\sigma$ from $(\fc,\alpha)$ to $(\gc,\beta)$, then
$Z_{\fc,\alpha}^{\varphi}(K) = Z_{\gc,\beta}^{\psi}(K_{\fc\to\gc})$ for every $K \in \pli[\fc;k]$.
In this work, we aim to prove the converse of this fact.

Let $X$ be a $\ff$-weighted graph with adjacency matrix $A_X \in \ff^{q \times q}$ and vertex weights
$\alpha \in \ff^q$.
To compute $\text{hom}(K,X)$,
construct a \#CSP$(\{A_X\},\alpha)$ instance $K'$ with variable set $V(K)$ and each edge of $K$ is a constraint
applying function $F = A_X: [q]^2 \to \ff$ to the edge's two endpoints.
The variables take values in $V(X)$, so a variable assignment $\phi$ is a mapping of $K$'s vertices to 
$X$'s. Hence $Z_{A_X, \alpha}(K') = \text{hom}(G,K)$.
For $\fc = \{F\}, \gc = \{G\}$ for binary $F$ and $G$, the constructions in Definitions 
\ref{def:csp}-\ref{def:iso} are equivalent to the corresponding special cases in \cite{homomorphism}.
In particular, $k$-labeled \#CSP instances are a generalization of $k$-labeled graphs.

We now have the notation to state our main theorem.
\begin{theorem}
    \label{thm:mainresult}
    Let $\ff$ be a field of characteristic 0, and
    let $\fc$ and $\gc$ be compatible $\ff$-valued constraint function sets. 
    Then $\fc \cong \gc$ if and only if
    $Z_{\fc}(K) = Z_{\gc}(K_{\fc\to\gc})$
    for every \#CSP$(\fc)$ instance $K$.
\end{theorem}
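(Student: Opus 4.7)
The forward direction is immediate: if $\sigma \in \sb_q$ is a common isomorphism from each $F_j \in \fc$ to $G_j \in \gc$, then reindexing assignments by $\phi \mapsto \sigma \circ \phi$ gives a bijection on $[q]^V$ preserving each factor in the partition function sum. For the converse, my plan is to extend the Vandermonde interpolation technique of Cai and Govorov \cite{homomorphism} from graphs to general \#CSP. First I would reduce to the twin-free setting: the twin contraction of the preliminaries converts $\fc, \gc$ into twin-free $\widetilde{\fc}, \widetilde{\gc}$ with domain weights inherited from twin class sizes, so it is natural to prove the stronger domain-weighted \autoref{thm:nowellbalanced} directly and recover \autoref{thm:mainresult} as the uniform-weight special case. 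Any isomorphism of the twin contractions lifts to one of the originals because matched twin classes must have matching cardinalities, which will be forced by the weight equality.

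Assuming $\fc, \gc$ are twin-free with domain weights $\alpha, \beta$, the heart of the argument extracts, for each $i \in [q_f]$, a \emph{profile} of partition-function values that identifies $i$ uniquely. Fix any 1-labeled instance $K \in \pli[\fc;1]$; the multiplicativity identity \eqref{eq:mult} gives
\[
    Z_{\fc,\alpha}(K^N) = \sum_{i \in [q_f]} \alpha_i \bigl(Z_{\fc,\alpha}^{\psi_i}(K)\bigr)^N,
\]
where $\psi_i$ pins the single labeled variable to $i$, and symmetrically for $\gc$. Since these agree for every $N \geq 1$ by hypothesis, a Vandermonde / power-sum argument (valid in any characteristic-0 field) forces the weighted multiset equality of $\{(\alpha_i, Z_{\fc,\alpha}^{\psi_i}(K))\}_{i\in[q_f]}$ and $\{(\beta_j, Z_{\gc,\beta}^{\psi_j}(K_{\fc \to \gc}))\}_{j\in[q_g]}$. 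Iterating over a finite family $K_1,\ldots,K_m \in \pli[\fc;1]$ via the products $K_1^{N_1}\cdots K_m^{N_m}$ and a multivariate Vandermonde upgrade strengthens this to matching the full profile tuples $(Z_{\fc,\alpha}^{\psi_i}(K_l))_{l \in [m]}$ as weighted multisets. Twin-freeness lets me pick the $K_l$ rich enough that the $q_f$ profile tuples are pairwise distinct, producing a well-defined weight-preserving bijection $\sigma : [q_f] \to [q_g]$.

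It remains to verify that $\sigma$ is a genuine isomorphism, i.e. $F_j(\vx) = G_j(\sigma(\vx))$ for every $j \in [t]$ and $\vx \in [q_f]^{n_j}$. The natural route is to pass to $n_j$-labeled instances in $\pli[\fc;n_j]$: build a gadget with a single $F_j$-constraint on its $n_j$ labeled variables, whose pinned partition function at $\vx$ equals $F_j(\vx)$ up to a bookkeeping factor, and re-run the profile-matching argument at this higher level. I expect the main obstacle to be coordinating the 1-labeled construction of $\sigma$ with the higher $k$-labeled verifications: the bijection $\sigma$ fixed at the 1-labeled stage must already be consistent with every profile equality extracted at arity $n_j$. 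The way through is to choose the 1-labeled test family $\{K_l\}$ so that the profile at $i$ already encodes every constraint evaluation involving $i$ as an argument; twin-freeness then forces $\sigma$ to respect all $F_j$-values simultaneously, completing the proof.
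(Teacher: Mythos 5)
Your overall scaffolding (twin contraction to reduce to the twin-free weighted case, then Vandermonde interpolation over products of labeled instances) matches the paper's interpolation route, but the heart of your argument has a genuine gap: the claim that twin-freeness lets you choose 1-labeled test instances $K_1,\ldots,K_m$ whose profiles $\bigl(Z_{\fc,\alpha}^{\psi_i}(K_l)\bigr)_{l\in[m]}$ are pairwise distinct over $i\in[q_f]$ is false. Any nontrivial automorphism of $\fc$ collapses these profiles: if $\rho\in\aut(\fc,\alpha)$ and $\rho(i)=i'$, then $Z_{\fc,\alpha}^{\psi_i}(K)=Z_{\fc,\alpha}^{\psi_{i'}}(K)$ for \emph{every} 1-labeled $K$, since composing assignments with $\rho$ is a weight-preserving bijection. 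Twin-freeness does not exclude nontrivial automorphisms -- take $\fc=\{A_{C_5}\}$, the 5-cycle, which is twin-free and vertex-transitive, so all five domain elements have identical 1-labeled profiles. Consequently your multiset-matching only pairs up profile-\emph{classes} with equal total weight, not individual domain elements, so the bijection $\sigma$ is not defined; and your final repair ("choose the $K_l$ so that the profile at $i$ encodes every constraint evaluation involving $i$") is unachievable for the same reason: the profile values are automorphism-invariant aggregates, while the tuple of raw evaluations $F_j(x_1^{r-1},i,x_r^{n_j-1})$ is not. Re-running the argument at arity $n_j$ then yields only identities summed over products of profile-classes, which when classes are non-singletons is essentially the original hypothesis again rather than the entrywise equality $F(\vx)=G(\sigma(\vx))$ you need.

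The paper's proof avoids exactly this trap by pinning the \emph{other} argument slots of each constraint rather than only the "center" variable: in \autoref{lem:balanced} the Vandermonde index is the value of a free (unlabeled) variable $v$, and the base entries are raw constraint values $F_j(x_1^{r-1},i,x_r^{n_j-1})$ because the remaining arguments are labeled variables pinned by a \emph{well-balanced} $\varphi$ (with the unknown $\psi$-side values organized by the pigeonhole map $s$). Twin-freeness does make those raw profiles pairwise distinct, which is what your aggregated profiles cannot be. The price is that one must pass from the unlabeled hypothesis of \autoref{thm:mainresult} to the $k$-labeled, well-balanced setting, which is the content of \autoref{thm:nowellbalanced} (generalizing the extension argument of Cai--Govorov's Theorem 3.1) together with \eqref{eq:mult}; this bridging step is precisely what your outline omits, and without it the single-pin profile approach cannot be completed.
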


\paragraph{Vandermonde Interpolation.}
Next, we introduce the useful Vandermonde interpolation technique from \cite{homomorphism}, which is
essentially the only technique used to prove our main result. The basis for the technique is the
following simple lemma.

\begin{lemma}[{\cite[{Lemma 4.1}]{homomorphism}}]
    \label{lem:firstvandermonde}
    Let $n \geq 0$ and $a_i,x_i \in \ff$ for $1 \leq i \leq n$, and suppose $\sum_{i=1}^n a_i x_i^j = 0$
    for all $0 \leq j < n$. Then, for any function $f: \ff \to \ff$, we have 
    $\sum_{i=1}^n a_i f(x_i) = 0$.
\end{lemma}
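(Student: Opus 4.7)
The statement looks surprising at first because $f$ is an \emph{arbitrary} function (no polynomial or continuity assumption), yet finitely many polynomial-moment conditions suffice to kill $\sum_i a_i f(x_i)$. The trick must be that the polynomial-moment conditions are strong enough to annihilate the coefficients once we collapse repeated $x_i$'s, after which the value of $f$ on the distinct points becomes irrelevant.

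Concretely, I would proceed as follows. First, let $y_1, \ldots, y_m$ be the distinct values appearing in $\{x_1, \ldots, x_n\}$, and for each $k \in [m]$ define the aggregated coefficient
\[
    b_k \;=\; \sum_{i : x_i = y_k} a_i.
\]
Then for every $j \geq 0$ we have $\sum_{i=1}^n a_i x_i^j = \sum_{k=1}^m b_k y_k^j$. Applying the hypothesis for $j = 0, 1, \ldots, m-1$ (which is allowed since $m \leq n$), we get the linear system
\[
    \sum_{k=1}^m b_k y_k^j \;=\; 0, \qquad 0 \leq j < m,
\]
whose coefficient matrix is the $m \times m$ Vandermonde matrix in the pairwise distinct entries $y_1, \ldots, y_m$, hence invertible. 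Therefore $b_k = 0$ for every $k \in [m]$.

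Now the conclusion is immediate, and here is where the arbitrariness of $f$ stops mattering: because $f$ is a function, $f(x_i)$ depends only on $x_i$, so we may regroup the sum by distinct values to obtain
\[
    \sum_{i=1}^n a_i f(x_i) \;=\; \sum_{k=1}^m \Bigl(\sum_{i : x_i = y_k} a_i\Bigr) f(y_k) \;=\; \sum_{k=1}^m b_k f(y_k) \;=\; 0.
\]
The only subtlety to handle is the edge case $n = 0$, where the sums are empty and the claim holds vacuously. I do not expect any real obstacle here; the whole argument is just the observation that the $n$ moment conditions not only constrain the $a_i$'s but actually force their aggregates over each preimage $\{i : x_i = y_k\}$ to vanish, which is exactly what an evaluation of a general $f$ at the points $x_i$ probes.
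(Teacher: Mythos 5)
Your proof is correct and is essentially the argument the paper relies on (it cites the lemma from \cite{homomorphism} without reproving it): collapse repeated points, observe the aggregated coefficients satisfy a Vandermonde system in the distinct values, and invoke nonsingularity of a Vandermonde matrix with distinct roots. Nothing further is needed.
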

The next result follows from iterated applications of \autoref{lem:firstvandermonde}.
\begin{corollary}[{\cite[{Corollary 4.2}]{homomorphism}}]
    Let $I$ and $J$ be finite index sets, and $a_{i}$, $b_{i,j} \in \ff$ for all $i \in I$,
    $j \in J$. Further, let $I = \bigsqcup_{\ell \in [s]} I_{\ell}$ be the partition of $I$ into
    equivalence classes defined by relation $\sim$, where $i \sim i'$ iff $b_{i,j} = b_{i',j}$ for all $j \in J$.
    If $\sum_{i \in I} a_i \prod_{j \in J} b_{i,j}^{p_j} = 0$, for all choices of
    $(p_j)_{j \in J}$ where each $0 \leq p_j < |I|$, then $\sum_{i \in I_{\ell}} a_i = 0$
    for every $\ell \in [s]$.
    \label{lem:vandermonde}
\end{corollary}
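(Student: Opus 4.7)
The plan is to induct on $|J|$, using \autoref{lem:firstvandermonde} to peel off one coordinate $j \in J$ at a time. The base case $|J| = 0$ is immediate: the empty product equals $1$, so the hypothesis collapses to $\sum_{i \in I} a_i = 0$, and the equivalence relation $\sim$ is trivial (all of $I$ is a single class), giving the conclusion.

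For the inductive step, fix some $j_0 \in J$ and let $J' = J \setminus \{j_0\}$. For any choice of exponents $(p_j)_{j \in J'}$ with $0 \le p_j < |I|$, set $\tilde a_i = a_i \prod_{j \in J'} b_{i,j}^{p_j}$. The hypothesis, specialized by letting $p_{j_0}$ range over $\{0,1,\ldots,|I|-1\}$, gives $\sum_{i \in I} \tilde a_i\, b_{i,j_0}^{\,p_{j_0}} = 0$ for all $0 \le p_{j_0} < |I|$. Then \autoref{lem:firstvandermonde} applies with $x_i = b_{i,j_0}$ and $n = |I|$, so $\sum_{i \in I} \tilde a_i\, f(b_{i,j_0}) = 0$ for every function $f : \ff \to \ff$. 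Taking $f$ to be the indicator of a single value $c \in \ff$ and writing $I^{(c)} = \{i \in I : b_{i,j_0} = c\}$, this becomes
\[
\sum_{i \in I^{(c)}} a_i \prod_{j \in J'} b_{i,j}^{p_j} = 0
\qquad\text{for every } (p_j)_{j \in J'} \text{ with } 0 \le p_j < |I|.
\]

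Since $|I^{(c)}| \le |I|$, these equations hold in particular for $0 \le p_j < |I^{(c)}|$, so the inductive hypothesis applies to the index set $I^{(c)}$ and the shrunken coordinate set $J'$. Observe that on $I^{(c)}$ the equivalence relation induced by $J'$ coincides with the restriction of $\sim$ to $I^{(c)}$, because all elements of $I^{(c)}$ already agree on coordinate $j_0$. The inductive hypothesis therefore yields $\sum_{i \in K} a_i = 0$ for every $\sim$-equivalence class $K \subseteq I^{(c)}$. Since each original class $I_\ell$ is contained in exactly one $I^{(c)}$ (namely the one indexed by $c = b_{i,j_0}$ for any $i \in I_\ell$), letting $c$ range over $\ff$ covers every class, completing the induction.

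The only subtlety I anticipate is the bookkeeping around equivalence relations: one must verify that the relation $\sim$ restricted to $I^{(c)}$ really is the same relation that the inductive hypothesis produces from the smaller index set $J'$, which is immediate but easy to conflate. The Vandermonde ingredient itself is entirely encapsulated in \autoref{lem:firstvandermonde}, and converting the ``for every $f$'' conclusion into an indicator-function statement is the crucial move that lets the induction go through with the natural equivalence classes rather than requiring distinctness of the $b_{i,j_0}$.
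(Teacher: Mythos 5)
Your proof is correct and follows exactly the route the paper intends: the paper states this corollary (citing \cite{homomorphism}) as following from ``iterated applications'' of \autoref{lem:firstvandermonde}, and your induction on $|J|$, peeling off one coordinate at a time and using indicator functions to restrict to the sets where $b_{i,j_0}$ is constant, is precisely that iteration spelled out. The bookkeeping you flag (that $\sim$ restricted to $I^{(c)}$ agrees with the relation induced by $J'$, and that each class $I_\ell$ lies in a single $I^{(c)}$) is handled correctly, so nothing is missing.
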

$I$ (and $J$) will often be the set of all $m$-tuples whose entries range over $[q]$, for some $m$
and $q$, and the product of $b_{ij}^{p_j}$s for a fixed tuple will have $i$ range over all the tuple's 
entries, rather than refer to the tuple itself.
In this case, we have the following corollary, used implicitly in \cite{homomorphism}.
\begin{corollary}
    \label{cor:vandermonde}
    Let $J$ be a finite index set and $q, m \geq 1$. Let $a_{\vi} \in \ff$ for $\vi \in [q]^m$ and $b_{i,j} \in \ff$ for 
    $i \in [q], j \in J$, and for $i, i' \in [q]$, say $i \sim i'$ iff $b_{i,j} = b_{i',j}$ for all
    $j \in J$. Let $[q]^m = \bigsqcup_{\ell \in [s]} I_{\ell}$ be a partition of
    $[q]^m$ into equivalence classes defined by relation $\approx$, where $\vi \approx \vi'$ if 
    $i_h \sim i'_h$ for all $h \in [m]$. If 
    \[
        \sum_{\vi \in [q]^m} a_{\vi} \prod_{j \in J, h \in [m]} b_{i_h, j}^{p_{h,j}} = 0
    \]
    for every choice of $(p_{h,j})_{h \in [m], j \in J}$ where each $0 \leq p_{h,j} < q$, then
    $\sum_{\vi \in I_{\ell}} a_{\vi} = 0$ for every $\ell \in [s]$.
\end{corollary}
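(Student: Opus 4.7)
The plan is to iteratively apply \autoref{lem:vandermonde} $m$ times, once per coordinate of $\vi$. The key observation is that for any fixed $h^* \in [m]$, the factor $\prod_{j \in J} b_{i_{h^*}, j}^{p_{h^*, j}}$ depends on $\vi$ only through the single coordinate $i_{h^*} \in [q]$, so we can regroup the outer sum by the value of $i_{h^*}$ and apply \autoref{lem:vandermonde} on an index set of size $q$ (rather than $q^m$, which would exceed what our hypothesis on the exponents supplies).

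Concretely, fix $h^* \in [m]$ and fix exponents $p_{h,j}$ for all $h \neq h^*$, and rewrite
\[
\sum_{\vi \in [q]^m} a_{\vi} \prod_{j \in J, h \in [m]} b_{i_h, j}^{p_{h,j}} = \sum_{k \in [q]} d_k \prod_{j \in J} b_{k, j}^{p_{h^*, j}},
\]
where $d_k = \sum_{\vi :\, i_{h^*} = k} a_{\vi} \prod_{h \neq h^*, j \in J} b_{i_h, j}^{p_{h,j}}$. By hypothesis the left side vanishes for every $(p_{h^*, j})_{j \in J}$ with each $0 \leq p_{h^*, j} < q$, which is exactly the hypothesis of \autoref{lem:vandermonde} with index set $[q]$ playing the role of $I$ and the original $J$. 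The equivalence relation it produces on $[q]$ is the twin relation $\sim$, so we conclude $\sum_{k \in \ell^*} d_k = 0$ for every twin class $\ell^* \subseteq [q]$. Unwrapping, this gives $\sum_{\vi :\, i_{h^*} \in \ell^*} a_{\vi} \prod_{h \neq h^*, j} b_{i_h, j}^{p_{h,j}} = 0$ for every such $\ell^*$ and every remaining choice of exponents.

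Iterating the same argument on each remaining coordinate, after $m$ applications we obtain $\sum_{\vi :\, i_h \in \ell_h \text{ for all } h} a_{\vi} = 0$ for every tuple of twin classes $\ell_1, \ldots, \ell_m \subseteq [q]$. Since the $\approx$-classes of $[q]^m$ are by definition the $m$-fold Cartesian products of $\sim$-classes, this is the desired conclusion. I expect no deep obstacle; the only bookkeeping point is to verify at each iteration that the residual restricted sum still fits the hypothesis of \autoref{lem:vandermonde} verbatim, which it does since the effective index set $[q]$, the auxiliary index set $J$, and the exponent bound $q$ do not change between iterations.
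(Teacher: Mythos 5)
Your proposal is correct and follows essentially the same route as the paper: peel off one coordinate at a time, apply \autoref{lem:vandermonde} with index set $[q]$ and the fixed remaining exponents absorbed into the coefficients, and iterate $m$ times, using that the $\approx$-classes are products of $\sim$-classes. The paper's proof is exactly this (it peels coordinates $m, m-1, \ldots$ in order), so there is nothing substantive to add.
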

\begin{proof}
    Separating the sum over $i_m$, which we rename to $i$, we have
    \begin{equation}
        \label{eq:extractim}
        \sum_{i \in [q]} \left( \sum_{i_1^{m-1} \in [q]} a_{\vi} \prod_{j \in J, h \in [m-1]}b_{i_h, j}^{p_{h,j}} \right)
        \left( \prod_{j \in J} b_{i, j}^{p_{m,j}}\right) = 0.
    \end{equation}
    Applying \autoref{lem:vandermonde} with 
    \[
        I := [q],\quad
        a_{i} := \sum_{i_1^{m-1} \in [q]} a_{\vi} \prod_{j \in J, h \in [m-1]}b_{i_h, j}^{p_{h,j}}
        \text{ for $i \in [q]$, and } p_{j} := p_{m,j},
    \]
    we obtain
    \begin{equation}
        \sum_{i \in I'_{\ell_1}} \left(\sum_{i_1^{m-1} \in [q]} a_{\vi} \prod_{j \in J, h \in [m-1]}b_{i_h, j}^{p_{h,j}}\right) = 0.
        \label{eq:peelone}
    \end{equation}
    for every $\ell_1 \in [s']$, where $[q] = \bigsqcup_{\ell_1 \in [s']} I'_{\ell}$ is a partition of $[q]$ into the equivalence classes of $\sim$.
    Renaming $i_{m-1}$ to $i$, the LHS of \eqref{eq:peelone} is equal to
    \[
        \sum_{i \in [q]} \left( \sum_{i_1^{m-2} \in [q]} 
            \left( \sum_{i_m \in I'_{\ell_1}} a_{\vi} \right) 
        \prod_{j \in J, h \in [m-2]}b_{i_h, j}^{p_{h,j}}\right) 
        \left( \prod_{j \in J} b_{i_{m-1}, j}^{p_{m-1,j}}\right),
    \]
    which has a similar form to \eqref{eq:extractim}, but with the $m$th index removed.
    After $m$ repetitions, we eliminate the outer sum and both
    products and obtain the result.
\end{proof}

\section{The Interpolation Proof}
\label{sec:interpolation}
\subsection{The Symmetric Ternary Case}
\label{sec:symm}

For clarity of exposition, we first prove the special case where all constraint functions are symmetric
and ternary. The general proof requires more sophisticated indexing but is not fundamentally different from the following proof of this special case.
\begin{proposition}
    \label{prop:balanced}
    Let $\fc = \{F_j \mid j \in [t]\}$ and
    $\gc = \{G_j \mid j \in [t]\}$ be compatible constraint function sets with domains $[q_f]$ and $[q_g]$,
    with $q_f \geq q_g$,
    such that every $F \in \fc$ and $G \in \gc$ are symmetric and have arity 3, and assume $\fc$ is twin-free.
    Let $\alpha, \beta$ be the domain weights associated with $\fc$ and $\gc$, respectively.

    Let $\varphi: [2k] \to [q_f]$ and $\psi: [2k] \to [q_g]$ for $k \geq 0$, and 
    for every $x,y \in [q_f]$, let
    \[
        I_{xy} = \{a \in [k] \mid \varphi(a) = x \wedge \varphi(a + k) = y\}.
    \]
    Assume $\varphi$ is
    \emph{well-balanced} -- that is, for every $x, y \in [q_f]$, $|I_{xy}| \geq 2q_f^3$.
    If $Z_{\fc, \alpha}^{\varphi}(K) = Z_{\gc,\beta}^{\psi}(K_{\fc\to\gc})$ for every 
    $K \in \plisimp[\fc;2k]$, then $q_f = q_g = q$ and there
    is a domain-weighted isomorphism $\sigma: [q] \to [q]$ from $(\fc,\alpha)$ to $(\gc,\beta)$ such that
    $\psi = \sigma \circ \varphi$.
\end{proposition}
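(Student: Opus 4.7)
The plan is to apply \autoref{cor:vandermonde} (iterated Vandermonde interpolation) to a parameterized family of simple $2k$-labeled instances. For each vector $\vp = (p_{(j',y',z'),a})_{(j',y',z') \in [t] \times [q_f]^2,\, a \in [k]}$ of nonnegative integers with each $p_{(j',y',z'),a} < q_f$, I construct $K(\vp) \in \plisimp[\fc;2k]$ as follows. For each $a \in [k]$, introduce a single unlabeled variable $u_a$ and a ``core'' constraint $(F_1, v_a, v_{a+k}, u_a)$. For each $(j', y', z') \in [t] \times [q_f]^2$ and each $a$, attach $p_{(j',y',z'),a}$ distinct ``decoration'' constraints $(F_{j'}, u_a, v_b, v_c)$ with $(b, c)$ ranging over distinct pairs of labeled positions satisfying $\varphi(b) = y'$, $\varphi(c) = z'$. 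The well-balanced condition $|I_{xy}| \geq 2q_f^3$ ensures there are enough labeled variables mapping under $\varphi$ to each $y' \in [q_f]$ to make these choices distinctly, and the resulting instance is simple (constraints are pairwise distinct since they differ by function symbol, by $u_a$, or by the $(b,c)$ pair; every constraint contains the unlabeled $u_a$; and the three variables in each constraint are distinct).

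By the multiplicativity property \eqref{eq:mult} across the disjoint per-$a$ gadgets, the LHS becomes
\[
    Z^\varphi_{\fc, \alpha}(K(\vp)) = \sum_{(i_a) \in [q_f]^k} \prod_{a \in [k]} \bigl(\alpha_{i_a} F_1(\varphi(a), \varphi(a+k), i_a)\bigr) \prod_{a, j', y', z'} F_{j'}(i_a, y', z')^{p_{(j',y',z'),a}},
\]
which is exactly the form required by \autoref{cor:vandermonde} with $m := k$, $q := q_f$, $J := [t] \times [q_f]^2$, coefficients $a_{(i_a)} := \prod_a \alpha_{i_a} F_1(\varphi(a), \varphi(a+k), i_a)$, and base values $b_{i,(j',y',z')} := F_{j'}(i, y', z')$. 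The induced equivalence relation on $[q_f]$ ($i \sim i'$ iff $F_{j'}(i, y', z') = F_{j'}(i', y', z')$ for every $(j', y', z')$) coincides with the symmetric twin relation on $\fc$, which by twin-freeness partitions $[q_f]$ into singletons. After an analogous expansion of the RHS $Z^\psi_{\gc,\beta}((K(\vp))_{\fc \to \gc})$ and an application of \autoref{cor:vandermonde} to the combined LHS--RHS difference, the singleton classes on the $\fc$-side pair with elements of $[q_g]$, yielding a map $\sigma: [q_f] \to [q_g]$. Specialising the parameters then reads off $\alpha_i = \beta_{\sigma(i)}$, surjectivity and injectivity of $\sigma$ (forcing $q_f = q_g$), the identity $\psi = \sigma \circ \varphi$, and the isomorphism equations $F_j(\vx) = G_j(\sigma(\vx))$ by varying the core constraint function.

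The main obstacle is the asymmetry between the LHS and RHS expansions: the decoration constraints are indexed by $\varphi$-fibers (since $(b,c)$ are chosen by $\varphi(b), \varphi(c)$), but under $\psi$-pinning they contribute $G_{j'}(w, \psi(b), \psi(c))$, and \emph{a priori} $\psi$ need not be constant on a $\varphi$-fiber, so the RHS does not immediately match the hypothesis form of \autoref{cor:vandermonde}. The resolution I expect is a careful grouping argument that refines each $\varphi$-fiber $J_{y'}$ by the $\psi$-image of its elements, making the RHS a product of $G_{j'}(w, y'', z'')$-factors raised to integer powers determined by both $\vp$ and the $\psi$-refinement counts. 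The excess $|I_{xy}| \geq 2q_f^3$ provides enough slack to choose the decoration pairs so that the induced refinement is controllable, and running the Vandermonde conclusion then simultaneously forces $\psi$ to be constant on each $\varphi$-fiber and identifies the $\sigma$-image of each domain element.
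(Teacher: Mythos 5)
There is a genuine gap, and it sits exactly where you flagged it. The step your proposal defers -- making the $\gc$-side sum fit the hypothesis of \autoref{cor:vandermonde} when $\psi$ need not be constant on a $\varphi$-fiber -- is the heart of the proposition, and ``a careful grouping argument that refines each $\varphi$-fiber'' is not a resolution. The paper's mechanism is a pigeonhole selection: since $|I_{xy}| \geq 2q_f^3 \geq 2q_fq_g^2$ and $(\psi(a),\psi(a+k))$ takes at most $q_g^2$ values on $I_{xy}$, one fixes a function $s:[q_f]^2\to[q_g]^2$ and sub-fibers $J_{xy}\subseteq I_{xy}$ of size at least $2q_f$ on which $(\psi(a),\psi(a+k)) = s(x,y)$ is constant, and then \emph{only} decorates with positions drawn from the $J_{xy}$. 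That is precisely what the quantitative well-balancedness bound is for. Your alternative hope, that ``running the Vandermonde conclusion forces $\psi$ to be constant on each $\varphi$-fiber,'' cannot work at that stage: constancy of $\psi$ on fibers is equivalent to the final conclusion $\psi = \sigma\circ\varphi$ and is not available before the interpolation is set up. Two smaller technical points in the same step: after combining the two sides you are interpolating over $q_f+q_g$ values, so exponents must range over $[0,2q_f)$, not $[0,q_f)$; and the combined difference is a sum over $[q_f]^k\sqcup[q_g]^k$, not over a product set $([q_f]\sqcup[q_g])^k$, so \autoref{cor:vandermonde} does not apply verbatim to it.

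The second gap is structural: your single family with mandatory core constraints $(F_1,v_a,v_{a+k},u_a)$ cannot deliver the chain of conclusions you list. The Vandermonde coefficients become $\prod_a \alpha_{i_a}F_1(\varphi(a),\varphi(a+k),i_a)$, and $F_1$ may have zero entries (in the extreme, a domain element whose slice is identically zero, which twin-freeness permits), so the class-sum conclusions no longer pair every $i\in[q_f]$ with an element of $[q_g]$ nor yield $\alpha_i=\beta_{\sigma(i)}$; the paper's first family is deliberately core-free so that the coefficients are exactly the nonzero weights $\alpha_i,\beta_i$. Moreover, every constraint in your instances contains exactly one unlabeled variable, so you can never produce the fully-free equations $F(i,i',i'')=G(\sigma(i),\sigma(i'),\sigma(i''))$ directly; recovering the isomorphism through pinned arguments requires already knowing $\psi=\sigma\circ\varphi$, while proving $\psi=\sigma\circ\varphi$ (the paper's third family, via twin-freeness of $\gc$) requires the isomorphism first. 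The paper breaks this circularity by running three separate families in order -- no core, then a core $(F,v,v',v'')$ on three free variables, then a core $(F,u_c,v,v')$ tying a labeled variable to two free ones -- substituting the earlier conclusions (its equations for $\alpha$ and for the twin-pairing) before each later interpolation. As written, your ``specialising the parameters then reads off'' all four conclusions from one family does not go through.
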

\begin{proof}
    Consider $2k$-labeled variable set $V_1 = \{v, u_1,\ldots, u_{2k}\}$, where each $u_{\ell}$ is labeled $\ell$.
For a matrix $\chi \in \{0,1\}^{k \times t}$, define the following set of constraints on $V_1$:
\[
    C_{\chi} = \{(F_j, v, u_a, u_{a+k}) \mid a \in [k], j \in [t], \chi(a,j) = 1\},
\]
and define $2k$-labeled \#CSP$(\fc)$ instance $K_{\chi} = (V_1, C_{\chi}) \in \plisimp[\fc;2k]$.

Now construct a certain family of $\chi$. 
$|I_{xy}| \geq 2q_f^3 \geq 2q_fq_g^2$ for every $x, y \in [q_f]$, so, by the pigeonhole principle,
there is a function
$s: [q_f]^2 \to [q_g]^2$ such that for every $x, y \in [q_f]$, there exists a
$J_{xy} \subseteq I_{xy}$ such that $|J_{xy}| \geq 2q_f$ and 
for every $a \in J_{xy}$, $(\psi(a), \psi(a+k)) = s(x,y)$. 
For a choice of $(p_{xyj})_{x,y \in [q_f], j \in [t]} \in [0,2q_f)^{[q_f]^2 \times [t]}$,
for every $x,y \in [q_f]$ choose an arbitrary $P_{xyj} \subset J_{xy}$ of cardinality $p_{xyj}$
for every $j \in [t]$
and define $\chi = \chi((P_{xyj})_{x,y \in [q_f], j \in [t]})$ by
$\chi(a,j) = 1$ for all $x, y \in [q_f], j \in [t]$ and $a \in P_{xyj}$. Set the remaining entries of $\chi$ to $0$.
Let $R$ be the set of all such matrices $\chi$ for all choices of $(p_{xyj})
\in [0,2q_f)^{[q_f]^2 \times [t]}$.

To recap, for $j \in [t]$, if $a \in P_{xyj} \subset J_{xy} \subset I_{xy} \subset [k]$
for $x,y \in [q_f]$, 
then $(\varphi(a), \varphi(a+k)) = (x,y)$ and $(\psi(a),\psi(a+k)) = s(x,y)$.
Hence the variables $(u_a,u_{a+k})$ take values $(x,y)$ and $s(x,y)$ under $\varphi$ and $\psi$,
respectively. These values are independent of the choice of $a$ within $P_{xyj}$.
By construction, $K_{\chi}$ contains a constraint 
$(F_j,v,u_a,u_{a+k})$ for every $a \in P_{xyj}$.
Therefore $Z_{\fc, \alpha}^{\varphi}(K_{\chi}) = Z_{\gc,\beta}^{\psi}((K_{\chi})_{\fc\to\gc})$ for every $\chi \in R$
is equivalent to: for all
$(p_{xyj}) \in [0,2q_f)^{[q_f]^2 \times [t]}$,
\[
    \sum_{i=1}^{q_f} \alpha_i \prod_{x,y \in [q_f], j \in [t]} F_{j}(i,x,y)^{p_{xyj}}
    =
    \sum_{i=1}^{q_g} \beta_i \prod_{x,y \in [q_f], j \in [t]} G_{j}(i,s(x,y))^{p_{xyj}},
\]
where we write $G_{j}(i,s(x,y))$ to mean $G_j(i,s(x,y)_1,s(x,y)_2)$.
The sum over $i$ corresponds to the choice of assignment for the only free variable $v$.
Subtracting the RHS, we are left with a sum of $q_f + q_g \leq 2q_f$ terms on the LHS. 
Treating $[q_f]$ and $[q_g]$ as disjoint, apply \autoref{lem:vandermonde} to this sum with 
\[
    I := [q_f] \sqcup [q_g],
    ~J := [q_f]^2 \times [t],
    ~a_i := \begin{cases} \alpha_i & i \in [q_f] \\ \beta_i & i \in [q_g] \end{cases},
    ~b_{i,xyj} := \begin{cases} F_{j}(i,x,y) & i \in [q_f] \\
    G_{j}(i,s(x,y)) & i \in [q_g] \end{cases}.
\]

$\fc$ is twin-free, so the tuples $(F_j(i,x,y))_{x,y \in [q_f], j \in [t]}$ are pairwise
distinct for $i \in [q_f]$. Hence no equivalence class $I_{\ell}$ contains more than one element of
$[q_f]$. However, every $\alpha_i \neq 0$ by definition, so no equivalence class contains only a single
element of $[q_f]$. Thus there is a function $\sigma: [q_f]\to[q_g]$ such that
every $i \in [q_f]$ is in an equivalence class with $\sigma(i)\in [q_g]$ -- that is
\begin{equation}
    (F_j(i,x,y))_{x,y \in [q_f], j \in [t]} = (G_{j}(\sigma(i),s(x,y)))_{x,y \in [q_f],j\in[t]}
    \text{ for } i \in [q].
    \label{eq:beta}
\end{equation}
Since no two elements of $[q_f]$ are in the same equivalence class, $\sigma$ is injective, hence
bijective, as $q_f \geq q_g$. Thus $q_f = q_g = q$, and we view $\sigma$ as a function $[q]\to[q]$.
\autoref{lem:vandermonde} then gives
\begin{equation}
    \alpha_i = \beta_{\sigma(i)} \text{ for } i \in [q].
    \label{eq:alpha}
\end{equation}

Next, define another family of \#CSP$(\fc)$ instances. Fix $F \in \fc$ and corresponding $G \in \gc$.
Define a new $2k$-labeled variable set
\[
    V_2 = \{v, v', v'', u_1,\ldots,u_{2k}\},
\]
where each $u_{\ell}$ is
labeled $\ell$ (equivalent to the previous $V_1$, but with two new free variables $v'$ and $v''$).
For $\chi, \chi', \chi'' \in R$, define the following set of constraints on $V_2$:
\begin{align*}
    C_{\chi, \chi', \chi''} = \{(F, v, v', v'')\} &\cup \{(F_j, v, u_a, u_{a+k}) \mid \chi(a,j) = 1\} \\
                                                  &\cup \{(F_j, v', u_a, u_{a+k}) \mid \chi'(a,j) = 1\}\\
                                                  &\cup \{(F_j, v'', u_a, u_{a+k}) \mid \chi''(a,j) = 1\},
\end{align*}
and define $K_{\chi, \chi', \chi''} = (V_2, C_{\chi, \chi', \chi''})\in \plisimp[\fc;2k]$.
Every $(\chi, \chi', \chi'') \in R^3$ corresponds to three sequences of subsets
$(P_{xyj}), (P'_{xyj}), (P''_{xyj})$ and three sequences of
integers
$(p_{xyj}), (p'_{xyj}), (p''_{xyj}) \in [0,2q)^{[q]^2 \times [t]}$, where $p_{xyj}, p'_{xyj}, p''_{xyj}$ are
the cardinalities of
$P_{xyj}, P'_{xyj}, P''_{xyj}$, respectively, $P_{xyj}, P'_{xyj}, P''_{xyj} \subset J_{xy} \subset I_{xy}$, and
$\chi, \chi', \chi''$ are 1 at entry $(a,j)$ for $x,y \in [q], j \in [t]$, $a \in P_{xyj}, P'_{xyj}, P''_{xyj}$, respectively, and are 0 elsewhere.

Now the assumption $Z_{\fc, \alpha}^{\varphi}(K_{\chi, \chi', \chi''}) = Z_{\gc,\beta}^{\psi}((K_{\chi, \chi', \chi''})_{\fc\to\gc})$ for every 
$(\chi, \chi', \chi'') \in R^3$
is equivalent to: for all $(p_{xyj}), (p'_{xyj}), (p''_{xyj}) \in [0,2q)^{[q]^2 \times [t]}$,
\begin{align*}
    \sum_{i,i',i''=1}^q \alpha_i \alpha_{i'} \alpha_{i''} F(i,i',i'')
    &\prod_{x,y \in [q], j \in [t]}
    F_j(i,x,y)^{p_{xyj}}
    F_j(i',x,y)^{p'_{xyj}}
    F_j(i'',x,y)^{p''_{xyj}}\\
    = \sum_{i,i',i''=1}^q \beta_i\beta_{i'} \beta_{i''} G(i,i',i'')
    &\prod_{x,y \in [q], j \in [t]}^q 
    G_j(i,s(x,y))^{p_{xyj}}
    G_j(i',s(x,y))^{p'_{xyj}}
    G_j(i'',s(x,y))^{p''_{xyj}}.
\end{align*}
Subtracting the RHS and applying \eqref{eq:alpha} and \eqref{eq:beta} gives
\[
    \sum_{i,i',i'' = 1}^q \alpha_i\alpha_{i'}\alpha_{i''}(F(i,i',i'') - G(\sigma(i),\sigma(i'),\sigma(i'')))
    \prod_{x,y,j}
    F_j(i,x,y)^{p_{xyj}}
    F_j(i',x,y)^{p'_{xyj}}
    F_j(i'',x,y)^{p''_{xyj}} 
    = 0.
\]
$(F_j(i,x,y))_{xyj}$ are pairwise distinct for $i \in [q]$,
so the tuples $(F_j(i,x,y), F_j(i',x,y), F_j(i'',x,y))_{xyj}$
are distinct for distinct $(i,i',i'')$. Applying \autoref{cor:vandermonde} with 
\begin{align*}
    &m := 3,~ J = [q]^2 \times [t],~
    a_{i,i',i''} := \alpha_i\alpha_{i'}\alpha_{i''}(F(i,i',i'') - G(\sigma(i),\sigma(i'),\sigma(i''))),
    \\
    &b_{i,xyj} := F_j(i,x,y),~ p_{1,xyj} := p_{xyj},~p_{2,xyj} := p'_{xyj},~p_{3,xyj} := p''_{xyj},
\end{align*}
we obtain $\alpha_i\alpha_{i'}\alpha_{i''}(F(i,i',i'') - G(\sigma(i),\sigma(i'),\sigma(i''))) = 0$ for all $i,i',i''$ (each $p_{xyj}$ ranges over $[0,2q) \supset [0,q)$). Since each $\alpha_i \neq 0$ and our choice of $F$ and $G$
was arbitrary, this implies
\begin{equation}
    F(i,i',i'') =  G(\sigma(i),\sigma(i'),\sigma(i'')) \text{ for every } i,i',i'' \in [q] \text{ and
        every corresponding } F \in \fc, G \in \gc.
    \label{eq:sigma}
\end{equation}
Combined with \eqref{eq:alpha}, \eqref{eq:sigma} implies that $\sigma$ is a domain-weighted isomorphism 
between $(\fc,\alpha)$ and $(\gc,\beta)$.
Since $\fc$ is twin-free, \eqref{eq:sigma} also implies $\gc$ is also twin-free.

It remains to show that $\psi = \sigma \circ \varphi$. Again let $F \in \fc$ and $G \in \gc$ be
corresponding constraint functions.
Define a third family of \#CSP$(\fc)$ instances. Fix $c \in [2k]$.
Define a $2k$-labeled variable set
\[
    V_3 = \{v, v', u_1,\ldots,u_{2k}\},
\]
where each $u_{\ell}$ is labeled $\ell$
(equivalent to the previous $V_2$, but we have removed the free variable $v''$).
For $(\chi, \chi') \in R^2$, define the following set of constraints on $V_3$:
\[
    C_{\chi, \chi'} = \{(F, u_c, v, v')\} \cup \{(F_j, v, u_a, u_{a+k}) \mid \chi(a,j) = 1\}
    \cup \{(F_j, v', u_a, u_{a+k}) \mid \chi'(a,j) = 1\},
\]
and define $K_{\chi, \chi'} = (V_3, C_{\chi, \chi'}) \in \plisimp[\fc;2k]$.
Now $Z_{\fc, \alpha}^{\varphi}(K_{\chi, \chi'}) = Z_{\gc,\beta}^{\psi}((K_{\chi, \chi'})_{\fc\to\gc})$
for every $(\chi, \chi') \in R^2$
is equivalent to: for all $(p_{xyj}), (p'_{xyj}) \in [0,2q)^{[q]^2 \times [t]}$,
\begin{align*}
    &\sum_{i,i'=1}^q \alpha_i \alpha_{i'} F(\varphi(c),i,i')
    \prod_{x,y,j}
    F_j(i,x,y)^{p_{xyj}}
    F_j(i',x,y)^{p'_{xyj}}\\
    &= \sum_{i,i'=1}^q \beta_i \beta_{i'} G(\psi(c),i,i')
    \prod_{x,y,j}
    G_j(i,s(x,y))^{p_{xyj}}
    G_j(i',s(x,y))^{p'_{xyj}}.
\end{align*}
Subtracting the RHS and applying \eqref{eq:alpha} and \eqref{eq:beta} gives
\[
    \sum_{i,i'=1}^q \alpha_i \alpha_{i'} (F(\varphi(c),i,i') - G(\psi(c),\sigma(i),\sigma(i')))
    \prod_{x,y,j}
    F_j(i,x,y)^{p_{xyj}}
    F_j(i',x,y)^{p'_{xyj}} = 0.
\]
As above, the tuples $(F_j(i,x,y), F_j(i',x,y))_{xyj}$
are distinct for distinct $(i,i')$, so by a similar application of \autoref{cor:vandermonde} with
$m := 2$, 
we have $F(\varphi(c),i,i') = G(\psi(c),\sigma(i),\sigma(i'))$ for all
$i,i' \in [q]$. This holds for any corresponding pair $F \in \fc$ and $G \in \gc$, so, by
\eqref{eq:sigma},
\[
    G_j(\sigma(\varphi(c)),\sigma(i),\sigma(i')) = F_j(\varphi(c),i,i') = G_j(\psi(c),\sigma(i),\sigma(i'))
\]
for all $i,i' \in [q]$ and $j \in [t]$.
Since $\gc$ is twin-free and $\sigma$ is a bijection, this gives $\sigma(\varphi(c)) = \psi(c)$.
We chose $c \in [2k]$ arbitrarily, so $\psi = \sigma \circ \varphi$.
\end{proof}

\subsection{The General Case}
\label{sec:general}

We now extend \autoref{prop:balanced} to general sets of arbitrary arity, non-necessarily-symmetric constraint functions, containing at least one non-unary constraint function.
\begin{lemma}
    \label{lem:balanced}
    Let $\fc = \{F_j \mid j \in [t]\}$ and
    $\gc = \{G_j \mid j \in [t]\}$ be compatible constraint function sets with domains $[q_f]$ and $[q_g]$,
    with $q_f \geq q_g$ and
    assume $\fc$ is twin-free.
    Let $\alpha, \beta$ be the domain weights associated with $\fc$ and $\gc$, respectively.
    Let $n$ be the maximum arity among all functions in $\fc$, and assume $n \geq 2$.
    Suppose $\varphi: [(n-1)k] \to [q_f]$ and $\psi: [(n-1)k] \to [q_g]$ for $k \geq 0$, and 
    for every $\vx \in [q_f]^{n-1}$, let
    \[
        I_{\vx} = \{a \in [k] \mid \varphi(a + (d-1)k) = x_d \text{ for all } d \in [n-1]\}.
    \]
    Assume $\varphi$ is
    \emph{well-balanced} -- that is, for every $\vx \in [q_f]^{n-1}$, $|I_{\vx}| \geq 2nq_f^n$.
    If $Z_{\fc, \alpha}^{\varphi}(K) = Z_{\gc,\beta}^{\psi}(K_{\fc\to\gc})$ for every 
    $K \in \plisimp[\fc;(n-1)k]$, then $q_f = q_g = q$ and there
    is a domain-weighted isomorphism $\sigma: [q] \to [q]$ from $(\fc,\alpha)$ to $(\gc,\beta)$ such that
    $\psi = \sigma \circ \varphi$.
\end{lemma}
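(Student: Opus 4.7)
The plan is to follow the three-stage structure of Proposition 1 almost verbatim, with the set $\jc(\fc)$ from the preliminaries playing the role that the triples $(x,y,j)$ played in the symmetric ternary case. A single ``configuration'' is now a tuple $(j, \vx, r) \in \jc(\fc)$: it picks a constraint function $F_j \in \fc$, values $\vx \in [q_f]^{n_j-1}$ for its non-distinguished arguments, and a position $r \in [n_j]$ at which the free variable $v$ will be placed. Tracking $r$ is the new ingredient forced by dropping the symmetry hypothesis. The three stages will establish, in order: (i) a bijection $\sigma: [q_f] \to [q_g]$ (so $q_f = q_g = q$) with $\alpha_i = \beta_{\sigma(i)}$ and $F_j(x_1^{r-1}, i, x_r^{n_j-1}) = G_j(s(\vx)_1^{r-1}, \sigma(i), s(\vx)_r^{n_j-1})$ for every $(j,\vx,r) \in \jc(\fc)$ and $i \in [q]$; (ii) the full equality $F_j(\vi) = G_j(\sigma(\vi))$ on all $\vi \in [q]^{n_j}$; and (iii) the relation $\psi = \sigma \circ \varphi$.

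For the pigeonhole step, organize the $(n-1)k$ labels into $n-1$ groups of $k$, so that for $a \in [k]$ and $d \in [n-1]$ the label $a + (d-1)k$ is the $a$th entry of group $d$, and $a \in I_{\vx}$ exactly when the first $n-1$ labeled variables of column $a$ take values $x_1, \ldots, x_{n-1}$ under $\varphi$. The well-balanced bound $|I_{\vx}| \geq 2nq_f^n$ outstrips $q_g^{n-1}$ by a factor $\geq 2nq_f$, so pigeonhole yields $J_{\vx} \subset I_{\vx}$ with $|J_{\vx}| \geq 2nq_f$ on which $(\psi(a+(d-1)k))_{d \in [n-1]}$ is constant, defining $s: [q_f]^{n-1} \to [q_g]^{n-1}$. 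To realize a given $(j, \vx, r) \in \jc(\fc)$, extend $\vx$ to a length-$(n-1)$ tuple $\widehat{\vx}$ by padding (say with $1$'s), pick $P_{j,\vx,r} \subset J_{\widehat{\vx}}$ of cardinality $p_{j,\vx,r} \in [0,2q_f)$, and for each $a \in P_{j,\vx,r}$ add the constraint applying $F_j$ with $v$ at position $r$ and $u_a, u_{a+k}, \ldots, u_{a+(n_j-2)k}$ filling the remaining positions in order. Summing the resulting partition-function identity over the free variable's value $i$ yields the Vandermonde-ready equation $\sum_i \alpha_i \prod_{(j,\vx,r)} F_j(x_1^{r-1}, i, x_r^{n_j-1})^{p_{j,\vx,r}} = \sum_i \beta_i \prod_{(j,\vx,r)} G_j(s(\widehat{\vx})_1^{r-1}, i, s(\widehat{\vx})_r^{n_j-1})^{p_{j,\vx,r}}$, to which \autoref{lem:vandermonde} and twin-freeness of $\fc$ apply exactly as before to produce $\sigma$ and the stage-(i) identities.

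Stages (ii) and (iii) parallel the symmetric ternary case. In stage (ii), fix $F_{j_0} \in \fc$ of arity $n_{j_0}$, introduce free variables $v_1, \ldots, v_{n_{j_0}}$, add one central constraint $(F_{j_0}, v_1, \ldots, v_{n_{j_0}})$, and attach an independent stage-(i) gadget (with its own parameter family) to each $v_d$; after substituting stage (i) and subtracting, \autoref{cor:vandermonde} applied with $m = n_{j_0}$ yields $F_{j_0}(\vi) = G_{j_0}(\sigma(\vi))$ for all $\vi$, establishing $\sigma$ as a domain-weighted isomorphism and implying twin-freeness of $\gc$. In stage (iii), replace one $v_d$ by a labeled variable $u_c$ for arbitrary $c \in [(n-1)k]$; a similar application of \autoref{cor:vandermonde} with $m = n_{j_0} - 1$, combined with twin-freeness of $\gc$ and the bijectivity of $\sigma$, forces $\psi(c) = \sigma(\varphi(c))$. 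The main obstacle is really the indexing bookkeeping in stage (i): each instance must be phrased uniformly over $(j, \vx, r) \in \jc(\fc)$ despite $\vx$ having $j$-dependent length $n_j - 1 \leq n-1$, and the resulting \#CSP instances must lie in $\plisimp$. Distinct $(j, \vx, r)$ sharing a column $a$ yield distinct constraints (they differ in $F_j$, in the ordering of bound arguments, or in the position of $v$), so simplicity is automatic, and the lower bound $|J_{\widehat{\vx}}| \geq 2nq_f$ comfortably absorbs the sum $\sum_{j,r} p_{j,\vx,r}$ of parameters sharing a column--this is precisely what the $2nq_f^n$ lower bound on $|I_{\vx}|$ is engineered to guarantee.
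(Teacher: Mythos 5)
Your overall plan is the same as the paper's: the same three-stage structure indexed by $\jc(\fc)$, the same pigeonhole construction of $s$ and the padded tuples, and the same applications of \autoref{lem:vandermonde} and \autoref{cor:vandermonde}. But there is one concrete step that fails as you stated it: the claim that ``simplicity is automatic'' when distinct $(j,\vx,r)$ share a column $a$. If the same column $a$ is used for the same $F_j$ (hence the same $\vx$, since distinct $\vx$ give disjoint $J$-sets) with two different positions $r \neq r'$, the two resulting constraints apply the same function $F_j$ to the same set of variables $\{v, u_a^{(1)},\ldots,u_a^{(n_j-1)}\}$, merely in different orders. Under \autoref{def:klabeled} such a pair has multiplicity $2$ ``up to permutation of the order of its variables,'' so the instance is not in $\plisimp$ and the hypothesis $Z^{\varphi}_{\fc,\alpha}(K)=Z^{\psi}_{\gc,\beta}(K_{\fc\to\gc})$ cannot be invoked for it. The paper avoids exactly this by requiring the sets $P_{j\vx r}$ to be pairwise disjoint over $r$ for each fixed pair $(j,\vx)$; columns may be freely shared across different $j$, since distinct functions give distinct constraints.

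Your companion counting claim is also off: $|J_{\widehat\vx}| \geq 2nq_f$ does \emph{not} absorb the full sum $\sum_{j,r} p_{j,\vx,r}$, which can be as large as roughly $2tnq_f$ when $t \geq 2$; the bound is engineered only for disjointness over the at most $n$ values of $r$ within a single $(j,\vx)$, each block having size $< 2q_f$ --- and that weaker disjointness is all that simplicity requires. With that correction your stage (i) goes through, and stages (ii)--(iii) match the paper. Two smaller points you leave implicit that the paper spells out: in stage (iii) the case of unary $F$ must be treated separately (your ``$m = n_{j_0}-1$'' would be $m=0$, outside the range of \autoref{cor:vandermonde}; a single-constraint instance gives $F(\varphi(c)) = G(\psi(c))$ directly), and before invoking twin-freeness of $\gc$ you must run the stage-(iii) argument with the labeled variable $u_c$ placed at \emph{every} position $r$ of \emph{every} $F_j$, not just one fixed position of one fixed $F_{j_0}$, so that $\sigma(\varphi(c))$ and $\psi(c)$ agree on all configurations in $\jc(\gc)$.
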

\begin{proof}
Consider $(n-1)k$-labeled variable set $V_1 = \{v\} \cup \{u_a^{(d)}\}_{a \in [k], d \in [n-1]}$, 
where $u^{(d)}_a$ is labeled $a+(d-1)k$.
For a matrix $\chi \in ([n] \cup \{\bot\})^{k \times t}$ satisfying $\chi(*,j) \subset [n_j] \cup \{\bot\}$,
define the following set of constraints on $V_1$:
\[
    C_{\chi} = \{(F_j, u_a^{(1)}, \ldots, u_a^{(r-1)}, v, u_a^{(r)}, \ldots, u_a^{(n_j)}) \mid a \in [k], j \in [t], r \in [n_j], \chi(a,j) = r\}.
\]
Add no constraints for $\chi(a,j) = \bot$.
Define a $(n-1)k$-labeled \#CSP$(\fc)$ instance $K_{\chi} = (V_1, C_{\chi}) \in \plisimp[\fc;(n-1)k]$.

Now construct a certain family of $\chi$. 
$|I_{\vx}| \geq 2nq_f^n \geq 2nq_fq_g^{n-1}$ for every $\vx \in [q_f]^{n-1}$, so, by the pigeonhole principle,
there is a function
$s: [q_f]^{n-1} \to [q_g]^{n-1}$ such that for every $\vx \in [q_f]^{n-1}$, there exists a
$J_{\vx} \subseteq I_{\vx}$ such that $|J_{\vx}| \geq 2nq_f$ and,
for every $a \in J_{\vx}$, $(\psi(a + (d-1)k))_{d \in [n-1]} = s(\vx)$. 
For $n_j \leq n$ and $\vx \in [q_f]^{n_j-1}$, let $\ext(\vx) \in [q_f]^{n-1}$ denote the extension of 
$\vx$, defined by 
\[
    \ext(\vx)_i = \begin{cases}
        x_i & i \leq n_j-1 \\ 
        1 \in [q_f] & i > n_j-1
    \end{cases}.
\]
The choice of $1 \in [q_f]$ is arbitrary. 
If $n_j = 1$ and $\vx = ()$, then $\ext(\vx)$ is the all-ones vector, though this is completely
arbitrary, as in this case $(F_j, u_a^{(1)}, \ldots, u_a^{(r-1)}, v, u_a^{(r)}, \ldots, u_a^{(n_j)})
= (F_j, v)$ and $F_j(x_1^{r-1},i,x_r^{n_j-1}) = F_j(i)$, so one will see below that the entries of 
$\ext(\vx)$ are irrelevant.

Extend $s$ to a function on $\bigcup_{d=1}^{n-1} [q_f]^d$ by
$s(\vx) := s(\ext(\vx))$.
For every $(j,\vx,r) \in \jc(\fc)$ \eqref{eq:jc},
fix an arbitrary subset $P_{j \vx r} \subset J_{\ext(\vx)}$
with cardinality $p_{j \vx r}$, such that, for fixed $\vx$ and $j$, $P_{j \vx r}$ are disjoint for
distinct $r \in [n_j]$. This is possible for values of $p_{j \vx r}$ up to $2q_f$ 
because $|J_{\ext(\vx)}| \geq 2nq_f$ and $r$ can take at most $n$ distinct values.
For a fixed choice of $(p_{j \vx r}) \in [0,2q_f)^{\jc(\fc)}$,
define $\chi = \chi((P_{j \vx r})_{(j,\vx,r) \in \jc(\fc)})$ by
$\chi(a,j) = r$, for all $(j,\vx,r) \in \jc(\fc)$ and $a \in P_{j \vx r}$. Set the remaining entries of $\chi$ to $\bot$.
Let $R$ be the set of all such matrices $\chi$ for all choices of 
$(p_{j \vx r}) \in [0,2q_f)^{\jc(\fc)}$.

To recap, for $j \in [t]$, if $a \in P_{j \vx r} \subset J_{\ext(\vx)} \subset I_{\ext(\vx)}$ for $\vx \in [q_f]^{n_j-1}$ and
$r \in [n_j]$, then $\varphi(a + (d-1)k) = x_d$ and $\psi(a + (d-1)k) = s(\ext(\vx))_d = s(\vx)_d$ 
for $d \in [n_j-1]$. Hence the variable $u_a^{(d)}$ takes value $x_d$ and $s(\vx)_d$ under $\varphi$ and $\psi$,
respectively, for $d \in [n_j-1]$. These values are independent of the choice of $a$ within
$P_{j \vx r}$.
By construction, $K_{\chi}$ contains a constraint 
$(F_j, u_a^{(1)}, \ldots, u_a^{(r-1)}, v, u_a^{(r)}, \ldots, u_a^{(n_j)})$ for every $a \in P_{j \vx r}$.
Therefore $Z_{\fc, \alpha}^{\varphi}(K_{\chi}) = Z_{\gc,\beta}^{\psi}((K_{\chi})_{\fc\to\gc})$ for every $\chi \in R$
is equivalent to: for all $(p_{j \vx r}) \in [0,2q_f)^{\jc(\fc)}$,
\[
    \sum_{i=1}^{q_f} \alpha_i \prod_{(j,\vx,r) \in \jc(\fc)} F_{j}(x_1^{r-1}, i,x_r^{n_j-1})^{p_{j \vx r}}
    =
    \sum_{i=1}^{q_g} \beta_i \prod_{(j,\vx,r) \in \jc(\fc)} G_{j}(s(\vx)_1^{r-1},i,s(\vx)_r^{n_j-1})^{p_{j \vx r}},
\]
where the sum over $i$ corresponds to the choice of assignment for the only free variable $v$.
Subtracting the RHS, we are left with a sum of $q_f + q_g \leq 2q_f$ terms on the LHS. 
Treating $[q_f]$ and $[q_g]$ as disjoint, apply \autoref{lem:vandermonde} to this sum with 
\[
    I := [q_f] \sqcup [q_g],
    ~J := \jc(\fc),
    ~a_i := \begin{cases} \alpha_i & i \in [q_f] \\ \beta_i & i \in [q_g] \end{cases},
    ~b_{i,j \vx r} := \begin{cases} F_{j}(x_1^{r-1}, i,x_r^{n_j-1}) & i \in [q_f] \\
    G_{j}(s(\vx)_1^{r-1},i,s(\vx)_r^{n_j-1}) & i \in [q_g] \end{cases}.
\]

$\fc$ is twin-free, so the tuples $(F_j(x_1^{r-1},i,x_r^{n_j-1}))_{(j,\vx,r) \in \jc(\fc)}$ 
are pairwise
distinct for $i \in [q_f]$. Hence no equivalence class $I_{\ell}$ contains more than one element of
$[q_f]$. However, every $\alpha_i \neq 0$ by definition, so no equivalence class contains only a single
element of $[q_f]$. Thus there is a function $\sigma: [q_f] \to [q_g]$ such that every 
$i \in [q_f]$ in an equivalence class with
$\sigma(i) \in [q_g]$ -- that is
\begin{equation}
    (F_{j}(x_1^{r-1}, i,x_r^{n_j-1}))_{(j,\vx,r) \in \jc(\fc)} =
    (G_{j}(s(\vx)_1^{r-1},\sigma(i),s(\vx)_r^{n_j-1}))_{(j,\vx,r) \in \jc(\fc)}.
    \label{eq:beta2}
\end{equation}
Since no two elements of $[q_f]$ are in the same equivalence class, $\sigma$ is injective, hence
bijective, as $q_f \geq q_g$. Thus $q_f = q_g = q$, and we view $\sigma$ as a function $[q] \to [q]$.
\autoref{lem:vandermonde} then gives
\begin{equation}
    \alpha_i = \beta_{\sigma(i)} \text{ for } i \in [q].
    \label{eq:alpha2}
\end{equation}

Next, define another family of \#CSP$(\fc)$ instances. Fix $F \in \fc$ and corresponding $G \in \gc$,
with common arity $n_F$.
Define a new $(n-1)k$-labeled variable set
\[
    V_2 = \{v_h \mid h \in [n_F]\} \cup \{u_a^{(d)}\}_{a \in [k], d \in [n-1]}
\]
where $u^{(d)}_a$ is labeled $a+(d-1)k$.
For $\chi_1, \ldots, \chi_{n_F} \in R$, define the following set of constraints on $V_2$:
\begin{align*}
    C_{\chi_1^{n_F}} = &\{(F, v_1,\ldots,v_{n_F})\} \\
    &\cup \{(F_j, u_a^{(1)}, \ldots, u_a^{(r-1)}, v_h, u_a^{(r)}, \ldots, u_a^{(n_j)}) \mid a \in [k], j \in [t], r \in [n_j], h \in [n_F], \chi_h(a,j) = r\}.
\end{align*}
Let $K_{\chi_1^{n_F}} = (V_2, C_{\chi_1^{n_F}}) \in \plisimp[\fc;(n-1)k]$.
Every $\chi_h \in R$ corresponds to a sequence of subsets
$(P_{h, j \vx r})_{(j,\vx,r) \in \jc(\fc)}$
and sequence of integers $(p_{h, j \vx r})_{j \vx r} \in [0,2q)^{\jc(\fc)}$
such that
$p_{h, j \vx r}$ is the cardinality of
$P_{h, j \vx r}$,
$P_{h, j \vx r} \subset J_{\ext(\vx)} \subset I_{\ext(\vx)}$, and
$\chi_h(a,j) = r$ for $j \in [t]$, $\vx \in [q]^{n_j-1}$, and $a \in P_{h,j\vx r}$,
and is $\bot$ elsewhere.

Now the assumption $Z_{\fc, \alpha}^{\varphi}(K_{\chi_1^{n_F}}) = Z_{\gc,\beta}^{\psi}((K_{\chi_1^{n_F}})_{\fc\to\gc})$ for every 
$(\chi_1^{n_F}) \in R^{n_F}$
is equivalent to: for all $(p_{h, j\vx r})_{h,j\vx r} \in [0,2q)^{[n_F] \times \jc(\fc)}$,
\begin{align*}
    &\sum_{\vi \in [q]^{n_F}} \left(\prod_{h=1}^{n_F} {\alpha_{i_h}} \right)F(\vi)
    \prod_{(j,\vx,r) \in \jc(\fc), h \in [n_F]}
    F_{j}(x_1^{r-1}, i_h ,x_r^{n_j-1})^{p_{h, j \vx r}} \\
    = &\sum_{\vi \in [q]^{n_F}} \left(\prod_{h=1}^{n_F} {\beta_{i_h}} \right)G(\vi)
    \prod_{(j,\vx,r) \in \jc(\fc), h \in [n_F]}
    G_{j}(s(\vx)_1^{r-1}, i_h,s(\vx)_r^{n_j-1})^{p_{h, j \vx r}}
\end{align*}
where the sum over $\vi$ corresponds to the choice of assignment for the free variables $v_1^{n_F}$.
Subtracting the RHS and applying \eqref{eq:alpha2} and \eqref{eq:beta2} gives
\[
    \sum_{\vi \in [q]^{n_F}} \left(\prod_{h=1}^{n_F} {\alpha_{i_h}} \right)
    (F(\vi) - G(\sigma(\vi)))
    \prod_{(j,\vx,r) \in \jc(\fc), h \in [n_F]}
    F_{j}(x_1^{r-1}, i_h ,x_r^{n_j-1})^{p_{h, j \vx r}}
    = 0.
\]
$(F_j(x_1^{r-1},i,x_r^{n_j-1}))_{(j,\vx,r) \in \jc(\fc)}$ are distinct for $i \in [q]$,
so the tuples $(F_j(x_1^{r-1},i_h,x_r^{n_j-1}))_{h \in [n_F], (j,\vx,r) \in \jc(\fc)}$
are distinct for distinct $\vi \in [q]^{n_F}$. Applying \autoref{cor:vandermonde} with 
\begin{align*}
    &m := n_F,~ J = \jc(\fc),~
    a_{\vi} := \left(\prod_{h=1}^{n_F} {\alpha_{i_h}} \right) (F(\vi) - G(\sigma(\vi))), \text{ and }
    b_{i,j \vx r} := F_{j}(x_1^{r-1}, i ,x_r^{n_j-1}),
\end{align*}
we obtain $\left(\prod_{h=1}^{n_F} {\alpha_{i_h}} \right) (F(\vi) - G(\sigma(\vi))) = 0$ for all $\vi$ (each $p_{h, j\vx r}$ ranges over $[0,2q] \supset [0,q]$). Since each $\alpha_i \neq 0$ and our choice of $F$ and $G$
was arbitrary, this implies
\begin{equation}
    F(\vi) =  G(\sigma(\vi)) \text{ for every } \vi \in [q]^{n_F} \text{ and
        every corresponding } F \in \fc, G \in \gc.
    \label{eq:sigma2}
\end{equation}
Combined with \eqref{eq:alpha2}, \eqref{eq:sigma2} implies that $\sigma$ is a domain-weighted isomorphism 
between $(\fc,\alpha)$ and $(\gc,\beta)$.
Since $\fc$ is twin-free, \eqref{eq:sigma2} also implies $\gc$ is also twin-free.

It remains to show that $\psi = \sigma \circ \varphi$. Again let $F \in \fc$ and $G \in \gc$ be
corresponding constraint functions with common arity $n_F$.
Fix $c \in [(n-1)k]$. 
We aim to show that $F(\varphi(c),\vi) = G(\psi(c),\sigma(\vi))$ for all $\vi \in [q]^{n_F-1}$. 
If $n_F = 1$, let $K \in \plisimp[\fc;(n-1)k]$ be an instance with no unlabeled/free variables and a
single constraint $(F,v_c)$, where $v_c$ is the variable labeled $c$. Then by assumption we have
\begin{equation}
    F(\varphi(c)) = Z_{\fc,\alpha}^{\varphi}(K) = Z_{\gc,\beta}^{\psi}(K_{\fc\to\gc}) = G(\psi(c)),
    \label{eq:unary}
\end{equation}
as desired (recall $\vi \in [q]^0 \implies \vi = ()$).

Otherwise, if $n_F \geq 2$, define a third family of \#CSP$(\fc)$ instances as follows.
Define a $(n-1)k$-labeled variable set
\[
    V_3 = \{v_h \mid h \in [n_F-1]\} \cup \{u_a^{(d)}\}_{a \in [k], d \in [n-1]}
\]
where $u^{(d)}_a$ is labeled $a+(d-1)k$
(equivalent to the previous $V_2$, but we have removed the free variable $v_{n_F}$).
Write $c = a_c + (d_c-1)k$ (so that $u_{a_c}^{(d_c)}$ is labeled $c$).
For $\chi_1, \ldots, \chi_{n_F-1} \in R$, define the following set of constraints on $V_3$:
\begin{align*}
    C_{\chi_1^{n_F-1}} = &\{(F, u_{a_c}^{(d_c)}, v_1,\ldots,v_{n_F-1})\} \\
    &\cup \{(F_j, u_a^{(1)}, \ldots, u_a^{(r-1)}, v_h, u_a^{(r)}, \ldots, u_a^{(n_j)}) \mid a \in [k], j \in [t], r \in [n_j], h \in [n_F-1], \chi_h(a,j) = r\}.
\end{align*}
Let $K_{\chi_1^{n_F-1}} = (V_3, C_{\chi_1^{n_F-1}}) \in \plisimp[\fc;(n-1)k]$.

Now the assumption $Z_{\fc, \alpha}^{\varphi}(K_{\chi_1^{n_F-1}}) = Z_{\gc,\beta}^{\psi}((K_{\chi_1^{n_F-1}})_{\fc\to\gc})$ for every 
$(\chi_1^{n_F-1}) \in R^{n_F-1}$
is equivalent to: for all $(p_{h, j\vx r})_{h,j\vx r} \in [0,2q)^{[n_F-1] \times \jc(\fc)}$,
\begin{align*}
    &\sum_{\vi \in [q]^{n_F-1}} \left(\prod_{h=1}^{n_F-1} {\alpha_{i_h}} \right) F(\varphi(c),\vi)
    \prod_{(j,\vx,r) \in \jc(\fc), h \in [n_F-1]}
    F_{j}(x_1^{r-1}, i_h ,x_r^{n_j-1})^{p_{h, j \vx r}} \\
    = &\sum_{\vi \in [q]^{n_F-1}} \left(\prod_{h=1}^{n_F-1} {\beta_{i_h}} \right) G(\psi(c), \vi)
    \prod_{(j,\vx,r) \in \jc(\fc), h \in [n_F-1]}
    G_{j}(s(\vx)_1^{r-1}, i_h,s(\vx)_r^{n_j-1})^{p_{h, j \vx r}}.
\end{align*}
Subtracting the RHS and applying \eqref{eq:alpha2} and \eqref{eq:beta2} gives
\[
    \sum_{\vi \in [q]^{n_F-1}} \left(\prod_{h=1}^{n_F-1} {\alpha_{i_h}} \right)
    (F(\varphi(c), \vi) - G(\psi(c), \sigma(\vi)))
    \prod_{(j,\vx,r) \in \jc(\fc), h \in [n_F-1]}
    F_{j}(x_1^{r-1}, i_h ,x_r^{n_j-1})^{p_{h, j \vx r}}
    = 0.
\]
As above, the tuples $(F_j(x_1^{r-1},i_h,x_r^{n_j-1}))_{h \in [n_F-1], (j,\vx,r) \in \jc(\fc)}$
are distinct for distinct $\vi \in [q]^{n_F-1}$. Hence by a similar application of
\autoref{cor:vandermonde} with $m := n_F-1$,
we have $F(\varphi(c),\vi) = G(\psi(c),\sigma(\vi))$ for all
$\vi \in [q]^{n_F-1}$. This holds for any corresponding pair $F \in \fc$ and $G \in \gc$. Additionally,
the reasoning is independent of the input order in 
$F(\varphi(c),\vi)$ and $G(\psi(c),\sigma(\vi))$. Hence, by \eqref{eq:beta2},
\[
    G_j(\sigma(\vi)_1^{r-1},\sigma(\varphi(c)),\sigma(\vi)_r^{n_j-1}) 
    = F_j(\vi_1^{r-1},\varphi(c),\vi_r^{n_j-1}) 
    = G_j(\sigma(\vi)_1^{r-1},\psi(c),\sigma(\vi)_r^{n_j-1}) 
\]
for all $(j,\vi,r) \in \jc(\gc)$.
Since $\gc$ is twin-free and $\sigma$ is a bijection, this gives $\sigma(\varphi(c)) = \psi(c)$.
We chose $c \in [(n-1)k]$ arbitrarily, so $\psi = \sigma \circ \varphi$.
\end{proof}

Now we remove the requirement that $\varphi$ be well-balanced, which in turn removes the requirement
that $k$ be very large. 
We also address the case where $\fc$ and $\gc$ contain only
unary constraint functions. 
\begin{theorem}
    \label{thm:nowellbalanced}
    Let $\fc$ and $\gc$ be compatible constraint function
    sets with domains $[q_f]$ and $[q_g]$, with $q_f \geq q_g$, and $\alpha$ and $\beta$ be the domain
    weights associated with $\fc$ and $\gc$, respectively. Assume $\fc$ is twin-free.
    Let $k \geq 0$ and $\varphi: [k] \to [q_f]$ and $\psi: [k] \to [q_g]$.
    If $Z_{\fc, \alpha}^{\varphi}(K) = Z_{\gc,\beta}^{\psi}(K_{\fc\to\gc})$ for every $K \in \plisimp[\fc;k]$, then 
    $q_f = q_g = q$ and there exists an domain-weighted isomorphism $\sigma: [q] \to [q]$ between
    $(\fc,\alpha)$ and $(\gc,\beta)$ such that
    $\psi = \sigma \circ \varphi$.
\end{theorem}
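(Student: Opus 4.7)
The argument splits according to the maximum arity $n$ among functions in $\fc$.

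\emph{Case $n = 1$.} Since every constraint function is unary, for any $K = (V,C) \in \plisimp[\fc;k]$ the partition function $Z_{\fc,\alpha}^{\varphi}(K)$ factors as a product over variables: each labeled variable contributes $\prod_{j \in S_i} F_j(\varphi(i))$, where $S_i$ is the multiset of function-indices of constraints applied to $v_i$, and each unlabeled $w$ contributes $\sum_{x \in [q_f]} \alpha_x \prod_{j} F_j(x)^{p_{w,j}}$. Applying the hypothesis to instances made of $k$ labeled variables each carrying a single unary constraint yields $F_j(\varphi(i)) = G_j(\psi(i))$ for every $(i,j) \in [k] \times [t]$. Applying it to instances with $k$ isolated labeled variables and one unlabeled variable $w$ carrying arbitrary unary constraints yields $\sum_{x} \alpha_x \prod_{j} F_j(x)^{p_j} = \sum_{x} \beta_x \prod_{j} G_j(x)^{p_j}$ for all $(p_j) \in \mathbb{N}^{t}$. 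Twin-freeness of $\fc$ with \autoref{cor:vandermonde} then produces a bijection $\sigma: [q_f] \to [q_g]$ (forcing $q_f = q_g$) with $F_j(x) = G_j(\sigma(x))$ and $\alpha_x = \beta_{\sigma(x)}$; combined with the first set of equations and the inherited twin-freeness of $\gc$, this forces $\psi = \sigma \circ \varphi$.

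\emph{Case $n \geq 2$.} The reduction is to \autoref{lem:balanced}. Pick $\ell \geq k + q_f^{n-1} \cdot 2nq_f^n$, and extend $\varphi$ to a well-balanced $\widehat{\varphi}: [(n-1)\ell] \to [q_f]$ with $\widehat{\varphi}|_{[k]} = \varphi$, pigeonholing the extra values so every $\vx \in [q_f]^{n-1}$ arises as $(\widehat{\varphi}(a), \widehat{\varphi}(a + \ell), \ldots, \widehat{\varphi}(a + (n-2)\ell))$ for at least $2nq_f^n$ values of $a \in [\ell]$. The task reduces to building $\widehat{\psi}: [(n-1)\ell] \to [q_g]$ with $\widehat{\psi}|_{[k]} = \psi$ satisfying $Z_{\fc,\alpha}^{\widehat{\varphi}}(\widehat{K}) = Z_{\gc,\beta}^{\widehat{\psi}}(\widehat{K}_{\fc\to\gc})$ for every $\widehat{K} \in \plisimp[\fc;(n-1)\ell]$; \autoref{lem:balanced} then yields an isomorphism $\sigma$ with $\widehat{\psi} = \sigma \circ \widehat{\varphi}$, and restriction to $[k]$ gives $\psi = \sigma \circ \varphi$.

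I would build $\widehat{\psi}$ inductively on $m = k, k+1, \ldots, (n-1)\ell$, maintaining the invariant $Z_{\fc,\alpha}^{\widehat{\varphi}|_{[m]}}(K) = Z_{\gc,\beta}^{\widehat{\psi}|_{[m]}}(K_{\fc\to\gc})$ for every $K \in \plisimp[\fc;m]$ (the base $m=k$ is the theorem's hypothesis). For the step $m \to m+1$, unlabeling the label-$(m+1)$ variable in any $\widehat{K} \in \plisimp[\fc;m+1]$ yields some $K' \in \plisimp[\fc;m]$, and the pinning identity $Z^{\chi}(K') = \sum_i \alpha_i\, Z^{\chi \cup \{m+1 \mapsto i\}}(\widehat{K})$ (a direct unpacking of the definition of $Z^\chi$) combined with the inductive hypothesis on $K'$ gives
\[
    \sum_{i \in [q_f]} \alpha_i\, Z_{\fc,\alpha}^{\widehat{\varphi}|_{[m]} \cup \{m+1 \mapsto i\}}(\widehat{K}) \;=\; \sum_{j \in [q_g]} \beta_j\, Z_{\gc,\beta}^{\widehat{\psi}|_{[m]} \cup \{m+1 \mapsto j\}}(\widehat{K}_{\fc\to\gc}).
\]
Parameterizing $\widehat{K}$ by the multiplicities $(p_{j\vx r})$ of $K_{\chi}$-style constraints incident to the label-$(m+1)$ variable (exactly as in the first family of \autoref{lem:balanced}) and applying \autoref{cor:vandermonde}, twin-freeness of $\fc$ produces a unique $\tau: [q_f] \to [q_g]$ with $Z^{\widehat{\varphi}|_{[m]} \cup \{m+1 \mapsto i\}}(\widehat{K}) = Z^{\widehat{\psi}|_{[m]} \cup \{m+1 \mapsto \tau(i)\}}(\widehat{K}_{\fc\to\gc})$ for every $\widehat{K}$; setting $\widehat{\psi}(m+1) := \tau(\widehat{\varphi}(m+1))$ then extends the invariant.

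\emph{Main obstacle.} The inductive step is the technical heart: one must tailor the $K_{\chi}$-style parameterization so that \autoref{cor:vandermonde} cleanly separates out the single new labeled variable, and verify that the resulting $\tau$ is consistent across every $\widehat{K}$. Crucially, $\widehat{\varphi}|_{[m]}$ need not itself be well-balanced during the induction, so the unlabeled variables of $\widehat{K}$ have to supply the role that the $I_{\vx}$-positions play in \autoref{lem:balanced}'s $K_{\chi}$-construction; adapting that construction to this setting is where the most careful bookkeeping lies, and it is also what lets us sidestep the well-balancedness requirement built into \autoref{lem:balanced}.
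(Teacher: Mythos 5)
Your unary case and your overall reduction (extend $\varphi$ to a well-balanced map and invoke \autoref{lem:balanced}) match the paper, but the label-by-label construction of $\widehat\psi$ has a genuine gap at exactly the point you flag as the ``main obstacle,'' and that obstacle is not bookkeeping --- it is the whole difficulty. The $K_{\chi}$-style parameterization in \autoref{lem:balanced} works only because the auxiliary variables $u_a^{(d)}$ attached to the free variable are \emph{labeled}: their values are pinned to prescribed tuples $\vx \in [q_f]^{n-1}$ by $\varphi$ (well-balancedness guarantees every $\vx$ occurs many times), and the pigeonhole function $s$ controls their $\psi$-values, so each attached constraint contributes a clean factor $F_j(x_1^{r-1},i,x_r^{n_j-1})^{p_{j\vx r}}$ and twin-freeness makes the resulting $b$-tuples pairwise distinct, which is exactly what \autoref{cor:vandermonde} needs. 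At an intermediate stage $m$ of your induction the labeled variables $1,\ldots,m$ carry only the values $\widehat\varphi|_{[m]}$, which need not be well-balanced --- and when $k=0$ your very first step has no labeled auxiliaries at all. Unlabeled auxiliary variables cannot substitute: they are summed over, so an attached gadget contributes a weighted sum such as $\sum_{\vx} \alpha_{\vx}\, F_j(x_1^{r-1},i,x_r^{n_j-1})$ rather than a controlled power, and twin-freeness of $\fc$ gives no guarantee that the tuples of such sums are pairwise distinct for distinct $i$, nor that they pair up one-to-one with the corresponding values on the $\gc$ side. Proving that non-twin domain elements are separated by such rooted pinned partition functions is essentially the theorem itself, so the step producing $\tau$ from \autoref{cor:vandermonde} is unsupported as sketched.

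The paper avoids this by not building $\widehat\psi$ one label at a time. Following the proof of Theorem 3.1 of \cite{homomorphism}, it extends $\varphi$ to a well-balanced $\widehat\varphi$ on a larger label set, forms all bounded-exponent products (in the sense of \autoref{def:klabeled}) of the finitely many witness instances from \autoref{lem:balanced}, erases the extra labels so that the level-$k$ hypothesis applies, and then uses multiplicativity \eqref{eq:mult} together with \autoref{lem:vandermonde} applied to the sum over \emph{all} extensions of $\varphi$ and of $\psi$ simultaneously. This single global interpolation shows that \emph{some} extension $\widehat\psi$ of $\psi$ agrees with $\widehat\varphi$ on the entire witness family, after which \autoref{lem:balanced} gives $\sigma$ with $\widehat\psi = \sigma\circ\widehat\varphi$, and restriction to $[k]$ finishes. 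If you wish to keep your inductive framing, each step would still need this sum-over-extensions product trick rather than the $K_\chi$ construction; as written, the step from $m$ to $m+1$ does not go through.
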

\begin{proof}
    First handle the case where $\fc$ and $\gc$ contain only unary constraint functions, where
    \autoref{lem:balanced} does not apply. Say $\fc = \gc = t$. For every $\vp \in [2q_f]^t$, let
    $K_{\vp} \in \plisimp[\fc;k]$ be the instance defined by ignoring the labeled variables, adding
    a single unlabeled variable $v$, and $p_j$ copies of the constraint $(F_j, v)$, for $j \in [t]$.
    Then $Z_{\fc,\alpha}^{\varphi}(K_{\vp}) = Z_{\gc,\beta}^{\psi}((K_{\vp})_{\fc\to\gc})$ 
    for every $\vp \in [2q_f]^t$ is equivalent to,
    for every $\vp \in [2q_f]^t$,
    \[
        \sum_{i=1}^{q_f} \alpha_i \prod_{j=1}^t F_j(i)^{p_j} -
        \sum_{i=1}^{q_g} \beta_i \prod_{j=1}^t G_j(i)^{p_j} = 0.
    \]
    Apply \autoref{lem:vandermonde} with $I := [q_f] \sqcup [q_g]$, $J := [t]$, $a_i := \alpha_i$ or $\beta_i$, and
    $b_{i,j} :=  F_j(i)$ or $G_j(i)$ for $i \in [q_f]$ or $i \in [q_g]$, respectively.
    $q_f \geq q_g$ and the tuples $(F_j(i))_{j \in [t]}$ are distinct for
    distinct $i \in [q_f]$ ($\fc$ is twin-free), so by similar reasoning to the first part of the proof
    of \autoref{lem:balanced}, $q_f = q_g = q$ and there is a domain-weighted isomorphism 
    $\sigma: [q_f] \to [q_g]$ between $(\fc,\alpha)$ and $(\gc,\beta)$. Now the unary function argument
    in the third step of the proof of \autoref{lem:balanced} concluding with \eqref{eq:unary}
    gives $F_j \circ \varphi = G_j \circ \psi$ for every $j \in [t]$.
    Hence $G_j \circ \sigma \circ \varphi = F_j \circ \varphi = G_j \circ \psi$ for every $j \in [t]$,
    so since $\gc$, being isomorphic to $\fc$, is twin-free, $\sigma \circ \varphi = \psi$.

    Otherwise, if $\fc$ and $\gc$ contain a function with arity $\geq 2$, the proof is a simple generalization of the proof of \cite[{Theorem 3.1}]{homomorphism}.
    We generalize $\ff$-weighted graphs $H$ and $H'$ to constraint function sets $\fc$ and $\gc$,
    $\ell$-labeled graphs $G \in \mathcal{PLG}^{\text{simp}}[\ell]$ to
    $\ell$-labeled \#CSP$(\fc)$ instances $K \in \plisimp[\ell]$, and $\text{hom}_{\mu}(G,H)$
    and $\text{hom}_{\nu}(G,H')$ to $Z^{\mu}_{\fc,\alpha}(K)$ and $Z^{\nu}_{\gc,\beta}(K_{\fc\to\gc})$,
    respectively.
    The \#CSP generalizations satisfy analogous properties to the special case of graph homomorphisms.
    In particular we use our \autoref{lem:balanced} in place of \cite[{Lemma 6.1}]{homomorphism}
    (with ``well-balanced'' in place of ``super-surjective'')
    and our \eqref{eq:mult} -- the multiplicativity of $Z_{\fc,\alpha}^{\psi}$ -- in place of the
    multiplicativity of $\hom_{\psi}(\cdot,H)$.
\end{proof}

Next, we introduce domain weights to constraint function sets with unit domain weights (equivalently, no domain weights) to 
remove the twin-free requirement. We have the following
generalization of \cite[{Corollary 6.2}]{homomorphism}
\begin{corollary}
    \label{cor:twins}
    Let $\fc$ and $\gc$ be compatible constraint function sets
    with domains $[q_f]$ and $[q_g]$.
    Let $k \geq 0$, $\varphi: [k] \to [q_f]$, and $\psi: [k] \to [q_g]$. If 
    $Z_{\fc}^{\varphi}(K) = Z_{\gc}^{\psi}(K_{\fc\to\gc})$ for every $K \in \plisimp[\fc;k]$, then
    $q_f = q_g = q$ and there is an isomorphism $\sigma: [q] \to [q]$ between $\fc$ and $\gc$ such
    that $\psi' = \sigma \circ \varphi$, where $\psi'(i)$ is a twin of $\psi(i)$ for every
    $i \in [k]$.
\end{corollary}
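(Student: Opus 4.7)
The plan is to reduce to \autoref{thm:nowellbalanced} by twin-contracting both $\fc$ and $\gc$, applying that theorem to the now twin-free contracted sets equipped with their induced domain weights, and then lifting the resulting isomorphism back. Let $\widetilde{\fc}$ and $\widetilde{\gc}$ denote the twin contractions, with induced domain sizes $s_f\leq q_f$ and $s_g\leq q_g$ and induced domain-weight vectors $\widetilde{\mathbf{1}}^f$ and $\widetilde{\mathbf{1}}^g$ whose entries are the twin-class sizes. Define $\widetilde\varphi(i)$ and $\widetilde\psi(i)$ to be the indices of the twin classes of $\varphi(i)$ and $\psi(i)$ respectively.

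The key preparatory step is to promote the unpinned identity $Z_{\widetilde\fc,\widetilde{\mathbf{1}}^f}(K_{\fc\to\widetilde\fc})=Z_\fc(K)$ already recorded in the preliminaries to the pinned form
\[
    Z_{\widetilde\fc,\widetilde{\mathbf{1}}^f}^{\widetilde\varphi}\bigl(K_{\fc\to\widetilde\fc}\bigr) = Z_\fc^\varphi(K) \quad \text{for all } K\in\plisimp[\fc;k],
\]
and similarly for $\gc$. The justification is combinatorial: each contracted assignment $\widetilde\phi$ extending $\widetilde\varphi$ lifts to exactly $\widetilde{\mathbf{1}}^f_{\widetilde\phi}/\widetilde{\mathbf{1}}^f_{\widetilde\varphi}=\prod_{v\text{ unlabeled}}|I_{\widetilde\phi(v)}|$ original assignments $\phi$ extending $\varphi$ (a twin-class representative chosen at each unlabeled variable), and each lift produces the same product of constraint evaluations since $\widetilde{F_j}(\widetilde{\vx})=F_j(\vx)$ by definition of twin contraction. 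Combining this identity for $\fc$ and $\gc$ with the elementary $(K_{\fc\to\gc})_{\gc\to\widetilde\gc}=(K_{\fc\to\widetilde\fc})_{\widetilde\fc\to\widetilde\gc}$, and using that every $K'\in\plisimp[\widetilde\fc;k]$ arises as some $K_{\fc\to\widetilde\fc}$, converts the hypothesis of the corollary into the hypothesis of \autoref{thm:nowellbalanced} applied to $(\widetilde\fc,\widetilde{\mathbf{1}}^f)$ and $(\widetilde\gc,\widetilde{\mathbf{1}}^g)$ with pinnings $\widetilde\varphi,\widetilde\psi$.

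Assuming WLOG that $s_f\geq s_g$ (otherwise swap $\fc$ and $\gc$ and invert the resulting permutation), \autoref{thm:nowellbalanced} then yields $s_f=s_g=:s$ and a domain-weighted isomorphism $\widetilde\sigma:[s]\to[s]$ from $(\widetilde\fc,\widetilde{\mathbf{1}}^f)$ to $(\widetilde\gc,\widetilde{\mathbf{1}}^g)$ with $\widetilde\sigma\circ\widetilde\varphi=\widetilde\psi$. The domain-weight condition forces $|I_i^f|=|I_{\widetilde\sigma(i)}^g|$ for every $i$, so in particular $q_f=\sum_i|I_i^f|=\sum_i|I_{\widetilde\sigma(i)}^g|=q_g$. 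I would lift $\widetilde\sigma$ to $\sigma:[q]\to[q]$ by choosing an arbitrary bijection $I_i^f\to I_{\widetilde\sigma(i)}^g$ for each $i\in[s]$; since $\sigma$ maps twin classes to twin classes compatibly with $\widetilde\sigma$, we get $F_j(\vx)=\widetilde{F_j}(\widetilde{\vx})=\widetilde{G_j}(\widetilde\sigma(\widetilde{\vx}))=G_j(\sigma(\vx))$, so $\sigma$ is an isomorphism from $\fc$ to $\gc$. Finally, $\widetilde\sigma\circ\widetilde\varphi=\widetilde\psi$ means $\sigma(\varphi(i))$ lies in the twin class of $\psi(i)$, so $\psi'(i):=\sigma(\varphi(i))$ is a twin of $\psi(i)$ and $\psi'=\sigma\circ\varphi$ as required.

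The main obstacle will be the preparatory pinned-partition-function identity, where one must carefully track how the twin-class-sized weights of $\widetilde{\mathbf{1}}^f$ and $\widetilde{\mathbf{1}}^g$ compensate for the many-to-one collapse of original assignments onto contracted ones when some variables are pinned; once this identity is in hand, the rest is a direct application of \autoref{thm:nowellbalanced} together with a routine bijective lifting of $\widetilde\sigma$ to $\sigma$.
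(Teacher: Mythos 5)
Your proposal is correct and follows essentially the same route as the paper: the paper's proof simply defers to the argument of Corollary 6.2 of \cite{homomorphism}, which is exactly your reduction -- twin-contract $\fc$ and $\gc$, transfer the pinned partition-function identity to the contractions equipped with twin-class-size weights, apply \autoref{thm:nowellbalanced}, and lift the resulting domain-weighted isomorphism back via bijections between equal-sized twin classes. Your write-up just makes explicit the details (the pinned contraction identity and the lifting) that the paper leaves to the cited reference.
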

\begin{proof}
    The constraint function set twin-contraction procedure $\fc \mapsto \widetilde{\fc}$ described in
    \autoref{sec:preliminaries} is a generalization of and satisfies the same properties as the
    $\ff$-weighted graph contraction $H \mapsto \widetilde{H}$ in \cite{homomorphism}.
    Hence the proof is again a simple generalization of the proof of \cite[{Corollary 6.2}]{homomorphism},
    where we use \autoref{thm:nowellbalanced}, $\fc$, $\gc$, $[q_f]$, $[q_g]$, and $\plisimp[\fc;k]$
    in place of \cite[{Theorem 3.1}]{homomorphism}, $H$, $H'$, $V(H)$, $V(H')$, and
    $\mathcal{PLG}^{\text{simp}}[k]$, respectively.
\end{proof}

Finally, we have the following result for ordinary (unlabeled) \#CSP instances, a slightly stronger
version of \autoref{thm:mainresult}. Say an unlabeled 
\#CSP$(\fc)$ instance $K$ is \emph{simple} if $K \in \plisimp[\fc;0]$ (equivalently, the corresponding
bipartite variable-constraint incidince graph has no multiedges and the multiplicity of every
constraint in $C$ is 1 up to permutation of its variable order).
\begin{corollary}
    \label{cor:mainresult}
    Let $\fc$ and $\gc$ be compatible constraint function sets. Then $\fc \cong \gc$ if and only if
    $Z_{\fc}(K) = Z_{\gc}(K_{\fc\to\gc})$
    for every simple \#CSP$(\fc)$ instance $K$.
\end{corollary}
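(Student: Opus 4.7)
The plan is to derive \autoref{cor:mainresult} essentially as an immediate specialization of \autoref{cor:twins} to the case $k = 0$, i.e., to unlabeled instances. The only new content is the forward direction and the unpacking of the $k=0$ specialization.

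For the forward direction, assume $\sigma \in \sb_{q}$ is an isomorphism from $\fc$ to $\gc$. For any \#CSP$(\fc)$ instance $K = (V,C)$, I would reindex the sum defining $Z_{\gc}(K_{\fc\to\gc})$ by the bijection $\phi \mapsto \sigma^{-1} \circ \phi$, and then use $F(\vx) = G(\sigma(\vx))$ for each corresponding pair $F \in \fc, G \in \gc$ to transform each summand. The product of constraint evaluations in $Z_{\gc}(K_{\fc\to\gc})$ under $\phi$ equals that in $Z_{\fc}(K)$ under $\sigma^{-1} \circ \phi$, giving the equality summand-by-summand; no simplicity hypothesis is needed on $K$.

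For the backward direction, I would apply \autoref{cor:twins} with $k = 0$, so that $\varphi: [0] \to [q_f]$ and $\psi: [0] \to [q_g]$ are both the empty function. Unpacking the definition of $Z_{\fc, \alpha}^{\psi}$ with $\psi$ empty and $\alpha = \mathbf{1}$ gives $Z_{\fc}^{\varphi}(K) = Z_{\fc}(K)$: the normalizing factor $\alpha_{\psi}$ is the empty product $1$, and ``$\phi$ extends $\psi$'' is vacuous. By definition, $\plisimp[\fc;0]$ is precisely the set of simple unlabeled \#CSP$(\fc)$ instances (the ``no constraint contains only labeled variables'' clause is automatic since there are no labeled variables at all). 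Thus the hypothesis of \autoref{cor:mainresult} is literally the hypothesis of \autoref{cor:twins} at $k = 0$, and the conclusion yields $q_f = q_g = q$ together with an isomorphism $\sigma: [q] \to [q]$ between $\fc$ and $\gc$; the auxiliary condition $\psi' = \sigma \circ \varphi$ is vacuous for the empty functions. Hence $\fc \cong \gc$.

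I do not anticipate any real obstacle, since the heavy lifting has already been done in \autoref{thm:nowellbalanced} and \autoref{cor:twins}. The only minor subtlety is the asymmetric assumption $q_f \geq q_g$ inherited from \autoref{thm:nowellbalanced}; if not absorbed into \autoref{cor:twins}, I would handle it by WLOG, noting that the hypothesis $Z_{\fc}(K) = Z_{\gc}(K_{\fc\to\gc})$ for all simple $K \in \plisimp[\fc;0]$ is symmetric under swapping $\fc$ and $\gc$, since the map $K \mapsto K_{\fc\to\gc}$ is a bijection $\plisimp[\fc;0] \to \plisimp[\gc;0]$ whose inverse is $K' \mapsto K'_{\gc\to\fc}$.
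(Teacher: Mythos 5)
Your proposal is correct and matches the paper's proof: the paper likewise obtains the backward direction as the $k=0$ case of \autoref{cor:twins} (with the forward direction being the routine relabeling observation already noted after \autoref{def:iso}). Your extra care about the $q_f \geq q_g$ asymmetry and the unpacking of $\plisimp[\fc;0]$ is fine but not needed beyond what the paper's one-line argument already uses.
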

\begin{proof}
    We only need the backward direction, which is the $k=0$ case of \autoref{cor:twins}.
\end{proof}

The next observation is a generalization of \cite[{Remark 2}]{homomorphism}.
\begin{remark}
    \label{rem:constructive}
    For compatible constraint function sets $\fc$ and $\gc$ with common domain $[q]$
    and $\varphi, \psi: [k] \to [q]$, \autoref{thm:nowellbalanced}
    asserts that if there is no isomorphism $\sigma$ between $\fc$ and $\gc$ satisfying 
    $\psi = \sigma \circ \varphi$, then there is some witness instance $K \in \plisimp[\fc;k]$ such that
    $Z^{\varphi}_{\fc}(K) \neq Z^{\psi}_{\gc}(K_{\fc\to\gc})$. The proofs of 
    \autoref{lem:balanced} and \autoref{thm:nowellbalanced} provide an explicit finite list of
    instances in $\plisimp[\fc;k]$ guaranteed to contain such a witness. This finite list is constructed
    as follows. 
    In the proof of \autoref{thm:nowellbalanced} (see \cite{homomorphism}), we
    extend $\varphi$ to a well-balanced map with domain $[\ell]$, where 
    $\ell \leq k + 2nq^{2n-1}$ ($n$ is the maximum arity among functions in $\fc$). Let
    \begin{align*}
        S = \{(V_1,C_{\chi}) \mid \chi \in R\} 
        &\cup \{(V_2,C_{\chi_1^{n_F}}) \mid F \in \fc, \chi_1^{n_F} \in R\} \\
        &\cup \{(V_3,C_{\chi_1^{n_F-1}}) \mid F \in \fc, \chi_1^{n_F-1} \in R\}
        \subset \plisimp[\fc;\ell]
    \end{align*}
    be the (finite) set of all \#CSP$(\fc)$ instances constructed in the three steps of the proof of
    \autoref{lem:balanced}. 
    The proof of \autoref{thm:nowellbalanced} constructs the finite set 
    \[
        \mathcal{P} = \left\{\prod_{K \in S} K^{h_K} \mid \text{each } 0 \leq h_K < 2q^{\ell}\right\}
        \subset \plisimp[\fc;\ell],
    \]
    where $K^{h_K}$ is a product in $\plisimp[\fc;\ell]$. The proof shows that there is an isomorphism
    $\sigma$ between $\fc$ and $\gc$ satisfying $\psi = \sigma \circ \varphi$ if and only if
    $Z_{\fc}^{\varphi}(K) = Z_{\gc}^{\psi}(K_{\fc\to\gc})$ for every 
    $K \in \pi_{[k]}(\mathcal{P}) \subset \plisimp[\fc;k]$, where $\pi_{[k]}: \plisimp[\fc;\ell] \to \plisimp[\fc;k]$
    erases the labels $k+1,\ldots,\ell$.
\end{remark}

\section{The Intertwiner Proof}
\label{sec:quantum}

In this section, we give an alternate proof of \autoref{thm:mainresult} for the case $\ff = \mathbb{C}$.
Throughout this section, we also assume constraint function sets $\fc$ over $\c$ are
\emph{conjugate closed}, meaning $F \in \fc \iff \overline{F} \in \fc$, where $\overline{F}$ is the
entrywise conjugate of $F$. In particular, any set of real-valued constraint functions is
conjugate-closed.

The following construction `flattens' a constraint function into a matrix.
\begin{definition}[$F^{m,d}$, $f$]
    For constraint function $F$ of domain $[q]$ and arity $n$ and any $m,d \geq 0, m+d = n$, define
    $F^{m,d} \in \ff^{q^m \times q^d}$ by $F^{m,d}_{x_1\ldots x_m,x_n \ldots x_{m+1}} = F(\vx)$,
    where $x_1 \ldots x_m \in \mathbb{N}$ is the base-$q$ integer with most significant digit
    $x_1$, and similarly for $x_n \ldots x_{m+1}$. Abbreviate $f := F^{n,0} \in \ff^{q^n}$, called the
    \emph{signature vector} of $F$.
\end{definition}
Note that the bits of the column index of $F^{m,d}$ are reversed. This is done so that the definition
matches \autoref{def:sigmatrix} of a gadget signature matrix below.

\subsection{Holant Problems and Gadgets}

The proof is carried out in the \emph{Holant} framework, a generalization of \#CSP.
Like a \#CSP problem, a Holant problem $\holant(\mathcal{F})$ is
is parameterized by a set $\mathcal{F}$ of constraint functions, all on the same domain $V(\fc)$,
called \emph{signature functions} or \emph{signatures}.
The input to $\holant(\mathcal{F})$ is a \emph{signature grid} 
$\Omega$, which consists of an underlying multigraph with vertex set $V$ and edge set $E$, along with an
assignment to each $v \in V$ a signature $F_v \in \mathcal{F}$ of arity $\deg(v)$.
The incident edges $E(v)$ to $v$ are given an order and serve as the input variables to $F_v$, 
taking values in $V(\fc)$.
The output on input $\Omega$ is
\begin{equation}\label{eqn:def-holant}
    \holant_{\Omega}(\mathcal{F}) = \sum_{\sigma: E \to V(\fc)} \prod_{v \in V} F_v(\sigma |_{E(v)}),
\end{equation}
where $F_v(\sigma |_{E(v)})$ is the evaluation of $F_v$ on the ordered tuple $\sigma |_{E(v)}$, the
restriction of $\sigma$ to $E(v)$.
For example, counting perfect matchings or proper edge colorings are expressed
is expressed by assigning the {\sc Exact-One} or {\sc Disequality} function to each vertex, respectively.
For sets $\fc$ and $\gc$ of signatures define the problem $\holant(\fc \,|\, \gc)$,
which takes as input a signature grid with a bipartite underlying
multigraph with bipartition $V = V_1 \cup V_2$ such that the vertices in $V_1$ and $V_2$ are assigned
signatures from $\mathcal{F}$ and $\gc$, respectively.

We next define some particular constraint functions that we will use throughout this section.
\begin{definition}[$E_n, E^{m,d}$, $\eq$]
    \label{def:e}
    Define the $n$-ary \emph{equality} constraint function $E_n$ by $E_n(x_1,\ldots,x_n) = 1$ if $x_1 = \ldots = x_n$, and 0 otherwise.
    Write $E^{m,d} := (E_n)^{m,d}$, as we must have $m+d = n$.
    Define $\eq = \bigcup_n E_n$. 
\end{definition}

Let $\fc$ be a set of constraint
functions. To each \#CSP$(\fc)$ instance $K = (V,C)$ we associate a signature grid $\Omega_K$ in the context of
$\holant(\fc \,|\,  \mathcal{EQ})$ defined as follows: 
For every constraint $c \in C$, if $c$ applies function $F$ of arity $n$, create a degree-$n$ vertex 
$u_c$ assigned $F$, called a \emph{constraint vertex}.
For each variable $v \in V$, if $v$ appears in constraints $C_v \subseteq C$, 
create a degree-$|C_v|$ vertex $u_v$, called an \emph{equality vertex}, assigned $E_{|C_v|}$, and edges $(u_v, u_c)$ for every $c \in C_v$
(if $v$ appears in no constraints, the corresponding vertex is isolated). Assign the order of edges incident
to $u_c$ to match the order of variables in $c$.
Any edge assignment $\sigma$ must assign all edges incident to an equality vertex the same 
value (or else the term corresponding to $\sigma$ is 0), so we can view $\sigma$ as \#CSP variable assignent.
Hence $Z_{\fc}(K) = \holant_{\Omega_K}(\fc \mid \mathcal{EQ})$.

\begin{definition}[Gadget, $T(\k)$, $\gk(k,\ell)$, $\gk_{\fc}(k,\ell)$, $\gk_{\fc}$]
    \label{def:sigmatrix}
    A \emph{gadget} is a Holant signature grid equipped with an ordered set of dangling edges (edges with only one endpoint), defining external variables.

    Let $\k$ be a gadget with $n$ dangling edges and containing signatures of domain size $q$. 
    For any $k,\ell \geq 0$, $\ell + k = n$,
    define $\k$'s $(k,\ell)$-\emph{signature matrix}
    $T(\k) \in \c^{q^k \times q^{\ell}}$ by setting $T(\k)_{\vx,\vy}$ to be the Holant value
    when the first $k$ dangling edges (called \emph{output} dangling edges) are assigned $x_1,\ldots,x_k$ and the last $\ell$
    dangling edges (called \emph{input} dangling edges) are assigned $y_{\ell},\ldots,y_1$. Draw the output/input
    dangling edges to the left/right of the gadget, respectively, in cyclic order (outputs from top to
    bottom and inputs from bottom to top).

    Let $\gk(k,\ell)$ be the collection of all gadgets with $k$ output and $\ell$ input dangling edges,
    and $\gk_{\fc}(k,\ell) \subset \gk(k,\ell)$ be the subcollection of gadgets in the context of
    $\holant(\fc \mid \eq)$ and with all dangling edges incident to equality vertices.
    Let $\gk_{\fc} = \bigcup_{k,\ell} \gk_{\fc}(k,\ell)$.
\end{definition}
$\k$'s input dangling edges receive their inputs in reverse order in the definition of
$T(\k)$. This is done so that the output and input dangling edges both receive their inputs in order
from top to bottom, so that the dangling edges merged in the composition operation below line up
when we draw the gadgets being composed.
\begin{definition}[Gadget $\circ, \otimes, *$]
    \label{def:gadgetops}
    \quad
    \begin{itemize}
        \item Given $\k_1 \in \gk(j,k), \k_2 \in \gk(k,\ell)$, define the composition $\k_1 \circ \k_2
        \in \gk(j,\ell)$ by placing $\k_2$ to the right of $\k_1$, and merging the $i$th
        input dangling edge of $\k_1$ with the $k-(i-1)$st output dangling edge of $\k_2$, for $i \in [k]$.
        If composition makes vertices assigned 
        $E_a, E_b \in \eq$ adjacent, contract the edge between them and assign the resulting merged vertex
        $E_{a+b-2}$. This does not change the Holant value.
        \item For gadgets $\k_1 \in \gk(k_1,\ell_1), \k_2 \in \gk(k_2,\ell_2)$, define the tensor product
        $\k_1 \otimes \k_2 \in \gk(k_1+k_2,\ell_1 + \ell_2)$ by
        taking the disjoint union of the multigraphs underlying $\k_1$ and $\k_2$, placing
        $\k_1$ above $\k_2$.
        
        \item For $\k \in \gk(k,\ell)$, define the conjugate transpose $\k^* \in \gk(k, \ell)$ by reflecting
            $\k$s underlying multigraph horizontally, and replacing every signature $F$ with $\overline{F}$.
    \end{itemize}
\end{definition}
It is well known applying the $\circ, \otimes, *$ operations to gadgets corresponds to applying these operations to their
signature matrices. See e.g. \cite{cai_chen_2017}.

A $(k+\ell)$-labeled \#CSP$(\fc)$ instance $K \in \pli[\fc;k+\ell]$
corresponds to a gadget $\k \in \gc_{\fc}(k,\ell)$ with dangling edges incident to the equality vertices
constructed from the labeled variables.
For a map $\psi: [k+\ell] \to V(\fc)$ assigning the labeled variables 
$x_1,\ldots,x_k,y_1,\ldots,y_{\ell}$ (i.e. $\psi([k+\ell]) := (\psi(1),\ldots,\psi(k+\ell)) = (x_1,\ldots,x_k,y_1,\ldots,y_{\ell})$), we have
$T_{\fc}(\k)_{\vx,\vy} = Z^{\psi}(K)$, since giving an equality vertex an input $x$ along a dangling
edge forces all of its adjacent edges to take value $x$, pinning the corresponding variable to $x$.

\begin{definition}[$\e^{m,d}, \ii, \f$]
    For $m,d \geq 0$, let $\e^{m,d}$ be the gadget consisting of a single vertex, assigned $E_{m+d}$,
    with $m$ output and $d$ input dangling edges. Define $\ii := \e^{1,1}$.

    For $n$-ary signature function $F$ let $\f$ be the gadget consisting of
    a degree-$n$ vertex assigned $F$ and $n$ output dangling edges, with the $i$th dangling edge
    serving as the $i$th input to $F$. See \autoref{fig:decompose} for illustrations.
\end{definition}
Since the signature $E_{m+d}$ is symmetric in the order of its inputs, we do not have to specify which
input to $E_{m+d}$ each dangling edge corresponds to. Observe that 
\[
    T(\e^{m,d}) = E^{m,d} \text{ and } T(\f) = F^{n_F,0} = f.
\]

\begin{definition}[$\s_{\sigma}$, $S_{\sigma}$, $\s$, $S$]
    For permutation $\sigma \in \sb_k$, let $\s_{\sigma} \in \gk(k,k)$ be the gadget formed from $\ii^{\otimes k}$ by
    permuting the dangling ends of the input dangling edges according to $\sigma$ -- that is, the
    $i$th output dangling edge is incident to the same $E_2$ vertex as the $\sigma(i)$th input dangling
    edge. Here we consider both input and output dangling edges in top-to-bottom order, as we do for
    signature matrices.

    Define a $2k$-ary constraint function $S_{\sigma}$ by 
    \[
        S_{\sigma}(x_1^k,y_1^k) = \begin{cases} 1 & x_i = y_{\sigma(i)} \text{ for all } i \in [k] \\
            0 & \text{otherwise} \end{cases}.
    \]
    Then we have $T(\s_{\sigma}) = S_{\sigma}^{k,k}$.
    We will make particular use of $\s_{(1\ 2)} \in \gk(2,2)$, so we abbreviate $\s := \s_{(1\ 2)}$
    and $S := S_{(1\ 2)}$ (so that $S^{2,2} = T(\s)$).
\end{definition}
We will compose $\s_{\sigma}$ with other gadgets to permute their dangling edges.
We generally treat a gadget containing $E_2$ vertices as equal to the gadget created by erasing these vertices
from the edge they lie on, since this has no effect on the Holant value. Hence we can also view
$\s_{\sigma}$ as a gadget composed solely of two-sided dangling edges, with the $i$th output and
$\sigma(i)$th input dangling edges being the same edge, or as a braid where we ignore the crossing order.
Indeed, analogous to generating the braid group by crossing adjacent strands, we can construct any $\s_{\sigma}$ using only $\ii$ and $\s$:
\begin{lemma}
    \label{lem:permute}
    For any $k \in \n$ and $\sigma \in S_k$, we have
    $\s_{\sigma} \in \tcwdn{\ii,\s}$.
\end{lemma}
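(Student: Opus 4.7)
The plan is to lift the classical generation of $S_k$ by adjacent transpositions to the gadget level. Write $\tau_i = (i,\ i+1) \in S_k$ for $i \in [k-1]$. Since $\{\tau_1, \ldots, \tau_{k-1}\}$ generates $S_k$, it suffices to show (i) each $\mathbf{S}_{\tau_i}$ lies in $\tcwdn{\mathbf{I}, \mathbf{S}}$, and (ii) an appropriate iterated composition of the $\mathbf{S}_{\tau_{i_j}}$'s realizes $\mathbf{S}_{\sigma}$ whenever $\sigma$ factors as the corresponding word in the $\tau_{i_j}$'s. The trivial cases $k = 0, 1$ give $\mathbf{S}_\sigma$ equal to the empty gadget or $\mathbf{I}$, and the base case $k = 2$ gives $\mathbf{S}_{(1\ 2)} = \mathbf{S}$; so assume $k \geq 2$.

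For (i), I would verify the identity
\[
    \mathbf{S}_{\tau_i} \;=\; \mathbf{I}^{\otimes (i-1)} \otimes \mathbf{S} \otimes \mathbf{I}^{\otimes (k-i-1)}
\]
directly from the definition of $\otimes$ in \autoref{def:gadgetops} (with $\mathbf{I}^{\otimes 0}$ interpreted as the empty gadget for the boundary cases $i = 1$ and $i = k-1$). Vertically stacking $i-1$ copies of $\mathbf{I}$, then one copy of $\mathbf{S}$, then $k-i-1$ copies of $\mathbf{I}$ produces a gadget in $\gk(k,k)$ whose $j$th output dangling edge is joined via $E_2$ vertices to the $\tau_i(j)$th input dangling edge, which by \autoref{def:sigmatrix} is exactly $\mathbf{S}_{\tau_i}$. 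This expression uses only $\otimes$ applied to $\mathbf{I}$ and $\mathbf{S}$, so $\mathbf{S}_{\tau_i} \in \tcwdn{\mathbf{I}, \mathbf{S}}$.

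For (ii), I would use the fact (stated in the paper and in \cite{cai_chen_2017}) that gadget composition corresponds to signature matrix multiplication: $T(\mathbf{K}_1 \circ \mathbf{K}_2) = T(\mathbf{K}_1) T(\mathbf{K}_2)$. A short direct computation with the indicator matrices $T(\mathbf{S}_{\sigma}) = S_{\sigma}^{k,k}$ shows that $T(\mathbf{S}_{\sigma}) T(\mathbf{S}_{\tau}) = T(\mathbf{S}_{\rho})$ for an appropriate $\rho$ determined by $\sigma$ and $\tau$. Writing an arbitrary $\sigma \in S_k$ as a word in the adjacent transpositions, the composition of the corresponding $\mathbf{S}_{\tau_{i_j}}$ gadgets in the appropriate order is then $\mathbf{S}_{\sigma}$, completing the proof.

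The main obstacle is the bookkeeping in step (ii): the column indexing convention for signature matrices in \autoref{def:sigmatrix} reverses the order of the input arguments, so one must carefully track which group product $T(\mathbf{S}_{\sigma}) T(\mathbf{S}_{\tau})$ corresponds to (namely $T(\mathbf{S}_{\sigma\tau})$ vs.\ $T(\mathbf{S}_{\tau\sigma})$, possibly with an index reversal). This is only a routine computation, and regardless of which convention applies, both $\sigma$ and $\sigma^{-1}$ are expressible as products of adjacent transpositions, so either answer still yields $\mathbf{S}_{\sigma} \in \tcwdn{\mathbf{I}, \mathbf{S}}$.
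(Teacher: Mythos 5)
Your proposal is correct and is essentially the paper's own argument: decompose $\sigma$ into adjacent transpositions and realize each as $\ii^{\otimes a-1}\otimes\s\otimes\ii^{\otimes k-a-1}$, then compose; the paper writes this directly as $\s_{\sigma}=\bigcirc_{i=1}^{s}(\ii^{\otimes a_i-1}\otimes\s\otimes\ii^{\otimes k-a_i-1})$. Your extra care about the composition-order/index-reversal convention is a harmless refinement of the same proof, and your observation that either convention suffices (since $\sigma^{-1}$ is also a product of adjacent transpositions) correctly closes that bookkeeping point.
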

\begin{proof}
    Decompose $\sigma$ into adjacent transpositions as 
    $\sigma = (a_1\ a_1+1)(a_2\ a_2+1) \ldots (a_s\ a_s+1)$.
    Then, since $\s$ swaps the position of adjacent dangling edges, we have
    \[
        \s_{\sigma} = \bigcirc_{i=1}^s (\ii^{\otimes a_i-1} \otimes \s \otimes \ii^{\otimes k - a_i - 1}).
    \]
    See \autoref{fig:permute} for an illustration.
\end{proof}

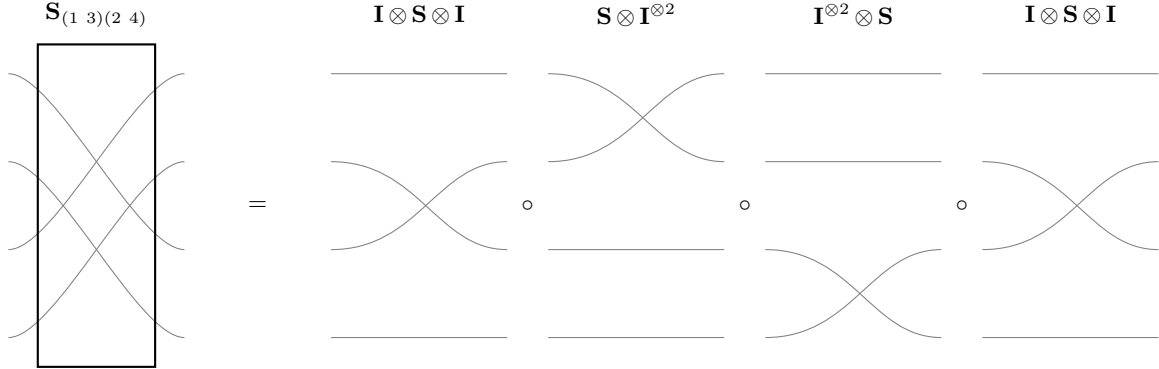
\begin{figure}[ht!]
    \center
    \begin{tikzpicture}[scale=.78]
\tikzstyle{every node}=[font=\small]
\GraphInit[vstyle=Classic]
\SetUpEdge[style=-]
\SetVertexMath

\def\vgap{1.5} 
\def\ay{3*\vgap/2}
\def\by{\vgap/2}
\def\cy{-\vgap/2}
\def\dy{-3*\vgap/2}
\def\wid{3} 
\def\rgap{0.5} 
\def\cgap{0.7} 

\def\slx{0} 
\draw[thin,color=gray](\slx,\ay)..controls (\slx+0.8,\ay) and (\slx+\wid-0.8,\cy)..(\slx+\wid,\cy);
\draw[thin,color=gray](\slx,\by)..controls (\slx+0.8,\by) and (\slx+\wid-0.8,\dy)..(\slx+\wid,\dy);
\draw[thin,color=gray](\slx,\cy)..controls (\slx+0.8,\cy) and (\slx+\wid-0.8,\ay)..(\slx+\wid,\ay);
\draw[thin,color=gray](\slx,\dy)..controls (\slx+0.8,\dy) and (\slx+\wid-0.8,\by)..(\slx+\wid,\by);
\draw[color=black,thick] (\slx+\rgap,\ay+\rgap) rectangle (\slx+\wid-\rgap,\dy-\rgap);

\def\six{\slx + \wid + 2.5} 
\foreach \x/\sy/\iya/\iyb in {
    \six/\by/\ay/\dy, \six+\wid+\cgap/\ay/\cy/\dy, \six+2*\wid+2*\cgap/\cy/\ay/\by, \six+3*\wid+3*\cgap/\by/\ay/\dy}
{
    \draw[thin, color=gray] (\x,\iya) -- (\x+\wid,\iya);
    \draw[thin, color=gray] (\x,\iyb) -- (\x+\wid,\iyb);
    \draw[thin, color=gray] (\x,\sy) .. controls (\x+\wid/2,\sy) and (\x+3*\wid/5,\sy-\vgap) .. (\x+\wid,\sy-\vgap);
    \draw[thin, color=gray] (\x,\sy-\vgap) .. controls (\x+\wid/2,\sy-\vgap) and (\x+3*\wid/5,\sy) .. (\x+\wid,\sy);
};

\node at (\slx + \wid + 1.25, 0) {$=$};
\node at (\six + \wid + \cgap/2, 0) {$\circ$};
\node at (\six + 2*\wid + 3*\cgap/2, 0) {$\circ$};
\node at (\six + 3*\wid + 5*\cgap/2, 0) {$\circ$};

\def\labely{\ay+1}
\node at (\slx+\wid/2,\labely) {$\s_{(1\ 3)(2\ 4)}$};
\node at (\six+\wid/2,\labely) {$\ii \otimes \s \otimes \ii$};
\node at (\six+3*\wid/2+\cgap,\labely) {$\s \otimes \ii^{\otimes 2}$};
\node at (\six+5*\wid/2+2*\cgap,\labely) {$\ii^{\otimes 2} \otimes \s$};
\node at (\six+7*\wid/2+3*\cgap,\labely) {$\ii \otimes \s \otimes \ii$};

\end{tikzpicture}
    \caption{Illustrating the decomposition of $\s_{\sigma}$ given by \autoref{lem:permute},
    with $\sigma = (1\ 3)(2\ 4) = (2\ 3)(1\ 2)(3\ 4)(2\ 3)$.}
    \label{fig:permute}
\end{figure}

\begin{theorem}
    \label{thm:generategk}
    For any conjugate-closed constraint function set $\fc$, 
    $\gk_{\fc} = \tcwdn{\e^{1,0},\e^{1,2},\s,\{\f \mid F \in \fc\}}$.
\end{theorem}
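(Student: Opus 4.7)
The containment $\tcwdn{\e^{1,0},\e^{1,2},\s,\{\f \mid F \in \fc\}} \subseteq \gk_\fc$ is immediate: each generator lies in $\gk_\fc$ (using that $\fc$ is conjugate-closed so that $\f^*$ is again of the form $\mathbf{F'}$ for $F' = \overline{F} \in \fc$), and $\gk_\fc$ is closed under $\circ$, $\otimes$, and $*$. For the substantive reverse inclusion, the plan is to enrich the generating set with enough auxiliary gadgets to permit a clean vertex-by-vertex decomposition of an arbitrary $\k \in \gk_\fc$.

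The key auxiliary gadget is the identity wire $\ii = \e^{1,1}$, which I would construct as $\e^{1,2} \circ (\e^{1,2})^*$: joining two $E_3$ vertices by two parallel edges and summing the shared variables gives $\sum_{y_1,y_2}[x=y_1=y_2][y_1=y_2=z] = [x=z]$. From $\ii$, the cup $\e^{2,0} = (\e^{1,2})^* \circ \e^{1,0}$ (whose signature is the vector $[y=z]$) and the cap $\e^{0,2} = (\e^{2,0})^*$ follow by one-line computations. Using the rule that gluing an $E_a$ vertex to an $E_b$ vertex along a single edge produces an $E_{a+b-2}$ vertex, a straightforward induction on $m+d$ then produces $\e^{m,d}$ for every $m+d \geq 1$ inside the closure. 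Every permutation gadget $\s_\sigma$ is already in the closure by \autoref{lem:permute}.

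Given $\k \in \gk_\fc(k,\ell)$ with constraint vertices $v_1,\ldots,v_r$ assigned $F_1,\ldots,F_r \in \fc$ and equality vertices $w_1,\ldots,w_s$, let $c_j$ count the edges from $w_j$ to the constraint vertices and $o_j, i_j$ count the output and input dangling edges of $\k$ incident to $w_j$, so that $\sum_j o_j = k$, $\sum_j i_j = \ell$, and $\sum_j c_j = \sum_i n_{F_i}$. I would then exhibit the decomposition
\[
    \k = \s_{\sigma'} \circ \left( \bigotimes_{j=1}^{s} \e^{o_j,\, c_j + i_j} \right) \circ \s_\sigma \circ \left( \left( \bigotimes_{i=1}^{r} \f_i \right) \otimes \ii^{\otimes \ell} \right),
\]
where the bottom layer exposes every half-edge at a constraint vertex as a free output while passing the $\ell$ input dangling edges of $\k$ through identity wires; the inner permutation $\s_\sigma$ routes each constraint half-edge to the corresponding ``internal'' input of the equality gadget at its other endpoint in $\k$ and routes the $\ell$ identity-outputs to the ``external'' equality-gadget inputs at the input dangling edges; the middle tensor folds each equality vertex's edges into a single symmetric signature; and the outer $\s_{\sigma'}$ rearranges the resulting $k$ outputs into the order prescribed by $\k$. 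The main obstacle is the combinatorial bookkeeping in specifying $\sigma$ and $\sigma'$ from a fixed bijection between the half-edges of $\k$ and the ports of the tensor factors (further complicated by the reverse-order convention on input dangling edges); once this bijection is pinned down, correctness is forced by the standard correspondence between gadget operations and matrix operations on their signatures, so that summing over each edge in the composition reproduces the contraction at the corresponding edge of $\k$.
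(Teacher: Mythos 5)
Your proposal is correct and takes essentially the same route as the paper: derive $\ii$ and the general $\e^{m,d}$ from $\e^{1,2}$ and $\e^{1,0}$, get every $\s_{\sigma}$ from \autoref{lem:permute}, and then reassemble $\k$ as an equality-vertex layer composed through a permutation with a tensor of the $\f$ gadgets and identity wires, with outer permutations restoring the dangling-edge order. The only cosmetic difference is that the paper wires in the constraint vertices one at a time (one $\s_{\sigma_j}$ per constraint vertex) instead of your single layer $\bigl(\bigotimes_i \f_i\bigr) \otimes \ii^{\otimes \ell}$ with one global permutation, and it also covers the degenerate $\e^{0,0}$ (isolated equality vertex) case explicitly.
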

\begin{proof}
    Since $\fc$ is conjugate closed and $\e^{1,0}, \e^{1,2}$, and $\s$ are real-valued, the reverse inclusion $\supseteq$ is clear. To show the forward inclusion $\subseteq$, 
    first observe that $\ii = \e^{1,2} \circ (\e^{1,2})^* = \e^{1,2} \circ \e^{2,1}$,
    $\e^{m,d} = \bigcirc_{i=0}^{m-2} (\e^{2,1} \otimes \ii^{\otimes i}) \circ
    \bigcirc_{i=0}^{d-2} (\e^{1,2} \otimes \ii^{\otimes i})$
    for any $m,d \geq 2$, and
    $\e^{m,d} = \e^{m,1} \circ \bigcirc_{i=0}^{d-2} (\e^{1,2} \otimes \ii^{\otimes i})$
    for $m \in \{0,1\}$, $d \geq 2$. Also $\e^{0,0} = \e^{0,1} \circ \e^{1,0}$.
    Thus 
    \begin{equation}
        \label{eq:makeequality}
        \e^{m,d} \in \tcwdn{\e^{1,2},\e^{1,0}} \text{ for all } m,d \geq 0 
    \end{equation}
    (\eqref{eq:makeequality} is also a recontextualization of \cite[{Lemma 3.18}]{planar}). 

    Consider a
    $\holant(\fc \mid \eq)$ gadget $\k \in \gk_{\fc}(M,D)$. We will construct $\k$ from the
    fundamental gadgets. Suppose $\k$ contains $r$ equality vertices, 
    which we denote $e_1,\ldots,e_r$ in arbitrary order, and
    $s$ constraint vertices, denoted $c_1,\ldots,c_s$ in arbitrary order, with $c_j$ assigned signature
    $F_j \in \fc$.
    For $i \in [r]$, suppose vertex $e_i$ is
    incident to $m_i$ output and $d_i$ input dangling edges in $\k$, respectively, 
    and has degree $m_i + d_i + t_i$. Let $T = \sum_{i=1}^r t_i = \sum_{j=1}^s \deg(c_j)$, and we also
    have $M = \sum_{i=1}^r m_i$, $D = \sum_{i=1}^r d_i$, because by assumption all of $\k$'s dangling
    edges are incident to equality vertices.
    Let $\k_0 = \bigotimes_{i=1}^r \e^{m_i,d_i + t_i} \in \gk_{\fc}(M,D+T)$ and identify $e_i$ with
    the vertex in $\e^{m_i,d_i + t_i}$. By the bipartite structure of $\k$, 
    for all $k \in [\arity(F_1)] = [\deg(c_1)]$,
    the $k$th input edge to $c_1$ is incident to some equality vertex $e_{i_k}$,
    so $t_{i_k} > 0$. Thus, identifying $c_1$ with the vertex in $\f_1$, 
    there is a permutation $\sigma_1 \in \sb_{D+T}$ such that, in the gadget
    \[
        \k_1 = \k_0 \circ \s_{\sigma_1} \circ (\f_1 \otimes \ii^{\otimes D+T - \arity(F_1)}) \in \gk_{\fc}(M,D+T-\arity(F_1)),
    \]
    $c_1$'s $k$th incident edge is merged with the proper edge incident to $e_{i_k}$ for every
    $k \in [\arity(F_1)]$. 
    Similarly, for each $j \in [s]$, let $\sigma_j \in \sb_{D+T-\sum_{\ell=1}^{j-1} \arity(F_{\ell})}$ be the permutation that matches 
    $c_j$'s incident edges with the proper equality vertices. Then
    \[
        \k_s = \k_0 \circ \bigcirc_{j=1}^s \left(\s_{\sigma_j} \circ \big(\f_j \otimes \ii^{\otimes D+T - \sum_{\ell = 1}^{j}
        \arity(F_{\ell})}\big)\right)
        \in \gk_{\fc}(M,D),
    \]
    is a gadget with the same internal structure as $\k$, but with its input and output dangling edges
    permuted by some $\tau \in S_M$ and $\upsilon \in S_D$, respectively, relative to $\k$. Then
    $\k = \s_{\tau^{-1}} \circ \k_s \circ \s_{\upsilon}$.
    By \autoref{lem:permute} and \eqref{eq:makeequality}, we have 
    $\k \in \tcwdn{\e^{1,0},\e^{1,2},\s,\{\f \mid F \in \fc\}}$.
\end{proof}
See \autoref{fig:decompose} for an illustration. The proof of \autoref{thm:generategk} was inspired by
the proof sketch of \cite[{Theorem 8.4}]{planar}, which is roughly \autoref{thm:generategk} restricted
to unweighted graph homomorphism (the case where $\fc$ contains a single binary symmetric 0-1 valued constraint function).

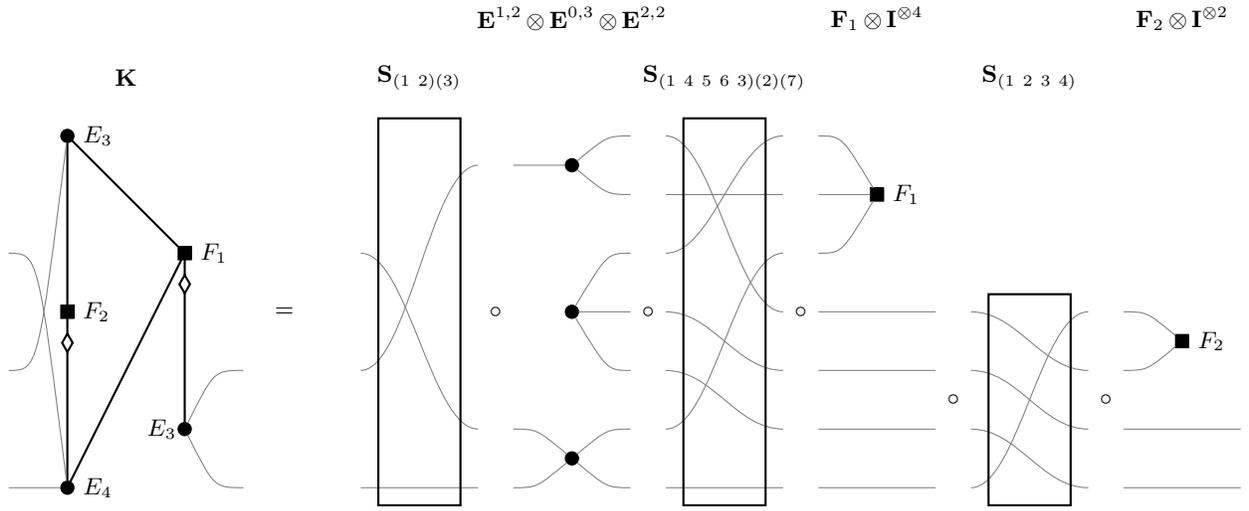
\begin{figure}[ht!]
    \begin{tikzpicture}[scale=.78]
    \tikzstyle{every node}=[font=\small]
    \GraphInit[vstyle=Classic]
    \SetUpEdge[style=-]
    \SetVertexMath

    \def\vgap{1} 
    \def\ay{3*\vgap}
    \def\by{2*\vgap}
    \def\cy{\vgap}
    \def\dy{0}
    \def\ey{-\vgap}
    \def\fy{-2*\vgap}
    \def\gy{-3*\vgap}

    \def\wlen{1} 
    \def\cgap{0.6} 
    \def\rgap{0.3} 

    \tikzset{VertexStyle/.style = {shape = rectangle, fill = black, minimum size = 5pt, inner sep=1pt, draw}}
    \Vertex[x=1,y=\dy,L={F_2}]{f2}
    \Vertex[x=3,y=\cy,L={F_1}]{f1}

    \tikzset{VertexStyle/.style = {shape = circle, fill = black, minimum size = 5pt, inner sep=1pt, draw}}
    \Vertex[x=1,y=\ay,L={E_3}]{e1}
    \Vertex[x=1,y=\gy,L={E_4}]{e2}
    \Vertex[x=3,y=\fy,L={E_3},Lpos=180,Ldist=-0.1cm]{e3}

    \DEdge{f2}{e2}
    \Edge(f2)(e1)
    \Edge(e1)(f1)
    \Edge(e2)(f1)
    \DEdge{f1}{e3}
    \draw[thin, color=gray] (0,\gy) -- (e2);
    \draw[thin, color=gray] (0,\cy) .. controls (0.5,\cy) .. (e2);
    \draw[thin, color=gray] (0,\ey) .. controls (0.5,\ey) .. (e1);
    \draw[thin, color=gray] (e3) .. controls (3.5,\ey) .. (4,\ey);
    \draw[thin, color=gray] (e3) .. controls (3.5,\gy) .. (4,\gy);

    \node at (4.7, 0) {$=$};

    \def\slx{5.4 + \cgap} 
    \draw[thin,color=gray](\slx,\ey)..controls (\slx+0.8,\ey) and (\slx+2*\wlen-0.8,\ay/2+\by/2)..(\slx+2*\wlen,\ay/2+\by/2);
    \draw[thin,color=gray](\slx,\cy)..controls (\slx+0.8,\cy) and (\slx+2*\wlen-0.8,\fy)..(\slx+2*\wlen,\fy);
    \draw[thin,color=gray](\slx,\gy) -- (\slx+2*\wlen,\gy);
    \draw[color=black,thick] (\slx+\rgap,\ay+\rgap) rectangle (\slx+2*\wlen-\rgap,\gy-\rgap);

    \def\ex{\slx+3*\wlen+\cgap}
    \node at (\ex-\cgap/2-\wlen, 0) {$\circ$};

    \Vertex[x=\ex,y=\ay/2+\by/2,NoLabel]{e12}
    \Vertex[x=\ex,y=\dy,NoLabel]{e22}
    \Vertex[x=\ex,y=\fy/2+\gy/2,NoLabel]{e32}
    \draw[thin, color=gray] (\ex-\wlen,\ay/2+\by/2) -- (e12);
    \draw[thin, color=gray] (e12) .. controls (\ex+\wlen-0.4,\ay) .. (\ex+\wlen,\ay);
    \draw[thin, color=gray] (e12) .. controls (\ex+\wlen-0.4,\by) .. (\ex+\wlen,\by);

    \draw[thin, color=gray] (e22) -- (\ex+\wlen,\dy);
    \draw[thin, color=gray] (e22) .. controls (\ex+\wlen-0.4,\cy) .. (\ex+\wlen,\cy);
    \draw[thin, color=gray] (e22) .. controls (\ex+\wlen-0.4,\ey) .. (\ex+\wlen,\ey);

    \draw[thin, color=gray] (e32) .. controls (\ex+\wlen-0.4,\fy) .. (\ex+\wlen,\fy);
    \draw[thin, color=gray] (e32) .. controls (\ex+\wlen-0.4,\gy) .. (\ex+\wlen,\gy);
    \draw[thin, color=gray] (e32) .. controls (\ex-\wlen+0.4,\fy) .. (\ex-\wlen,\fy);
    \draw[thin, color=gray] (e32) .. controls (\ex-\wlen+0.4,\gy) .. (\ex-\wlen,\gy);

    \def\sx{\ex + \wlen + \cgap} 
    \node at (\sx-\cgap/2, 0) {$\circ$};

    \draw[thin,color=gray](\sx,\ay)..controls (\sx+0.8,\ay) and (\sx+2*\wlen-0.8,\dy)..(\sx+2*\wlen,\dy);
    \draw[thin,color=gray](\sx,\cy)..controls (\sx+0.8,\cy) and (\sx+2*\wlen-0.8,\ay)..(\sx+2*\wlen,\ay);
    \draw[thin,color=gray](\sx,\fy)..controls (\sx+0.8,\fy) and (\sx+2*\wlen-0.8,\cy)..(\sx+2*\wlen,\cy);
    \draw[thin,color=gray](\sx,\dy)..controls (\sx+0.8,\dy) and (\sx+2*\wlen-0.8,\ey)..(\sx+2*\wlen,\ey);
    \draw[thin,color=gray](\sx,\ey)..controls (\sx+0.8,\ey) and (\sx+2*\wlen-0.8,\fy)..(\sx+2*\wlen,\fy);
    \draw[thin,color=gray](\sx,\by) -- (\sx+2*\wlen,\by);
    \draw[thin,color=gray](\sx,\gy) -- (\sx+2*\wlen,\gy);
    \draw[color=black,thick] (\sx+\rgap,\ay+\rgap) rectangle (\sx+2*\wlen-\rgap,\gy-\rgap);

    \def\fax{\sx+3*\wlen+\cgap} 
    \node at (\fax-\cgap/2-\wlen, 0) {$\circ$};

    \tikzset{VertexStyle/.style = {shape = rectangle, fill = black, minimum size = 5pt, inner sep=1pt, draw}}
    \Vertex[x=\fax,y=\by,L={F_1}]{f12}
    \tikzset{VertexStyle/.style = {shape = circle, fill = black, minimum size = 5pt, inner sep=1pt, draw}}
    \draw[thin, color=gray] (\fax-\wlen,\ay) .. controls (\fax-\wlen+0.4,\ay) .. (f12);
    \draw[thin, color=gray] (\fax-\wlen,\by) -- (f12);
    \draw[thin, color=gray] (\fax-\wlen,\cy) .. controls (\fax-\wlen+0.4,\cy) .. (f12);

    \draw[thin, color=gray] (\fax-\wlen,\dy) -- (\fax+\wlen,\dy);
    \draw[thin, color=gray] (\fax-\wlen,\ey) -- (\fax+\wlen,\ey);
    \draw[thin, color=gray] (\fax-\wlen,\fy) -- (\fax+\wlen,\fy);
    \draw[thin, color=gray] (\fax-\wlen,\gy) -- (\fax+\wlen,\gy);

    \def\ssx{\fax + \wlen + \cgap} 
    \node at (\ssx-\cgap/2, \ey/2+\fy/2) {$\circ$};

    \draw[thin,color=gray](\ssx,\dy)..controls (\ssx+0.8,\dy) and (\ssx+2*\wlen-0.8,\ey)..(\ssx+2*\wlen,\ey);
    \draw[thin,color=gray](\ssx,\ey)..controls (\ssx+0.8,\ey) and (\ssx+2*\wlen-0.8,\fy)..(\ssx+2*\wlen,\fy);
    \draw[thin,color=gray](\ssx,\fy)..controls (\ssx+0.8,\fy) and (\ssx+2*\wlen-0.8,\gy)..(\ssx+2*\wlen,\gy);
    \draw[thin,color=gray](\ssx,\gy)..controls (\ssx+0.8,\gy) and (\ssx+2*\wlen-0.8,\dy)..(\ssx+2*\wlen,\dy);
    \draw[color=black,thick] (\ssx+\rgap,\dy+\rgap) rectangle (\ssx+2*\wlen-\rgap,\gy-\rgap);

    \def\fbx{\ssx+3*\wlen+\cgap} 
    \node at (\fbx-\cgap/2-\wlen, \ey/2+\fy/2) {$\circ$};

    \tikzset{VertexStyle/.style = {shape = rectangle, fill = black, minimum size = 5pt, inner sep=1pt, draw}}
    \Vertex[x=\fbx,y=\dy/2 + \ey/2,L={F_2}]{f22}
    \tikzset{VertexStyle/.style = {shape = circle, fill = black, minimum size = 5pt, inner sep=1pt, draw}}
    \draw[thin, color=gray] (\fbx-\wlen,\dy) .. controls (\fbx-\wlen+0.4,\dy) .. (f22);
    \draw[thin, color=gray] (\fbx-\wlen,\ey) .. controls (\fbx-\wlen+0.4,\ey) .. (f22);

    \draw[thin, color=gray] (\fbx-\wlen,\fy) -- (\fbx+\wlen,\fy);
    \draw[thin, color=gray] (\fbx-\wlen,\gy) -- (\fbx+\wlen,\gy);

    \def\labely{\ay + 1}
    \node at (2,\labely) {$\k$};
    \node at (\slx+\wlen,\labely) {$\s_{(1\ 2)(3)}$};
    \node at (\ex,\labely+1) {$\e^{1,2} \otimes \e^{0,3} \otimes \e^{2,2}$};
    \node at (\sx+\wlen,\labely) {$\s_{(1\ 4\ 5\ 6\ 3)(2)(7)}$};
    \node at (\fax,\labely+1) {$\f_1 \otimes \ii^{\otimes 4}$};
    \node at (\ssx+\wlen,\labely) {$\s_{(1\ 2\ 3\ 4)}$};
    \node at (\fbx,\labely+1) {$\f_2 \otimes \ii^{\otimes 2}$};
\end{tikzpicture}
    \caption{Illustrating the \autoref{thm:generategk} decomposition of a $\k \in \gk(3,2)$.
        We draw dangling edges thinner than internal edges, and use circles for equality vertices
        and squares for constraint vertices. A diamond on an edge marks this edge as the first input
    to the incident constraint vertex, and inputs proceed counterclockwise.}
    \label{fig:decompose}
\end{figure}

\subsection{Intertwiner Spaces}
Let $G$ be a subgroup of the symmetric group $\sb_q$. We identify elements $\sigma \in G$ with the associated permutation matrix $P_{\sigma}\in \{0,1\}^{q \times q}$.
The $(k,\ell)$-\emph{intertwiner space} of $G$ is
\[
    C_G(k,\ell) = \{T \in \c^{q^k} \times \c^{q^{\ell}} \mid
    \forall \sigma \in G: P_{\sigma}^{\otimes k} T = T P_{\sigma}^{\otimes \ell}\}.
\]
Define $C_G = \bigcup_{k,\ell} C_G(k,\ell)$ to be the space of all intertwiners of $G$.
If $P_{\sigma}$ is a $q \times q$ permutation matrix, then, for vector $\vv \in \mathbb{C}^{q^n}$, 
$P_{\sigma}^{\otimes n} v$ is the vector
obtained by permuting $v$'s entries according to the natural action of $\sigma$ on $[q]^n$
(the action $\sigma(\vx) = (\sigma(x_1),\ldots,\sigma(x_n))$).
Hence two indices in $[q^n]$ are in the same orbit of the action of $G$ if and only if every $(n,0)$
intertwiner takes equal values on the two indices -- that is, for $\vx,\vy \in [q^n]$,
\begin{equation}
    \label{eq:orbitvector}
    \text{There exists a } \sigma \in G \text{ such that } \sigma(\vx) = \vy \text{ if and only if }
    v_{\vx} = v_{\vy} \text{ for every } \vv \in C_G(n,0)
\end{equation}
(to see the reverse direction, suppose there is no such $\sigma$ and consider the $\vv$ which is 1 on
the orbit containing $\vx$ and 0 elsewhere).

It is well-known (see e.g. \cite{banica_liberation_2009}) that for any $G \subset \sb_q$, $C_G$ is a
\emph{symmetric tensor category with duals}, meaning each $C_G(k,\ell)$ is a vector space over $\c$ and
$C_G$ is closed under
matrix multiplication, tensor product, and conjugate transpose, and satisfies 
$I = E^{1,1} \in C_{G}(1,1)$,
$E^{2,0} \in C_{G}(2,0)$, and
$S^{2,2} \in C_G(2,2)$.

The next result is a version of classical Tannaka-Krein duality, proved by Woronowicz in 
\cite{woronowicz_tannaka}, and expressed in this form in \cite{chassaniol_study_2019},
\cite{banica_liberation_2009}, and elsewhere.
It is the key result underlying our alternate proof of \autoref{thm:mainresult}
\begin{theorem}
    The mapping $G \mapsto C_G$ induces a bijection between subgroups $G \subset S_q$ and symmetric
    tensor categories with duals $C$ satisfying $C_{S_q} \subset C$.
    \label{thm:tannaka}
\end{theorem}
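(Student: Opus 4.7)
The plan is to exhibit the inverse map $C \mapsto G(C)$ defined by
\[
    G(C) = \{\sigma \in \sb_q : P_\sigma^{\otimes k} T = T P_\sigma^{\otimes \ell} \text{ for all } k, \ell \geq 0 \text{ and } T \in C(k, \ell)\},
\]
and verify the two round-trips $G(C_G) = G$ and $C_{G(C)} = C$ separately. First I would dispatch the well-definedness checks: the intertwining relation $P_\sigma^{\otimes k} T = T P_\sigma^{\otimes \ell}$ is preserved under matrix product, tensor product, and conjugate transpose, and $\ii$, $E^{2,0}$, $S^{2,2}$ tautologically intertwine $P_\sigma$ for every $\sigma$, so $C_G$ is a symmetric tensor category with duals containing $C_{\sb_q}$. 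Dually, $G(C)$ is closed under composition and inverse and hence is a subgroup of $\sb_q$.

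For $G(C_G) = G$, the inclusion $G \subseteq G(C_G)$ is immediate from the definition of $C_G$. The reverse uses \eqref{eq:orbitvector}: for every $G$-orbit $O \subseteq [q]^n$ the indicator vector $v^O$ lies in $C_G(n, 0)$, so any $\sigma \in G(C_G)$ must satisfy $P_\sigma^{\otimes n} v^O = v^O$, that is, $\sigma(O) = O$. Taking $n = q$ and the orbit of $(1, 2, \ldots, q)$ -- which is exactly $\{(\tau(1), \ldots, \tau(q)) : \tau \in G\}$ -- forces $\sigma \in G$.

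The main obstacle is the second round-trip $C_{G(C)} = C$, specifically the nontrivial inclusion $C_{G(C)} \subseteq C$; this is the substantive content of classical Tannaka-Krein duality. I would follow the standard reconstruction argument of Woronowicz~\cite{woronowicz_tannaka} (as presented in \cite{banica_liberation_2009, chassaniol_study_2019}): use the duals $E^{0,2}, E^{2,0} \in C$ to ``fold'' any $T \in C_{G(C)}(k, \ell)$ into a vector in $C_{G(C)}(k + \ell, 0)$, reducing to the case of $G(C)$-invariant vectors in $\c^{q^n}$; then show, by a finite averaging argument, that every $G(C)$-invariant vector is a linear combination of indicator vectors of $G(C)$-orbits on $[q]^n$; and finally show that each such orbit indicator can actually be synthesized inside $C$ via sums, compositions, and tensor products of elements of $C$ together with the generators of $C_{\sb_q}$. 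This last synthesis -- producing orbit indicators inside $C$, not merely inside the ambient space of all intertwiners -- is where the closure axioms of a symmetric tensor category with duals do the real work, and it is the crux of the whole proof.
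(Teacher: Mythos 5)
The first thing to note is that the paper does not prove \autoref{thm:tannaka} at all: it is imported as a known result, namely the classical version of Woronowicz's Tannaka--Krein duality \cite{woronowicz_tannaka}, with the formulation taken from \cite{chassaniol_study_2019, banica_liberation_2009}. So there is no in-paper argument to compare yours against. Judged on its own, the easy parts of your sketch are correct: the well-definedness checks, the closure of $G(C)$ under composition and inversion, and the round-trip $G(C_G) = G$ all go through, and your argument for $G(C_G) \subseteq G$ (orbit indicator vectors lie in $C_G(n,0)$, in the spirit of \eqref{eq:orbitvector}, and the $G$-orbit of the tuple $(1,2,\ldots,q)$ pins $\sigma$ down to an element of $G$) is clean and complete. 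This gives injectivity of $G \mapsto C_G$.

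The genuine gap is exactly where you say it is: surjectivity, i.e.\ the inclusion $C_{G(C)} \subseteq C$. Your proposal reduces it (correctly, via the duals and folding) to showing that every $G(C)$-orbit indicator in $\c^{q^n}$ can be synthesized from elements of $C$ using the category operations, and then stops, declaring this ``the crux of the whole proof.'' But that synthesis step \emph{is} the theorem --- everything before it is formal bookkeeping --- and naming it is not an argument for it. (Also, the ``finite averaging argument'' you interpose is vacuous here: a $G(C)$-invariant vector is by definition constant on orbits, so it is automatically a combination of orbit indicators; the work is entirely in getting those indicators inside $C$, not inside $C_{G(C)}$.) As written, the proposal is therefore a correct reduction plus a citation, not a proof. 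That is a perfectly defensible stance --- it is what the paper itself does --- but if you intend a self-contained proof you must actually carry out the reconstruction, e.g.\ by showing each $C(n,n)$ is a $*$-subalgebra of $\operatorname{End}((\c^q)^{\otimes n})$ containing the $\sb_q$-intertwiners and duals and then arguing via the bicommutant theorem that its commutant is spanned by the $P_\sigma^{\otimes n}$ with $\sigma \in G(C)$, or by following Woronowicz's argument in detail; none of that is present in the proposal.
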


The next lemma is an extension of \cite[{Proposition 3.5}]{chassaniol_study_2019}, which states that
for graph $X$ (equivalently, a symmetric binary 0-1 valued constraint function), 
$C_{\aut(X)} = \tcwd{E^{1,0},E^{1,2},S^{2,2},A_X}$ (the assumption in \cite{chassaniol_study_2019} that
$X$ is vertex-transitive is not necessary -- see also \cite[{Theorem 2.17}]{planar}). 
The proof in \cite{chassaniol_study_2019}, which makes implicit use of \autoref{thm:tannaka} to
obtain the forward inclusion $\subseteq$, easily
extends to a proof of the next lemma by generalizing the statement
$\forall \sigma \in \aut(X): P_{\sigma} A_X = A_X P_{\sigma}$ to
$\forall F \in \fc: \forall \sigma \in \aut(\fc): P_{\sigma}^{\otimes \arity(F)}f = f$,
which is equivalent to $\forall F \in \fc: f \in C_{\aut(\fc)}(\arity(F),0)$.
\begin{lemma}
    \label{thm:chassaniol}
    $C_{\aut(\fc)} = \tcwd{E^{1,0},E^{1,2},S^{2,2},\{f \mid F \in \fc\}}$.
\end{lemma}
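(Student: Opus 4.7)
The plan is to apply Tannaka-Krein duality (\autoref{thm:tannaka}) to the symmetric tensor category with duals generated by the tensors on the right-hand side, then identify the corresponding subgroup of $\sb_q$ as $\aut(\fc)$.

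For the inclusion $\supseteq$, I would observe that $C_{\aut(\fc)}$ is itself a symmetric tensor category with duals, so it suffices to show that each generator lies in $C_{\aut(\fc)}$. The tensors $E^{1,0}$, $E^{1,2}$, and $S^{2,2}$ are invariant under $P_\sigma^{\otimes k}$ for every $\sigma \in \sb_q$, so they belong to $C_{\sb_q} \subseteq C_{\aut(\fc)}$. For $F \in \fc$ and $\sigma \in \aut(\fc)$, the defining identity $F(\sigma(\vx)) = F(\vx)$ translates exactly to $P_\sigma^{\otimes n_F} f = f$, placing $f \in C_{\aut(\fc)}(n_F, 0)$. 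Closure under $+, \circ, \otimes, *$ then gives the inclusion.

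For the inclusion $\subseteq$, let $C := \tcwd{E^{1,0},E^{1,2},S^{2,2},\{f \mid F \in \fc\}}$. By construction $C$ is a symmetric tensor category with duals. Before invoking \autoref{thm:tannaka} I need $C_{\sb_q} \subseteq C$, i.e.\ that $\{E^{1,0}, E^{1,2}, S^{2,2}\}$ alone generates the partition category $C_{\sb_q}$ -- a classical fact (see \cite{banica_liberation_2009}). One direct route is to use \eqref{eq:orbitvector} to see that $C_{\sb_q}(k,0)$ has a basis of indicator vectors of $\sb_q$-orbits on $[q]^k$, which are indexed by set partitions of $[k]$, and then to realize each orbit indicator from copies of $E^{m,0}$ and $S_\sigma^{k,k}$ by the same gadget-assembly idiom used in \autoref{lem:permute} and the proof of \autoref{thm:generategk}. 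Granted this, \autoref{thm:tannaka} yields a unique $H \subseteq \sb_q$ with $C = C_H$. To finish, for any $F \in \fc$ and $\sigma \in H$ the relation $f \in C_H(n_F,0)$ reads $P_\sigma^{\otimes n_F} f = f$, which says $\sigma$ preserves $F$; holding this over all $F \in \fc$ gives $H \subseteq \aut(\fc)$. Monotonicity of the Tannaka-Krein correspondence (smaller group, larger intertwiner category) then gives $C_{\aut(\fc)} \subseteq C_H = C$.

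The main obstacle is the auxiliary generation statement $C_{\sb_q} = \tcwd{E^{1,0}, E^{1,2}, S^{2,2}}$, which is exactly the `base case' $\fc = \varnothing$ and is needed to even invoke \autoref{thm:tannaka}. I would handle it by citation to \cite{banica_liberation_2009} or \cite{chassaniol_study_2019}, whose vertex-transitive-graph version the excerpt explicitly says this lemma adapts; everything else in the proof is bookkeeping around the Tannaka-Krein correspondence.
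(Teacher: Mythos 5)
Your proposal is correct and takes essentially the same route as the paper, which also gets $\supseteq$ by checking the generators directly and $\subseteq$ by the Tannaka--Krein argument, deferring to \cite{chassaniol_study_2019} precisely the classical generation fact $C_{\sb_q} = \tcwd{E^{1,0},E^{1,2},S^{2,2}}$ that you likewise propose to cite. One minor caution about your backup sketch of that fact: the $\sb_q$-orbit indicators on $[q]^k$ require entries to be \emph{distinct} across blocks, whereas tensor products of equality gadgets give the ``equal within blocks'' indicators, so you would need a M\"obius inversion over the partition lattice (or simply the observation that the two families span the same space) to pass between them.
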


Define a \emph{$(k,\ell)$-quantum $\fc$-gadget} to be a formal $\c$-linear combination of gadgets in $\gk_{\fc}(k,\ell)$.
In the context of graph homomorphism, where $\fc = \{F\}$, a binary constraint function, 
since a gadget in $\gc_{\{F\}}(k,\ell)$ corresponds to a $(k+\ell)$-labeled \#CSP$(\{F\})$ instance, 
a $(k,\ell)$-quantum $\{F\}$-gadget is equivalent to a \emph{$(k + \ell)$-labeled quantum graph} 
\cite{freedman_reflection_2006, lovasz, lovasz_contractors_2009}. 
Let $\qk_{\fc}(k,\ell)$ be collection of all $(k,\ell)$-quantum $\fc$-gadgets.
We extend the signature matrix function $T$ linearly to $\qk_{\fc}(k,\ell)$.
Observe that, for a fixed $(k,\ell)$, the set on the RHS of \autoref{thm:chassaniol} is the span of the signature
matrices of the gadgets in the set on the RHS of \autoref{thm:generategk}. Hence we have the following theorem.
\begin{theorem}
    \label{thm:intertwinersigmatrix}
    $C_{\aut(\fc)}(k,\ell) = \{T(\qb) \mid \qb \in \qk_{\fc}(k,\ell)\}$
    for every $k, \ell \in \mathbb{N}$.
\end{theorem}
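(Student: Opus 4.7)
The plan is to derive this theorem as an almost immediate consequence of Theorems~\ref{thm:chassaniol} and \ref{thm:generategk}, bridged by the standard functorial property of the signature-matrix map $T$. The key observation is the one flagged in the paragraph preceding the statement: the generators on the RHS of \autoref{thm:chassaniol} are exactly the signature matrices of the generators on the RHS of \autoref{thm:generategk} (using $T(\e^{m,d}) = E^{m,d}$, $T(\s) = S^{2,2}$, and $T(\f) = f$), and $T$ carries $\circ$, $\otimes$, and $*$ on gadgets to the corresponding operations on matrices. The only operation appearing in $\tcwd{\cdot}$ but not in $\tcwdn{\cdot}$ is linear combination $+$, which cannot be applied to gadgets themselves but is precisely what is introduced by passing to quantum gadgets $\qk_{\fc}(k,\ell)$ and extending $T$ linearly.

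To prove $\supseteq$, I would take $\qb = \sum_i c_i \k_i \in \qk_{\fc}(k,\ell)$ with each $\k_i \in \gk_{\fc}(k,\ell)$. By \autoref{thm:generategk}, each $\k_i$ can be written as an expression in $\e^{1,0}, \e^{1,2}, \s, \{\f \mid F \in \fc\}$ using only $\circ, \otimes, *$; applying $T$ and using its functoriality turns this into the same expression in $E^{1,0}, E^{1,2}, S^{2,2}, \{f \mid F \in \fc\}$, so $T(\k_i) \in \tcwd{E^{1,0},E^{1,2},S^{2,2},\{f \mid F \in \fc\}}$. Closure under $+$ of the ambient tensor category then gives $T(\qb) = \sum_i c_i T(\k_i) \in \tcwd{E^{1,0},E^{1,2},S^{2,2},\{f \mid F \in \fc\}}$, and restricting to the $(k,\ell)$-graded piece (which is preserved by all four operations) and applying \autoref{thm:chassaniol} places $T(\qb)$ in $C_{\aut(\fc)}(k,\ell)$.

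For $\subseteq$, I would reverse this argument: an arbitrary $M \in C_{\aut(\fc)}(k,\ell)$ is, by \autoref{thm:chassaniol}, a $\c$-linear combination $M = \sum_i c_i M_i$ where each $M_i$ is obtained by a finite sequence of $\circ, \otimes, *$ operations applied to the matrix generators. Lifting each atom $E^{1,0}, E^{1,2}, S^{2,2}, f$ to its gadget preimage $\e^{1,0}, \e^{1,2}, \s, \f$ and executing the same sequence yields a gadget $\k_i$, which by \autoref{thm:generategk} lies in $\gk_{\fc}(k,\ell)$ (arities again being respected by the operations) and satisfies $T(\k_i) = M_i$. Setting $\qb = \sum_i c_i \k_i \in \qk_{\fc}(k,\ell)$ gives $T(\qb) = M$.

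There is no real obstacle here: the argument is a two-line assembly, and the only thing to be careful about is the bookkeeping of arities (so that the restriction to the $(k,\ell)$-graded components of both the tensor category and the gadget collection matches under $T$) together with the fact that $\qk_{\fc}(k,\ell)$ is defined as \emph{formal} $\c$-linear combinations, which makes the linear extension of $T$ well-defined. Both are routine once the functoriality $T(\k_1 \circ \k_2) = T(\k_1) T(\k_2)$, $T(\k_1 \otimes \k_2) = T(\k_1) \otimes T(\k_2)$, and $T(\k^*) = T(\k)^*$ is invoked.
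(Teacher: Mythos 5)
Your proposal is correct and takes essentially the same route as the paper: the paper proves \autoref{thm:intertwinersigmatrix} in one line by observing that, for fixed $(k,\ell)$, the RHS of \autoref{thm:chassaniol} is exactly the span of the signature matrices of the gadgets on the RHS of \autoref{thm:generategk}, which is precisely your two-directional argument (functoriality of $T$ on $\circ,\otimes,*$ plus linear extension to quantum gadgets to account for $+$) written out in detail.
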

By \eqref{eq:orbitvector} and the equivalence between $(k,\ell)$-quantum $F$-gadgets and
$(k + \ell)$-labeled quantum graphs,
the case $\ell = 0$ of \autoref{thm:intertwinersigmatrix} is a generalization (without domain weights)
from graph homomorphisms to \#CSP of \cite[{Lemma 2.5}]{lovasz} and \cite[{Theorem 9.3}]{homomorphism}
(the latter restricted to $\r$ rather than an arbitrary charactistic-0 field).

The next result is a similar generalization of \cite[{Lemma 2.4}]{lovasz}. It is also a version of \autoref{thm:nowellbalanced}
without domain weights and restricted to $\fc = \gc$ and $k > 0$, but without the twin-free assumption.
\begin{lemma}
    \label{lem:witness}
    Let $k > 0$ and $\varphi, \psi: [k] \to V(\fc)$. If $Z_{\fc}^{\varphi}(K) = Z_{\fc}^{\psi}(K)$
    for every $K \in \pli[\fc;k]$, then there is a $\sigma \in \aut(\fc)$ satisfying
    $\psi = \sigma \circ \varphi$.
\end{lemma}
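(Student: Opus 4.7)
The plan is to translate the hypothesis into a statement about signature vectors of gadgets, apply \autoref{thm:intertwinersigmatrix} to pass to the intertwiner space $C_{\aut(\fc)}(k,0)$, and then conclude using the orbit characterization \eqref{eq:orbitvector}. The key observation is that a $k$-labeled \#CSP$(\fc)$ instance is precisely an element of $\gk_{\fc}(k,0)$: given $K \in \pli[\fc;k]$, construct the signature grid $\Omega_K \in \gk_{\fc}(k,0)$ (in $\holant(\fc \mid \eq)$) as described in \autoref{sec:preliminaries}, but now with each of the $k$ labeled variables becoming an equality vertex with one dangling output edge, ordered by its label.

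First, I would verify the identity $T(\k)_{\vx} = Z_{\fc}^{\varphi_{\vx}}(K)$, where $\k$ is the gadget associated to $K$ and $\varphi_{\vx}: [k] \to V(\fc)$ is the pinning defined by $\varphi_{\vx}(i) = x_i$. This follows directly from the discussion following \autoref{def:sigmatrix}: pinning the dangling edges of the equality vertex for the $i$th labeled variable to $x_i$ forces every edge adjacent to that vertex (and hence the corresponding variable in $K$) to take value $x_i$. Letting $\vx_{\varphi}, \vx_{\psi} \in [q]^k$ be the tuples corresponding to $\varphi$ and $\psi$, the hypothesis of the lemma therefore becomes
\[
    T(\k)_{\vx_{\varphi}} = T(\k)_{\vx_{\psi}} \quad \text{for every } \k \in \gk_{\fc}(k,0).
\]

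Second, by $\c$-linearity, this equality extends to all quantum $\fc$-gadgets $\qb \in \qk_{\fc}(k,0)$. By \autoref{thm:intertwinersigmatrix}, the set $\{T(\qb) \mid \qb \in \qk_{\fc}(k,0)\}$ is exactly the intertwiner space $C_{\aut(\fc)}(k,0)$. We therefore conclude that $v_{\vx_{\varphi}} = v_{\vx_{\psi}}$ for every $v \in C_{\aut(\fc)}(k,0)$. By the characterization \eqref{eq:orbitvector} of $\aut(\fc)$-orbits on $[q]^k$, this implies the existence of a $\sigma \in \aut(\fc)$ with $\sigma(\vx_{\varphi}) = \vx_{\psi}$, i.e.\ $\sigma(\varphi(i)) = \psi(i)$ for every $i \in [k]$, which is exactly $\psi = \sigma \circ \varphi$.

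The only subtlety, and the place I expect to spend care rather than cleverness, is making the bookkeeping in the first step airtight: checking that the gadget $\Omega_K$ really lies in $\gk_{\fc}(k,0)$ (every dangling edge is incident to an equality vertex, as required), that the ordering of dangling edges matches the ordering of labels, and that every gadget in $\gk_{\fc}(k,0)$ arises from some $K \in \pli[\fc;k]$ in this way (so that ranging over $K$ is equivalent to ranging over $\k$). With this correspondence in hand, the rest of the argument is just a one-line application of the earlier theorems.
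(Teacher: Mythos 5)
Your proposal is correct and follows essentially the same route as the paper's proof: identify each $K \in \pli[\fc;k]$ with a gadget $\k \in \gk_{\fc}(k,0)$ so that the hypothesis becomes $T(\k)_{\varphi([k])} = T(\k)_{\psi([k])}$, extend by linearity to quantum gadgets, invoke \autoref{thm:intertwinersigmatrix} to pass to $C_{\aut(\fc)}(k,0)$, and conclude via \eqref{eq:orbitvector}. The bookkeeping you flag (dangling edges incident to equality vertices, ordering by labels, and the instance--gadget correspondence) is exactly what the paper relies on via the discussion following \autoref{def:sigmatrix}, so no gap remains.
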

\begin{proof}
    View $K$ as a gadget $\k \in \gk_{\fc}(k,0)$, so by assumption
    $(T(\k))_{\varphi([k])} = Z_{\fc}^{\varphi}(K) = 
    Z_{\fc}^{\psi}(K) = (T(\k))_{\psi([k])}$ for every $\k \in \gk_{\fc}(k,0)$, hence
    $(T(\qb))_{\varphi([k])} = (T(\qb))_{\psi([k])}$ for every $\qb \in \qk_{\fc}(k,0)$.
    Thus, by \autoref{thm:intertwinersigmatrix}, $v_{\varphi([k])} = v_{\psi([k])}$ for every
    $\vv \in C_{\aut(\fc)}(k,0)$, so by \eqref{eq:orbitvector}, there is a $\sigma \in \aut(\fc)$ satisfying
    $\sigma(\varphi([k])) = \psi([k])$. In other words, $\sigma \circ \varphi = \psi$.
\end{proof}

The final step is to use \autoref{lem:witness} to prove \autoref{thm:mainresult}
for $\ff = \c$ and CC $\fc$ and $\gc$.
We begin with the following definition.
\begin{definition}[$\oplus$]
    Let $F \in \ff^{V(F)^n}$, $G \in \ff^{V(G)^n}$ be constraint functions of arity $n > 1$, and assume $V(F)\cap V(G) = \varnothing$. For $n > 1$, the \emph{direct sum} $F\oplus G \in \ff^{(V(G)\cup V(F))^n}$ of $F$ and $G$ is defined by
    \[
        (F \oplus G)(\vx) = \begin{cases} F(\vx) & \vx \in V(F)^n \\ G(\vx) & \vx \in V(G)^n
        \\ 0 & \text{otherwise}\end{cases}
    \]
    for $\vx \in (V(G)\cup V(F))^n$. 
    For constraint function sets $\fc$ and $\gc$ of size $t$, define $\fc\oplus\gc = \{F_i \oplus G_i
    \mid i \in [t]\}$.
    \label{def:oplus}
\end{definition}
For $n = 2$, $F \oplus G$ is the adjacency matrix of the disjoint union of the $\ff$-weighted 
graphs with adjacency matrices $F$ and $G$. 

\begin{definition}[$\sim, \approx$, connected constraint function]
    \label{def:connected}
    For $n > 1$ and $F \in \ff^{V(F)^n}$, define an equivalence relation $\approx$ on $V(F)$ as the transitive closure 
    of the relation $\sim$, where $x \sim y$ if there is some tuple $\vx \in V(F)^n$ containing $x$ and
    $y$ such that $F(\vx) \neq 0$. Say $F$ is \emph{connected} if $\approx$ has exactly one
    equivalence class, and is \emph{disconnected} otherwise. 
\end{definition}

If $I, J \subseteq V(F)$ are distinct
equivalence classes of $\approx$ (`connected components') and $\sigma \in \aut(F)$ satisfies $\sigma(i) = j$ for some $i\in I$
and $j \in J$, it follows that $\sigma$ is an isomorphism between the subtensor of $F$ induced by $I$ and the
subtensor of $F$ induced by $J$. In particular, if $F$ and $G$ are connected, then $V(F)$ and $V(G)$ are
the two equivalence classes of $V(F \oplus G)$, so if a $\sigma \in \aut(F\oplus G)$ maps
some $x \in V(X)$ to some $y \in V(Y)$, then $F \cong G$. For $n = 2$ and symmetric $F$ and $G$, the
above statements are all equivalent to the corresponding well-known facts about graphs.

Now \autoref{lem:witness} gives another proof of our main result.
\begin{proof}[Proof of \autoref{thm:mainresult} for $\ff = \c$ and $\fc,\gc$ conjugate-closed]
    Assume $Z_{\fc}(K) = Z_{\gc}(K_{\fc\to\gc})$ for every \#CSP$(\fc)$ instance $K$.
    WLOG, we may assume $V(\fc)$ and $V(\gc)$ are disjoint.
    Let $0_F$ and $0_G$ be new domain elements. For each $F \in \fc$, $G \in \gc$ of arity $n \geq 2$,
    define constraint functions $F'$ and $G'$ on $V(\fc') := V(\fc) \cup \{0_F\}$ and $V(\gc') := V(\gc) \cup \{0_G\}$,
    respectively, by
    \[
        F'(\vx) = \begin{cases} F(\vx) & \vx \in V(\fc)^n \\ 1 & \text{otherwise} \end{cases},
        \qquad
        G'(\vx) = \begin{cases} G(\vx) & \vx \in V(\gc)^n \\ 1 & \text{otherwise} \end{cases}
    \]
    for $\vx \in V(\fc')^n$ and $\vx \in V(\gc')^n$, respectively. 
    In other words, if any entry of
    $\vx$ is $0_F$, then $F'(\vx) = 1$, and similarly for $G'$. 
    For unary $F \in \fc$, $G \in \gc$, define \emph{binary} $F'$ and $G'$ by
    \[
        F'(x,y) = \begin{cases} F(x) & x = y \in V(\fc) \\ 1 & x = 0_F \text{ or } y = 0_F \\
        0 & \text{otherwise}\end{cases},
        \qquad
        G'(x,y) = \begin{cases} G(x) & x = y \in V(\gc) \\ 1 & x = 0_G \text{ or } y = 0_G \\
        0 & \text{otherwise}\end{cases}
    \]
    for $x,y \in V(\fc')$ and $x,y \in V(\gc')$, respectively. The arity increase is necessary
    because the direct sum is only sensibly defined for constraint functions with arity $> 1$.
    Let $\fc' = \{F' \mid F \in \fc\}$ and $\gc' = \{G' \mid G \in \gc\}$.

    Let $K = (V,C) \in \pli[\fc' \oplus \gc';1]$ be a 1-labeled
    \#CSP$(\fc' \oplus \gc')$ instance, with labeled variable $v_0 \in V$. We will show that
    \begin{equation}
        \label{eq:zfzg}
        Z_{\fc'\oplus\gc'}^{0_F}(K) = Z_{\fc'\oplus\gc'}^{0_G}(K).
    \end{equation}
    If $K$ is not connected (i.e. the underlying graph of the $\holant(\fc \mid \eq)$ signature grid corresponding to $K$ is not connected), 
    then the components of $K$ that do not contain $v_0$ contribute the same
    value to the partition regardless of the assignment to $v_0$. Hence, to establish \eqref{eq:zfzg},
    we may assume $K$ is connected.
    Any variable assignment $\phi: V \to V(\fc' \oplus \gc')
    = V(\fc') \cup V(\gc')$ satisfying $\phi(v_0) = 0_F$ maps some $S \subset V$ to $0_F$, with
    $v_0 \in S$. 
    Furthermore, since $K$ is connected, $0_F \in V(\fc')$, and each $F' \oplus G'$ evaluates to 0
    unless all its inputs are in $V(\fc')$ or all its inputs are in $V(\gc')$, if such a $\phi$ makes
    a nonzero contribution to $Z^{0_F}_{\fc'\oplus\gc'}(K)$, we must have $\phi(V) \subset V(\fc')$.
    For a fixed $S \subset V$, the remaining variables $V \setminus S$ take all values in
    $V(\fc') \setminus \{0_F\} = V(\fc)$ as $\phi$ ranges over $\{\phi \mid \phi^{-1}(0_F) = S\}$.
    Additionally, any constraint containing a variable in $S$ always contributes 1, regardless of the
    assignments to the other variables.

    Construct a \#CSP$(\fc)$ instance $K^{\fc}_{V \setminus S}$ from $K$ as follows. First eliminate all
    variables in $S$ and all constraints containing any variable in $S$. Then, for each constraint
    applying $F' \oplus G'$, if $F$ and $G$ have arity $> 1$, replace $F'\oplus G'$ with $F$, and if
    $F$ and $G$ are unary, then merge the two variables to which the binary $F' \oplus G'$ is applied
    and replace the constraint with a constraint applying $F$ to the merged variable. 
    Assuming all inputs to $F' \oplus G'$ are in $V(\fc)$, this variable
    merging procedure does not change the value of the partition function, since by construction
    $F' \oplus G'$ acts as the function $(x,y) \mapsto \delta_{xy} F_x$. Now
    by the discussion in the previous paragraph, the contribution to $Z_{\fc'\oplus\gc'}^{0_F}(K)$ 
    of the assignments 
    $\phi$ satisfying $\phi^{-1}(0_F) = S$ is $Z_{\fc}(K^{\fc}_{V\setminus S})$. Thus
    \[
        Z_{\fc'\oplus\gc'}^{0_F}(K) = \sum_{S \subset V, S \ni v_0} Z(K_{V \setminus S}^{\fc}).
    \]
    A similar expression holds for $Z_{\fc'\oplus\gc'}^{0_G}(K)$, with
    $K_{V\setminus S}^{\gc} = (K_{V\setminus S}^{\fc})_{\fc\to\gc}$ in place of $K_{V\setminus S}^{\fc}$.
    Thus by assumption we have
    \[
        Z_{\fc'\oplus\gc'}^{0_F}(K) = \sum_{S \subset V, S \ni v_0} Z_{\fc}(K_{V \setminus S}^{\fc})
        = \sum_{S \subset V, S \ni v_0} Z_{\gc}(K_{V \setminus S}^{\gc}) = Z_{\fc'\oplus\gc'}^{0_G}(K),
    \]
    proving \eqref{eq:zfzg}. Now by \autoref{lem:witness} with $k = 1$, there is a
    $\sigma \in \aut(\fc' \oplus \gc')$ satisfying $\sigma(0_F) = 0_G$.
    Since the domain elements $0_F$ and $0_G$ satisfy $0_F \sim x$ for every $x \in V(\fc')$ 
    and $0_G \sim y$ for every $y \in V(\gc')$, $\fc'$ and $\gc'$ are connected. Hence by the discussion
    after \autoref{def:connected}, $\sigma \mid_{V(\fc')}$ is an isomorphism between $F'$ and
    $G'$ for every corresponding $F \in \fc$ and $G \in \gc$, so by construction and the fact that
    $\sigma(0_F) = 0_G$, $\sigma \mid_{V(\fc)}$ is an isomorphism between $F$ and $G$
    (if $F$ and $G$ are unary, $\sigma \mid_{V(\fc)}$ is really an isomorphism between the functions
    $(x,y) \mapsto \delta_{xy} F_y$ and $(x,y) \mapsto \delta_{xy} G_y$, but this implies an isomorphism
    between $F$ and $G$, since unary functions are isomorphic if and only if they have the same multiset
    of entries). Thus $\fc' \cong \gc'$.
\end{proof}
The proof of \autoref{thm:mainresult} is a generalization of Lov\'asz's proof of 
\cite[{Corollary 2.6}]{lovasz}, which is essentially \autoref{thm:mainresult} restricted to
real-weighted graphs ($\fc$ and $\gc$ contain a single binary constraint function). Both proofs use the
idea of adding a universal vertex to connect the graph/constraint function, since for weighted such objects
we cannot take the complement to assume connectedness.

\section{Discussion}
The interpolation proof is constructive (in the sense of \autoref{rem:constructive}), 
applies to any set of constraint functions over any
characteristic-0 field, and relies only on the simple idea in \autoref{lem:firstvandermonde}, but
requires a very detailed presentation.
The intertwiner proof has a cleaner presentation and demonstrates interesting new
connections between Holant and representation theory,
but only applies to CC constraint function sets over $\c$ and is nonconstructive.
In the proof of \autoref{lem:witness}, Tannaka-Krein duality 
(via \autoref{thm:intertwinersigmatrix}) guarantees the existence of a witness $K \in \pli[\fc;k]$
such that $Z_{\fc}^{\varphi}(K) \neq Z_{\fc}^{\psi}(K)$ if there is no $\sigma$ satisfying
$\sigma \circ \varphi = \psi$, but, unlike the
interpolation proof, does not provide an explicit finite list of instances that must contain $K$.
One desirable feature of a constructive proof, as discussed in \cite[{Section 7}]{homomorphism}, is
to make certain dichotomy theorems (e.g. \cite{cai2013graph}) \emph{effective}, meaning there is 
algorithm that decides whether the problem is \#P-hard (the dichotomy is decidable) and, if so, 
constructs a reduction from a \#P-hard problem (rather than simply asserting such a reduction exists).
Notably, the current complex-weighted \#CSP dichotomy \cite{cai-chen-complexity} is not even known to be
decidable; our results could someday be used in the proof of a decidable dichotomy.

\section*{Acknowledgements}
The author thanks Jin-Yi Cai and Austen Fan for helpful discussions.

\printbibliography
\end{document}